\def\@seccntformat#1{\@ifundefined{#1@cntformat}%
{\csname the#1\endcsname\;}
{\csname #1@cntformat\endcsname}
}
\def\section@cntformat{\thesection.\;} 
\def\subsection@cntformat{\thesubsection.\;} 
\theoremstyle{definition}
\newtheorem{cor}{Corollary}
\newtheorem{thm}{Theorem}
\newtheorem{defin}{Definition}
\newtheorem{ce}{Consequence}
\newtheorem{theorem}{Theorem}
\newtheorem{lemma}[theorem]{Lemma}
\begin{document}
\vspace*{3cm}
{\noindent\huge\bf Martingales and Super-martingales Relative to a Convex Set of Equivalent Measures}\\[1cm]
{\bf\large Nicholas S. Gonchar}\\[0.5cm]
Bogolyubov Institute for Theoretical Physics of NAS, Kiev, Ukraine\\
Email: mhonchar@i.ua\\

{\noindent\color{Brown}\rule{0.7\textwidth}{2pt}}\\[0.2cm]
{\color{Brown}\bf\large Abstract}\\
In the paper, the martingales and super-martingales relative to a convex set of equivalent measures are systematically studied.
The notion of local regular super-martingale relative to a convex set of equivalent measures is introduced and the necessary and sufficient conditions of the local regularity of it in the discrete case are founded. 
The description of all local regular super-martingales relative to a convex set of equivalent measures is presented. The notion of the complete set of equivalent  measures is introduced. We prove that every bounded in some sense  super-martingale relative to the complete set of equivalent measures is local regular.
A new definition of the fair price of contingent claim in an incomplete market is given and the formula for the fair price of Standard Option of European type is found. The proved Theorems are the generalization of the famous Doob decomposition for super-martingale onto the case of super-martingales relative to a convex set of equivalent measures.
\vspace{0.5cm}\\
{\color{Brown}\bf\large Keywords}\\
 Random process; Convex set of equivalent measures; Optional Doob decomposition; Local regular super-martingale; martingale; Fair price of contingent claim.
\vspace{0cm}\\
{\color{Brown}\rule{0.7\textwidth}{2pt}}
\section{Introduction}

In the paper, a new method of investigation of martingales and super-martingales 
relative to a convex set of equivalent measures is developed. A new proof  that the essential supremum over the set of regular martingales, generated by a certain 
nonnegative random value and a convex set of equivalent measures, is a super-martingale with respect to this set of measures, is given. 

A notion of local regular super-martingale is introduced and the necessary and sufficient conditions are found under that the above defined super-martingales are local regular ones. The last fact allowed us to describe the local regular super-martingales. It is proved that the existence of a nontrivial martingale relative to a convex set of equivalent measures, generally speaking, not guarantee for a nonnegative super-martingale to be a local regular one.

An important notion of the complete convex set  of equivalent measures is introduced. It is proved that any  super-martingale relative to  the complete convex set  of equivalent measures on a measurable space  with the finite set of elementary 
events is a local regular one. The notion of the complete convex set  of equivalent measures  is generalized onto an arbitrary space of elementary events. It is proved that the nonnegative and the majorized  from below super-martingales are local regular ones.

The definition of the fair price of contingent claim is introduced. The sufficient conditions of the existence of the fair price of contingent claim are presented. The conditions  that the introduced notion coincides with classical one are given. 

All these notions are used in the case as the convex set of equivalent measures is a set of equivalent martingale measures for the evolution of both risk and non risk assets.  The formula for the fair price of Standard Contract with Option of European type  in an incomplete market is found. 

 The notion of the complete convex set of equivalent measures permits us to give a new proof of the optional decomposition for a nonnegative super-martingale. This proof does not use the no-arbitrage arguments and  the measurable choice \cite{Kramkov}, \cite{FolmerKramkov1}, \cite{FolmerKabanov1},  \cite{FolmerKabanov}.

First, the optional decomposition for   diffusion processes super-martingale was opened by  by  El Karoui N. and  Quenez M. C. \cite{KarouiQuenez}. After that, Kramkov D. O. and Follmer H. \cite{Kramkov}, \cite{FolmerKramkov1} proved the optional decomposition for the nonnegative bounded super-martingales.  Folmer H. and Kabanov Yu. M.  \cite{FolmerKabanov1},  \cite{FolmerKabanov}  proved analogous result for an arbitrary super-martingale. Recently, Bouchard B. and Nutz M. \cite{Bouchard1} considered a class of discrete models and proved the necessary and sufficient conditions for the validity of the optional decomposition. 

The optional decomposition for super-martingales plays the fundamental role for the risk assessment in incomplete markets  \cite{Kramkov}, \cite{FolmerKramkov1},  \cite{KarouiQuenez},\cite{Gonchar2},  \cite{Gonchar555},  \cite{Gonchar557}, \cite{Honchar100}. 
Considered in the paper problem is a generalization of the corresponding one  that  appeared in mathematical finance  about the optional decomposition for a super-martingale and which is related with the construction of the superhedge strategy in incomplete financial markets.

Our statement of the problem unlike the above-mentioned one and it  is more general:  a super-martingale relative to a convex set of equivalent  measures is given  and it is necessary to find the conditions for the super-martingale and the set of measures under that  the optional decomposition exists.

The generality of our statement of the problem is that we do not require that the considered  set of measures was generated by the random process that is a local martingale  as it is done in the papers \cite{ Kramkov, FolmerKabanov, KarouiQuenez,  Bouchard1} and that is important for the proof of the  optional decomposition in these papers.
\section{Local regular super-martingales relative to  a convex set of equivalent measures.}

We assume that on a measurable space $\{\Omega,\mathcal{F}\}$ a filtration    ${\mathcal{F}_{m}\subset\mathcal{F}_{m+1}}\subset\mathcal{F}, \ m=\overline{0, \infty},$ and a family of convex set  of equivalent  measures $ M$ on $\mathcal{F}$ are given. Further, we assume that ${\cal F}_0=\{\emptyset, \Omega \}$ and the $\sigma$-algebra  ${\cal F}=\sigma(\bigvee\limits_{n=1}^\infty {\cal F}_n)$ is a minimal $\sigma$-algebra generated by the algebra $\bigvee\limits_{n=1}^\infty {\cal F}_n.$
A random process $\psi={\{\psi_{m}\}_{m=0}^{\infty}}$ is said to be adapted one relative to the filtration $\{{\cal F}_m\}_{m=0}^{\infty}$ if $\psi_{m}$ is a ${\cal F}_m$ measurable random value, $m=\overline{0,\infty}.$
\begin{defin}
An adapted random process  $f={\{f_{m}\}_{m=0}^{\infty}}$ is said to be   a super-martingale relative to the filtration ${\cal F}_m,\ m=\overline{0,\infty},$ and the  convex  family of equivalent  measures   $ M$  if $E^P|f_m|<\infty, \ m=\overline{1, \infty}, \ P \in M,$ and the inequalities 
\begin{eqnarray}\label{pk11} 
E^P\{f_m|{\cal F}_k\} \leq f_k, \quad 0 \leq k \leq m, \quad m=\overline{1, \infty}, \quad P \in M,
\end{eqnarray}
are valid.
\end{defin}
 Further,   for an adapted process  $f$ we use both the denotation $\{f_{m}, {\cal F}_m\}_{m=0}^{\infty} $ and the denotation  $\{f_{m}\}_{m=0}^{\infty}.$

\begin{defin} A  super-martingale $\{f_{m},\ {\cal F}_m\}_{m=0}^{\infty}$ relative to a convex set of equivalent measures M  is   a  local  regular one if  $ \sup\limits_{P \in M}E^P|f_m| < \infty, \ m=\overline{1, \infty},$  and  there exists  an adapted nonnegative increasing  random process  $\{g_{m},\ {\cal F}_m\}_{m=0}^{\infty}, \ g_0=0,$  \  $ \sup\limits_{P \in M}E^P|g_m| < \infty,\ m=\overline{1, \infty},$ such that  $\{f_m+g_m, \ {\cal F}_m\}_{m=0}^{\infty}$
is a martingale relative to every measure from $M.$
\end{defin}
The next elementary Theorem \ref{reww1} will be very useful later.

\begin{thm}\label{reww1} Let a super-martingale  $\{f_{m},\ {\cal F}_m\}_{m=0}^{\infty}, $ relative to a convex set of equivalent measures M  be such that   $ \sup\limits_{P \in M}E^P|f_m| < \infty, \ m=\overline{1, \infty}.$  The necessary and sufficient condition for  it   to be a local regular one is the existence of  an adapted nonnegative random process  $\{\bar g^0_{m},\ {\cal F}_m\}_{m=0}^{\infty},$  \  $ \sup\limits_{P \in M}E^P|\bar g^0_m| < \infty, \ m=\overline{1, \infty},$ such that
\begin{eqnarray}\label{o1}
f_{m-1} - E^P\{f_m|{\cal F}_{m-1}\}= E^P\{\bar g_m^0|{\cal F}_{m-1}\}, \quad m=\overline{1, \infty}, \quad P \in M.
\end{eqnarray}
\end{thm}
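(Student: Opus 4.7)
The statement is essentially a one-step Doob-type decomposition, so the plan is to show both directions by explicitly constructing the missing process from the given one.

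\textbf{Necessity.} Assume $\{f_m\}$ is local regular, so there is an adapted nonnegative increasing process $\{g_m\}$ with $g_0=0$ and the required integrability such that $\{f_m+g_m,\mathcal F_m\}$ is a martingale under every $P\in M$. I would simply set $\bar g_m^0:=g_m-g_{m-1}$, which is $\mathcal F_m$-measurable, nonnegative (monotonicity of $g$), and bounded in the required sense since $\sup_P E^P|\bar g_m^0|\le \sup_P E^P g_m+\sup_P E^P g_{m-1}<\infty$. The martingale identity $E^P\{f_m+g_m|\mathcal F_{m-1}\}=f_{m-1}+g_{m-1}$, together with $\mathcal F_{m-1}$-measurability of $g_{m-1}$, rearranges at once to \eqref{o1}.

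\textbf{Sufficiency.} Given $\{\bar g_m^0\}$ satisfying \eqref{o1}, I define the cumulative process $g_0:=0,\ g_m:=\sum_{k=1}^{m}\bar g_k^0$. By construction $g_m$ is $\mathcal F_m$-measurable, nonnegative and nondecreasing in $m$; the uniform integrability requirement $\sup_{P\in M}E^P|g_m|<\infty$ follows from subadditivity $\sup_P E^P g_m\le \sum_{k=1}^m\sup_P E^P\bar g_k^0$. The one-step martingale property of $f+g$ is a direct computation:
\begin{equation*}
E^P\{f_m+g_m|\mathcal F_{m-1}\}=E^P\{f_m|\mathcal F_{m-1}\}+g_{m-1}+E^P\{\bar g_m^0|\mathcal F_{m-1}\}=f_{m-1}+g_{m-1},
\end{equation*}
where \eqref{o1} is used in the last equality. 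The full martingale property $E^P\{f_m+g_m|\mathcal F_k\}=f_k+g_k$ for $0\le k\le m$ then follows by iterating this one-step identity with the tower property of conditional expectations.

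\textbf{Where the work is.} There is no genuine obstacle here; the theorem is really a reformulation of the definition of local regularity in terms of increments, and everything reduces to the decomposition $g_m=\sum_{k=1}^{m}\bar g_k^0$. The only small points that need explicit care are (i) checking that the candidate $g_m$ inherits the integrability bound uniformly in $P\in M$ from the bounds on $\bar g_k^0$, and (ii) noting that adaptedness and nonnegativity of $\bar g_m^0$ translate into the required monotonicity of $g_m$. These are the items I would write out cleanly in the formal proof.
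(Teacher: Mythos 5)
Your proof is correct and follows essentially the same route as the paper's: necessity by taking $\bar g_m^0=g_m-g_{m-1}$, sufficiency by building the compensated martingale $f_m+\sum_{k=1}^m\bar g_k^0$, with the same integrability checks. The only addition is your explicit remark that the one-step identity iterates to the full martingale property via the tower property, which the paper leaves implicit.
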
 

\begin{proof}  Necessity. If   $\{f_{m},\ {\cal F}_m\}_{m=0}^{\infty}$ is  a local  regular super-martingale, then there exist a martingale  $\{\bar M_{m},\ {\cal F}_m\}_{m=0}^{\infty}$ and a non-decreasing nonnegative random process  $\{g_{m},\ {\cal F}_m\}_{m=0}^{\infty},$ $ \ g_0=0,$ such that
\begin{eqnarray}\label{reww3}
f_m = \bar M_m - g_m, \quad m=\overline{1, \infty}.
\end{eqnarray}
From here we  obtain  the equalities
$$E^P\{f_{m-1} -f_m|{\cal F}_{m-1}\}=$$
\begin{eqnarray}\label{reww4}
=E^P\{ g_m - g_{m-1}|{\cal F}_{m-1}\}=E^P\{\bar g_m^0|{\cal F}_{m-1}\}, \quad m=\overline{1, \infty} , \quad P \in M,
\end{eqnarray}
where we introduced the denotation $\bar g_m^0=g_m - g_{m-1} \geq 0.$
It is evident that $E^P\bar g_m^0\leq \sup\limits_{P\in M} E^Pg_m+\sup\limits_{P\in M} E^Pg_{m-1}< \infty.$

 Sufficiency. Suppose that there exists an adapted nonnegative random process $\bar g^0=\{\bar g_m^0\}_{m=0}^\infty, \ \bar g_0^0=0,$  $ E^P\bar g_m^0<\infty, \ m=\overline{1, \infty}, $ such that the equalities (\ref{o1}) hold.  Let us consider the  random process $\{\bar M_{m},\ {\cal F}_m\}_{m=0}^{\infty},$ where
\begin{eqnarray}\label{reww5}
\bar M_0=f_0, \quad \bar M_m=f_m+\sum\limits_{i=1}^m\bar g_m^0, \quad m=\overline{1, \infty}.
\end{eqnarray}
It is evident that $E^P|\bar M_m|< \infty$ and
\begin{eqnarray}\label{apm1}
E^P\{\bar M_{m-1} -  \bar M_m|{\cal F}_{m-1}\}=E^P\{f_{m-1} - f_m- \bar g_m^0|{\cal F}_{m-1}\}=0.
\end{eqnarray}
  Theorem  \ref{reww1} is proved.
\end{proof}

\begin{lemma}\label{l1} Any super-martingale  ${\{f_m, {\cal F}_m\}_{m=0}^{\infty}}$ relative to  a family of measures  $ M$ for which there hold equalities   $E^{P}f_{m}=f_{0}, \ m=\overline{1,\infty},$ \  ${ P\in M},$ is a martingale  with respect to this family of measures and the filtration   ${\cal F}_m,\ m=\overline{1,\infty}.$
\end{lemma}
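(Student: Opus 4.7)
The plan is to exploit the fact that a nonnegative random variable with zero expectation must vanish almost surely. Fix an arbitrary measure $P \in M$ and indices $0 \leq k \leq m$. The super-martingale inequality (\ref{pk11}) gives
\begin{equation*}
X := f_k - E^P\{f_m|{\cal F}_k\} \geq 0 \quad P\text{-a.s.}
\end{equation*}
so $X$ is a nonnegative ${\cal F}_k$-measurable random variable.

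Next, I would take $E^P$ of both sides and use the tower property together with the hypothesis $E^P f_m = f_0 = E^P f_k$. This yields
\begin{equation*}
E^P X = E^P f_k - E^P f_m = f_0 - f_0 = 0.
\end{equation*}
Since $X \geq 0$ $P$-a.s. and $E^P X = 0$, it follows that $X = 0$ $P$-a.s., i.e.\ $E^P\{f_m|{\cal F}_k\} = f_k$ $P$-a.s. Because $P \in M$ was arbitrary and all measures in $M$ are equivalent, this equality holds for every $P \in M$, which is precisely the martingale property.

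There is essentially no obstacle here; the argument is a one-line consequence of the super-martingale inequality and the preservation of expectation. The only point that deserves a brief verification before invoking the tower property is the integrability condition $E^P|f_m| < \infty$, which is part of the definition of a super-martingale relative to $M$ and so is available for free.
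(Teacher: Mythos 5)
Your proof is correct: for each fixed $P\in M$ the variable $f_k - E^P\{f_m|{\cal F}_k\}$ is nonnegative by (\ref{pk11}) and has zero $P$-expectation because $E^Pf_m=f_0=E^Pf_k$ (for $k\geq 1$ by hypothesis, and for $k=0$ trivially since ${\cal F}_0=\{\emptyset,\Omega\}$ makes $f_0$ constant), so it vanishes $P$-a.s.; the final appeal to equivalence of the measures is unnecessary but harmless, since the argument already runs separately for every $P\in M$. The paper gives no proof of its own — it simply cites Kallianpur — and your argument is exactly the standard one that reference supplies, so there is nothing to contrast.
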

\begin{proof} The proof of  Lemma \ref{l1} see \cite{Kallianpur}.\end{proof}
\section{Description of local regular super-martingales relative to a convex set of equivalent measures generated by the finite set of equivalent measures.}

Below, we describe the local regular super-martingales relative to a convex set of equivalent measures $M$ generated by the finite set of equivalent measures. For this we need some auxiliary statements.

\begin{lemma}\label{hj1}
On a measurable space $\{ \Omega, {\cal F}\}$ with filtration ${\cal F}_m$ on it,
let $G$ be a sub $\sigma$-algebra of  the $\sigma$-algebra ${\cal F}$ and let $f_s, s \in S,$ be a finite family of  nonnegative bounded  random values.  Then for every measure $P$ from $M$
\begin{eqnarray}\label{rgps1}
 E^P\{\max\limits_{s \in S}f_s|G\}\geq \max\limits_{s \in S}E^P\{f_s|G\}, \quad P \in M. \end{eqnarray}
\end{lemma}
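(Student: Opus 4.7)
The plan is to reduce this to a one-sided application of the monotonicity of conditional expectation, then exploit finiteness of $S$ to pass to the maximum without measurability or null-set issues.

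First, I would fix an arbitrary index $t \in S$. Since $f_t$ is nonnegative and $f_t \leq \max_{s\in S} f_s$ holds pointwise on $\Omega$, monotonicity of conditional expectation gives
\begin{eqnarray}\label{plan1}
E^P\{f_t \mid G\} \leq E^P\bigl\{\max_{s\in S} f_s \,\big|\, G\bigr\}, \quad P\text{-a.s.}
\end{eqnarray}
Boundedness of the family guarantees integrability so that both conditional expectations are well-defined versions in $L^1(\Omega, G, P)$.

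Next, I would take the maximum over $t \in S$ on the left-hand side of \eqref{plan1}. Here the finiteness of $S$ is essential: each inequality in \eqref{plan1} holds outside a $P$-null set $N_t$, and since $S$ is finite, the union $N = \bigcup_{t \in S} N_t$ is still $P$-null. Off $N$, the inequality holds simultaneously for every $t \in S$, so
\begin{eqnarray}\label{plan2}
\max_{t \in S} E^P\{f_t \mid G\} \leq E^P\bigl\{\max_{s\in S} f_s \,\big|\, G\bigr\}, \quad P\text{-a.s.},
\end{eqnarray}
which is exactly \eqref{rgps1}. Moreover, because $S$ is finite, the function $\max_{t \in S} E^P\{f_t \mid G\}$ is itself $G$-measurable (as a finite maximum of $G$-measurable versions), so there is no ambiguity in interpreting the left-hand side.

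I do not anticipate a real obstacle here; the statement is essentially a direct consequence of monotonicity of conditional expectation combined with the finiteness of $S$. The only subtlety worth mentioning in the write-up is the null-set bookkeeping, which is handled by the fact that a finite union of null sets is null. In particular, the argument would break if $S$ were uncountable, which explains why the hypothesis of a finite family is imposed.
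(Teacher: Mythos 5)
Your proposal is correct and follows essentially the same route as the paper: pointwise bound $f_t \leq \max_{s\in S} f_s$, monotonicity of conditional expectation for each fixed $t$, and then taking the maximum over the finite index set. The null-set bookkeeping you make explicit is left implicit in the paper's argument but adds nothing substantively different.
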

\begin{proof} We have the inequalities
\begin{eqnarray}\label{gps1}
\max\limits_{s \in S} f_s \geq f_t, \quad t \in S.
\end{eqnarray}
Therefore,
\begin{eqnarray}\label{gps2}
E^P\{\max\limits_{s \in S} f_s|G\}\geq  E^P\{ f_t|G\}, \quad t \in S, \quad P \in M.
\end{eqnarray}
The last implies
\begin{eqnarray}\label{gps3}
E^P\{\max\limits_{s \in S} f_s|G\}\geq \max\limits_{s \in S} E^P\{f_s|G\}, \quad P \in M.
\end{eqnarray}
\end{proof}

In the next Lemma, we present the formula for calculation of the conditional expectation relative to another measure from $M.$
\begin{lemma}\label{q1}
 On a measurable space  $\{ \Omega, {\cal F}\}$ with a filtration ${\cal F}_n$ on it, 
let $M$ be a convex  set of equivalent measures  and let  $\xi$ be a bounded random value.  Then the following formulas 
\begin{eqnarray}\label{n1}
E^{P_1}\{\xi|{\cal F}_n\}=E^{P_2}\left\{\xi \varphi_n^{P_1}|{\cal F}_n\right\}, \quad n=\overline{1, \infty},   
\end{eqnarray}
are valid,  where

\begin{eqnarray}\label{apm2}
 \varphi_n^{P_1}=\frac{dP_1}{dP_2}\left[E^{P_2}\left\{\frac{dP_1}{dP_2}|{\cal F}_n\right\}\right]^{-1}, \quad P_1, \ P_2 \in M.
\end{eqnarray}
\end{lemma}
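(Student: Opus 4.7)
The plan is to establish the identity via the defining property of conditional expectation, which is essentially the classical Bayes formula for change of measure. The equivalence of $P_1$ and $P_2$ (they lie in the convex set $M$ of equivalent measures) guarantees that $Z := dP_1/dP_2$ is strictly positive a.s., so that the denominator $E^{P_2}\{Z|{\cal F}_n\}$ in the definition of $\varphi_n^{P_1}$ is a.s.\ strictly positive and the quotient is well defined.

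First, I would fix an arbitrary $A \in {\cal F}_n$ and compute $\int_A \xi\, dP_1$ in two ways. On the one hand, by definition of conditional expectation under $P_1$,
\begin{equation*}
\int_A \xi\, dP_1 = \int_A E^{P_1}\{\xi|{\cal F}_n\}\, dP_1 = \int_A E^{P_1}\{\xi|{\cal F}_n\}\, Z\, dP_2.
\end{equation*}
Since $E^{P_1}\{\xi|{\cal F}_n\}$ is ${\cal F}_n$-measurable, conditioning on ${\cal F}_n$ under $P_2$ and using the tower property,
\begin{equation*}
\int_A E^{P_1}\{\xi|{\cal F}_n\}\, Z\, dP_2 = \int_A E^{P_1}\{\xi|{\cal F}_n\}\, E^{P_2}\{Z|{\cal F}_n\}\, dP_2.
\end{equation*}
On the other hand, rewriting the $P_1$-integral directly as a $P_2$-integral and conditioning,
\begin{equation*}
\int_A \xi\, dP_1 = \int_A \xi Z\, dP_2 = \int_A E^{P_2}\{\xi Z|{\cal F}_n\}\, dP_2.
\end{equation*}

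Next, since this equality holds for every $A\in {\cal F}_n$ and both integrands are ${\cal F}_n$-measurable, I conclude
\begin{equation*}
E^{P_2}\{\xi Z|{\cal F}_n\} = E^{P_1}\{\xi|{\cal F}_n\}\cdot E^{P_2}\{Z|{\cal F}_n\} \quad P_2\text{-a.s.}
\end{equation*}
Dividing by the ${\cal F}_n$-measurable, a.s.\ positive factor $E^{P_2}\{Z|{\cal F}_n\}$ and pulling it inside the conditional expectation on the left-hand side yields
\begin{equation*}
E^{P_1}\{\xi|{\cal F}_n\} = E^{P_2}\!\left\{\xi\cdot \frac{Z}{E^{P_2}\{Z|{\cal F}_n\}}\,\Big|\,{\cal F}_n\right\} = E^{P_2}\{\xi\,\varphi_n^{P_1}|{\cal F}_n\},
\end{equation*}
which is (\ref{n1}). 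The hypothesis that $\xi$ is bounded serves only to guarantee integrability of $\xi Z$ and $\xi\varphi_n^{P_1}$ under $P_2$; no further regularity is needed.

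The argument is essentially routine, so there is no genuine obstacle. The only subtle point to handle with care is that one must justify replacing $dP_1$ by $Z\,dP_2$ on a set $A\in {\cal F}_n$ and then applying the tower property under $P_2$ rather than under $P_1$, together with noting that equivalence of $P_1,P_2$ (built into the definition of $M$) is what makes $E^{P_2}\{Z|{\cal F}_n\}$ strictly positive a.s.\ and the division legitimate.
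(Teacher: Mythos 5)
Your proof is correct: it is the standard abstract Bayes formula argument, and the equivalence of the measures indeed makes the division by $E^{P_2}\left\{\frac{dP_1}{dP_2}\middle|{\cal F}_n\right\}$ legitimate. The paper offers no proof at all (it declares the lemma evident), and your argument is exactly the routine verification the author had in mind, so there is nothing to compare beyond noting that you have supplied the missing details.
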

\begin{proof} The proof of  Lemma \ref{q1}   is evident.\end{proof}

Let $P_1, \ldots, P_k $ be a  family of  equivalent  measures on a measurable space   $\{ \Omega, {\cal F}\}$ and let us introduce  the denotation $M$ for a convex set of equivalent measures
\begin{eqnarray}\label{apm3}
 M=\left\{Q, \  Q=\sum\limits_{i=1}^{k}\alpha_i P_i, \ \alpha_i \geq 0, \  i=\overline{1,k},\ \sum\limits_{i=1}^{k}\alpha_i=1\right\}.
\end{eqnarray}
\begin{lemma}\label{q2}
If  $\xi$  is an  integrable random value relative to the set of equivalent  measures $P_1, \ldots, P_k $, then the formula
\begin{eqnarray}\label{n2}
\mathrm{ess}\sup\limits_{Q \in M}E^{Q}\{\xi|{\cal F}_n\}=\max\limits_{1 \leq i \leq k}E^{P_i}\{\xi|{\cal F}_n \}
\end{eqnarray}
is valid almost everywhere relative to the  measure $P_1$.
\end{lemma}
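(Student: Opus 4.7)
The plan is to split the desired equality into its two one-sided inequalities. The direction $\mathrm{ess}\sup\limits_{Q\in M}E^{Q}\{\xi|{\cal F}_n\} \geq \max\limits_{1\leq i\leq k}E^{P_i}\{\xi|{\cal F}_n\}$ is immediate: each $P_i$ belongs to $M$, so the essential supremum dominates $E^{P_i}\{\xi|{\cal F}_n\}$ $P_1$-almost surely for every $i$, and since only finitely many indices are involved, it dominates their pointwise maximum $P_1$-almost surely without any measurability or negligible-set difficulty.

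For the reverse inequality I would fix an arbitrary $Q=\sum_{i=1}^{k}\alpha_i P_i\in M$ and represent $E^{Q}\{\xi|{\cal F}_n\}$ as a random convex combination of the $E^{P_i}\{\xi|{\cal F}_n\}$. Writing $Z_i=dP_i/dP_1$ and $M_n^i=E^{P_1}\{Z_i|{\cal F}_n\}$, the abstract Bayes formula underlying Lemma \ref{q1}, combined with the identity $dQ/dP_1=\sum_{i=1}^{k}\alpha_i Z_i$, gives
\begin{equation*}
E^{Q}\{\xi|{\cal F}_n\}=\frac{\sum_{i=1}^{k}\alpha_i E^{P_1}\{\xi Z_i|{\cal F}_n\}}{\sum_{i=1}^{k}\alpha_i M_n^i}=\sum_{i=1}^{k}\lambda_n^i\,E^{P_i}\{\xi|{\cal F}_n\},\qquad \lambda_n^i=\frac{\alpha_i M_n^i}{\sum_{j=1}^{k}\alpha_j M_n^j}.
\end{equation*}
Since the $P_i$ are mutually equivalent, each $M_n^i$ is strictly positive $P_1$-almost surely; the denominator is therefore strictly positive, and the weights $\lambda_n^i$ are ${\cal F}_n$-measurable, nonnegative, and sum to one almost surely. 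Consequently $E^{Q}\{\xi|{\cal F}_n\}\leq \max\limits_{1\leq i\leq k}E^{P_i}\{\xi|{\cal F}_n\}$ $P_1$-almost surely, and passing to the essential supremum over $Q\in M$ on the left yields the desired inequality.

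The main technical point is the strict positivity of the ${\cal F}_n$-conditional densities $M_n^i$, which is precisely where the equivalence of the $P_i$ enters; integrability of $\xi$ under each $P_i$ then guarantees that $\xi\,dQ/dP_1$ is $P_1$-integrable, so the Bayes calculation is legitimate without a boundedness assumption on $\xi$. Once this is in place the rest is purely algebraic, and I do not foresee any serious obstacle beyond writing the weighted-average identity cleanly.
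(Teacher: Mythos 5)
Your proposal is correct and follows essentially the same route as the paper: the paper's key identity (\ref{n3}) is exactly your weighted-average representation $E^{Q}\{\xi|{\cal F}_n\}=\sum_i\lambda_n^i E^{P_i}\{\xi|{\cal F}_n\}$ with the same ${\cal F}_n$-measurable convex weights, from which the upper bound follows, and the reverse inequality is obtained, as you do, from $E^{P_i}\{\xi|{\cal F}_n\}\leq \mathrm{ess}\sup_{Q\in M}E^{Q}\{\xi|{\cal F}_n\}$ for each of the finitely many $i$. Your added remarks on the strict positivity of the conditional densities and on integrability only make explicit details the paper leaves implicit.
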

\begin{proof} The definition of $\mathrm{ess}\sup$ for non countable family of random variables see \cite{Chow}. Using the formula
\begin{eqnarray}\label{n3}
 E^{Q}\{\xi|{\cal F}_n\}=\frac{\sum\limits_{i=1}^{k}\alpha_i E^{P_1}\{\varphi_i|{\cal F}_n\}E^{P_i}\{\xi|{\cal F}_n\}}{\sum\limits_{i=1}^{k}\alpha_i E^{P_1}\{\varphi_i|{\cal F}_n\}}, \quad Q \in M,
\end{eqnarray}
 where $\varphi_i=\frac{dP_i}{dP_1},$ $Q=\sum\limits_{i=1}^{k}\alpha_i P_i,$ we obtain the inequality
\begin{eqnarray}\label{apm4} 
 E^{Q}\{\xi|{\cal F}_n\} \leq \max\limits_{1 \leq i \leq k}E^{P_i}\{\xi|{\cal F}_n\}, \quad Q \in M, 
\end{eqnarray}
or,
\begin{eqnarray}\label{apm5}
\mathrm{ess}\sup\limits_{Q \in M} E^{Q}\{\xi|{\cal F}_n\} \leq \max\limits_{1 \leq i \leq k}E^{P_i}\{\xi|{\cal F}_n\}.
\end{eqnarray}
On the other side  \cite{Chow}, 
\begin{eqnarray}\label{apm6} 
 E^{P_i}\{\xi|{\cal F}_n\} \leq \mathrm{ess}\sup\limits_{Q \in M} E^{Q}\{\xi|{\cal F}_n\}, \quad i=\overline{1,k}. 
\end{eqnarray}
Therefore,
\begin{eqnarray}\label{apm7} 
\max\limits_{1 \leq i \leq k} E^{P_i}\{\xi|{\cal F}_n\} \leq \mathrm{ess}\sup\limits_{Q \in M} E^{Q}\{\xi|{\cal F}_n\}. 
\end{eqnarray}
 Lemma \ref{q2} is proved.
\end{proof}

\begin{lemma}\label{q3} On  a measurable space $\{ \Omega, {\cal F}\}$ with a filtration ${\cal F}_n$ on it,  let $\xi$ be a nonnegative bounded random value. 
If $\frac{dP_i}{dP_l}, \  i, l=\overline{1,k},$ are ${\cal F}_1$ measurable  and $ P_1(\frac{dP_i}{dP_l}>0)=1,  \  i, l=\overline{1,k},$
 then the inequalities  
\begin{eqnarray}\label{0n3}
 E^{P_l}\{\max\limits_{1\leq i \leq k} E^{P_i}\{\xi|{\cal F}_n\}|{\cal F}_m\}\leq
 \max\limits_{1\leq i \leq k}E^{P_i}\{\xi |{\cal F}_m\}, \ l=\overline{1,k},  \  n>m,   
\end{eqnarray}
are valid.
\end{lemma}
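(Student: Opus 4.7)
The plan is to exploit the ${\cal F}_1$-measurability of the Radon--Nikodym derivatives to show that the conditional expectations $E^{P_i}\{\xi|{\cal F}_n\}$ actually coincide across $i$ whenever $n \geq 1$, so that the maximum on the left-hand side is redundant and the claim reduces to the tower property.

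First, I would apply Lemma \ref{q1} (Bayes' formula) to write, for any pair $P_i, P_l$ and any $n \geq 1$,
\begin{equation*}
E^{P_i}\{\xi|{\cal F}_n\} = \frac{E^{P_l}\{\xi \frac{dP_i}{dP_l}|{\cal F}_n\}}{E^{P_l}\{\frac{dP_i}{dP_l}|{\cal F}_n\}}.
\end{equation*}
Since $\frac{dP_i}{dP_l}$ is ${\cal F}_1$-measurable (hence ${\cal F}_n$-measurable for every $n \geq 1$) and strictly positive $P_1$-a.s., the denominator equals $\frac{dP_i}{dP_l}$ itself, while the same factor pulls out of the numerator. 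The two copies cancel, yielding the key identity
\begin{equation*}
E^{P_i}\{\xi|{\cal F}_n\} = E^{P_l}\{\xi|{\cal F}_n\}, \quad 1 \leq i, l \leq k, \quad n \geq 1.
\end{equation*}

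Since $n > m \geq 0$ forces $n \geq 1$, the collapse above applies, and the maximum inside the outer expectation becomes $E^{P_l}\{\xi|{\cal F}_n\}$ for the chosen index $l$. The tower property then gives
\begin{equation*}
E^{P_l}\{\max_{1\leq i \leq k} E^{P_i}\{\xi|{\cal F}_n\}|{\cal F}_m\} = E^{P_l}\{E^{P_l}\{\xi|{\cal F}_n\}|{\cal F}_m\} = E^{P_l}\{\xi|{\cal F}_m\},
\end{equation*}
and the trivial bound $E^{P_l}\{\xi|{\cal F}_m\} \leq \max_{1 \leq i \leq k} E^{P_i}\{\xi|{\cal F}_m\}$ closes the argument.

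There is no real obstacle: every step is a direct application of Lemma \ref{q1}, the ${\cal F}_n$-measurability of the densities, or the tower property. The only subtlety worth flagging is that the cancellation in Bayes' formula legitimately requires $\frac{dP_i}{dP_l} > 0$ $P_1$-a.s., which is exactly the positivity hypothesis of the lemma. I note in passing that this argument actually proves more than stated: for $m \geq 1$ the same collapse applies on the right-hand side as well, so the inequality is an equality whenever $m \geq 1$, and it can be strict only in the edge case $m = 0$, where ${\cal F}_0 = \{\emptyset, \Omega\}$ and the unconditional expectations $E^{P_i}\xi$ genuinely differ.
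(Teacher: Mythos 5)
Your proof is correct and takes essentially the same route as the paper: the paper's own argument also invokes Lemma \ref{q1} together with the ${\cal F}_1$-measurability and positivity of the densities to get the identity $\max\limits_{1\leq i\leq k}E^{P_i}\{\xi|{\cal F}_n\}=E^{P_l}\{\xi|{\cal F}_n\}$ for $n\geq 1$, and then concludes by the tower property and the trivial bound at level $m$. You merely spell out the Bayes-formula cancellation that the paper leaves implicit, which is a fair expansion rather than a different method.
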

\begin{proof}  From   Lemma \ref{q1} and Lemma \ref{q3} conditions relative to the density of one measure with respect to another, we have
\begin{eqnarray}\label{apm9}
\max\limits_{1\leq i \leq k}E^{P_i}\{\xi|{\cal F}_n\}= E^{P_l}\{\xi |{\cal F}_n\}, \quad l=\overline{1, k}.   
\end{eqnarray}
From the equality (\ref{apm9}) we obtain the inequality
\begin{eqnarray}\label{apm11}
E^{P_l}\{\max\limits_{1\leq i \leq k}E^{P_i}\{\xi|{\cal F}_n\}|{\cal F}_m\}\leq
\max\limits_{1\leq i \leq k}E^{P_i}\{\xi |{\cal F}_m\}, \quad l=\overline{1, k}. 
\end{eqnarray}
Lemma \ref{q3} is proved.
\end{proof}

In this section, we assume that the conditions of Lemma  \ref{q3} relative 
to the density of one measure with respect to another are true.

\begin{lemma}\label{q5} On a measurable space $\{ \Omega, {\cal F}\}$ with a filtration ${\cal F}_n$ on it,
let  $\xi$ be  a nonnegative  random value which is  integrable  relative to the set of  equivalent measures $P_1, \ldots, P_k$ . 
Then the inequalities
\begin{eqnarray}\label{start6}
 E^Q\{\max\limits_{1 \leq i \leq k} E^{P_i}\{\xi|{\cal F}_n\}|{\cal F}_m\} \leq 
\max\limits_{1 \leq i \leq k} E^{P_i}\{\xi|{\cal F}_m\}, \quad n>m, \quad Q \in M,
\end{eqnarray}
are valid.
\end{lemma}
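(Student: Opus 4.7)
The plan is to reduce the statement to Lemma \ref{q3} in two moves: first by bounding $E^Q\{\cdot|\mathcal{F}_m\}$ for $Q\in M$ in terms of $\max_j E^{P_j}\{\cdot|\mathcal{F}_m\}$, and then by relaxing the boundedness assumption of Lemma \ref{q3} to mere integrability via monotone truncation.

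First I would write $Q=\sum_{j=1}^k\alpha_j P_j$ with $\alpha_j\ge 0$, $\sum\alpha_j=1$, and apply formula (\ref{n3}) to the nonnegative random variable $Z:=\max_{1\le i\le k}E^{P_i}\{\xi|\mathcal{F}_n\}$. This expresses $E^Q\{Z|\mathcal{F}_m\}$ as a convex combination, with $\mathcal{F}_m$-measurable weights $w_j=\alpha_j E^{P_1}\{\varphi_j|\mathcal{F}_m\}/\sum_l\alpha_l E^{P_1}\{\varphi_l|\mathcal{F}_m\}$, of the conditional expectations $E^{P_j}\{Z|\mathcal{F}_m\}$. Since $\sum_j w_j=1$ and $w_j\ge 0$, this convex combination is pointwise dominated by $\max_{1\le j\le k}E^{P_j}\{Z|\mathcal{F}_m\}$, so
\begin{equation*}
E^Q\{Z|\mathcal{F}_m\}\le \max_{1\le j\le k}E^{P_j}\bigl\{\max_{1\le i\le k}E^{P_i}\{\xi|\mathcal{F}_n\}\,\bigm|\,\mathcal{F}_m\bigr\}.
\end{equation*}

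Next I would reduce to the bounded case so that Lemma \ref{q3} can be invoked. For each $N\ge 1$ set $\xi_N:=\xi\wedge N$. By monotone convergence for conditional expectations, $E^{P_i}\{\xi_N|\mathcal{F}_n\}\uparrow E^{P_i}\{\xi|\mathcal{F}_n\}$ as $N\to\infty$, for every $i$; taking the maximum over the finite index set preserves monotone convergence, and a second application of monotone convergence under $E^{P_j}\{\cdot|\mathcal{F}_m\}$ gives
\begin{equation*}
E^{P_j}\bigl\{\max_i E^{P_i}\{\xi_N|\mathcal{F}_n\}\,\bigm|\,\mathcal{F}_m\bigr\}\;\uparrow\;E^{P_j}\bigl\{\max_i E^{P_i}\{\xi|\mathcal{F}_n\}\,\bigm|\,\mathcal{F}_m\bigr\}.
\end{equation*}
Applying Lemma \ref{q3} to the bounded random variable $\xi_N$ yields, for each $j$ and $N$,
\begin{equation*}
E^{P_j}\bigl\{\max_i E^{P_i}\{\xi_N|\mathcal{F}_n\}\,\bigm|\,\mathcal{F}_m\bigr\}\le \max_{1\le i\le k}E^{P_i}\{\xi_N|\mathcal{F}_m\}\le \max_{1\le i\le k}E^{P_i}\{\xi|\mathcal{F}_m\}.
\end{equation*}
Letting $N\to\infty$ on the left-hand side and taking the maximum over $j$ combines with the previous display to deliver the desired inequality.

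The only genuinely delicate point is verifying that the convex-combination representation (\ref{n3}) really applies to the unbounded but integrable random variable $Z$, so that the first step is legitimate for general $\xi$; this is handled automatically by the same truncation argument, since the identity (\ref{n3}) for $\xi_N$ passes to the limit by monotone convergence in numerator and denominator (the denominator is strictly positive $P_1$-a.s. by equivalence of the measures). No additional measurable-selection or regularity arguments are required, as the maxima are taken over a finite index set.
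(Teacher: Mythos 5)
Your proof is correct and follows essentially the same route as the paper: the convex-combination formula (\ref{n3})/(\ref{port5}) with $\mathcal{F}_m$-measurable weights to dominate $E^Q\{\cdot|\mathcal{F}_m\}$ by $\max_j E^{P_j}\{\cdot|\mathcal{F}_m\}$, Lemma \ref{q3} for the bounded case, and monotone truncation to reach integrable $\xi$. The only difference is the order in which you truncate versus apply the convex-combination identity, which does not change the substance of the argument.
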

\begin{proof} Using Lemma \ref{q3} inequalities  for the nonnegative bounded
$\xi$    and the formula
\begin{eqnarray}\label{port5}
 E^{Q}\{\Phi|{\cal F}_{m}\}=\frac{\sum\limits_{i=1}^{k}\alpha_i E^{P_1}\{\varphi_i|{\cal F}_{m}\}E^{P_i}\{\Phi|{\cal F}_{m}\}}{\sum\limits_{i=1}^{k}\alpha_i E^{P_1}\{\varphi_i|{\cal F}_{m}\}}, \quad Q \in M,
\end{eqnarray}
  where $\Phi=\max\limits_{1\leq i \leq k}E^{ P_i}\{\xi|{\cal F}_n\},  \varphi_i=\frac{dP_i}{dP_1}, \ i=\overline{1,k},$
 we prove  Lemma \ref{q5} inequalities.

Let us consider the case, as 
$\max\limits_{1 \leq i \leq k}  E^{P_i}\xi < \infty.$ Let $\xi_s, s=\overline{1, \infty},$ be a sequence of bounded random values  converging to $\xi$ monotonuosly. Then 
\begin{eqnarray}\label{alkn44}
 E^{Q}\{\max\limits_{1 \leq i \leq k} E^{P_i}\{\xi_s|{\cal F}_n\}|{\cal F}_m\}\leq
\max\limits_{1 \leq i \leq k}  E^{Q}\{\xi_s |{\cal F}_m\}, \quad l=\overline{1,k}.
\end{eqnarray}
Due to  the monotony convergence  of $\xi_s$ to $\xi,$ as $s \to \infty,$ we can pass to the limit under the conditional expectations on the left and  right sides in the inequalities (\ref{alkn44}) that proves   Lemma \ref{q5}.  
\end{proof}
 
\begin{lemma}\label{lkq4} 
 On a measurable space $\{ \Omega, {\cal F}\}$ with filtration ${\cal F}_n$ on it,  for every nonnegative integrable random value $\xi$ relative to a set of equivalent measures $\{P_1, \ldots,P_k\}$ the inequalities 
\begin{eqnarray}\label{lkn4}
 E^{Q}\{\mathrm{ess}\sup\limits_{P\in M} E^{P}\{\xi |{\cal F}_n\}|{\cal F}_m\}\leq
\mathrm{ess}\sup\limits_{P\in M} E^{P}\{\xi |{\cal F}_m\}, \quad Q \in M, \quad n>m, 
\end{eqnarray}
are  valid.
\end{lemma}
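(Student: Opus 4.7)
The plan is to reduce this statement to Lemma \ref{q5} by using Lemma \ref{q2} to identify the essential supremum over the convex hull $M$ with the maximum over the finite family of generating measures $\{P_1,\ldots,P_k\}$.

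First, I would invoke Lemma \ref{q2} applied to the nonnegative integrable random value $\xi$ at both filtration levels ${\cal F}_n$ and ${\cal F}_m$. This gives the almost-sure identities
\begin{eqnarray*}
\mathrm{ess}\sup_{P\in M} E^{P}\{\xi|{\cal F}_n\} = \max_{1\leq i\leq k} E^{P_i}\{\xi|{\cal F}_n\}, \qquad
\mathrm{ess}\sup_{P\in M} E^{P}\{\xi|{\cal F}_m\} = \max_{1\leq i\leq k} E^{P_i}\{\xi|{\cal F}_m\}.
\end{eqnarray*}
Since all measures in $M$ are equivalent to $P_1$, these equalities hold $Q$-a.s. for every $Q\in M$, so they may be substituted inside the conditional expectation $E^Q\{\cdot|{\cal F}_m\}$ without change.

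Next, I would substitute these identities into the left-hand side of (\ref{lkn4}), which transforms the claim into
\begin{eqnarray*}
E^{Q}\bigl\{\max_{1\leq i\leq k} E^{P_i}\{\xi|{\cal F}_n\}\,\bigl|\,{\cal F}_m\bigr\} \leq \max_{1\leq i\leq k} E^{P_i}\{\xi|{\cal F}_m\}, \quad Q\in M, \quad n>m.
\end{eqnarray*}
But this is exactly the conclusion of Lemma \ref{q5}, whose hypotheses (nonnegativity and integrability of $\xi$ with respect to each $P_i$) are identical to those assumed here. Hence the inequality follows immediately.

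The only potential obstacle is verifying that Lemma \ref{q2} may legitimately be applied inside $E^Q\{\cdot|{\cal F}_m\}$: the essential supremum and the finite maximum coincide only almost surely, so one must ensure the exceptional $P_1$-null set is also $Q$-null. This is guaranteed by the equivalence of all measures in $M$, which also justifies the analogous substitution on the right-hand side. Apart from this routine bookkeeping, the proof is a two-line combination of Lemma \ref{q2} and Lemma \ref{q5}.
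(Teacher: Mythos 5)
Your proof is correct and is essentially the paper's own argument: the paper disposes of Lemma \ref{lkq4} with the single remark that it is a consequence of Lemma \ref{q5}, the identification of $\mathrm{ess}\sup_{P\in M}$ with $\max_{1\leq i\leq k}$ via Lemma \ref{q2} (and the equivalence of measures handling the null sets) being exactly the implicit bridge you spell out. No substantive difference in route; you have merely made explicit the bookkeeping the paper leaves unsaid.
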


Lemma  \ref{lkq4} is a consequence of Lemma \ref{q5}.

\begin{lemma}\label{1q5} 
On a measurable space $\{ \Omega, {\cal F}\}$ with a filtration ${\cal F}_m$ on it, 
let  $\xi$ be a nonnegative integrable random value with respect to a set of equivalent measures $\{P_1, \ldots, P_k \}$  and such that 
\begin{eqnarray}\label{r7}
 E^{P_i}\xi=M_0, \quad i=\overline{1,k},
\end{eqnarray}
then the random process  $\{M_m=\mathrm{ess}\sup\limits_{P\in M}E^P\{\xi|{\cal F}_m\}, {\cal F}_m\}_{m=0}^\infty$ is a martingale relative to a convex set of equivalent measures  $M.$
\end{lemma}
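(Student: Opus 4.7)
The plan is to combine three ingredients already available in the paper: Lemma \ref{lkq4}, which gives that $\{M_m,{\cal F}_m\}_{m=0}^{\infty}$ is a super-martingale relative to every $Q\in M$; Lemma \ref{l1}, which upgrades any super-martingale with constant expectation to a martingale; and Lemma \ref{q2}, which lets me replace the essential supremum by a plain finite maximum, $M_m=\max_{1\le i\le k}E^{P_i}\{\xi|{\cal F}_m\}$ almost surely. So the whole task reduces to verifying the single identity $E^{Q}M_m=M_0$ for every $m\ge 0$ and every $Q\in M$.

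For $m=0$ the value $M_0$ is a constant (it equals $\max_i E^{P_i}\xi$, which by hypothesis \eqref{r7} is just $M_0$), so the identity is trivial. For $m\ge 1$ I would use the standing assumption of this section, namely the ${\cal F}_1$-measurability of every density $dP_i/dP_l$, in combination with Lemma \ref{q1}: in the formula $E^{P_i}\{\xi|{\cal F}_m\}=E^{P_l}\{\xi\,\varphi_m^{P_i}|{\cal F}_m\}$ the correction factor $\varphi_m^{P_i}$ is identically $1$, because the ${\cal F}_1$-measurable density coincides with its ${\cal F}_m$-conditional expectation for $m\ge 1$. Consequently $E^{P_i}\{\xi|{\cal F}_m\}=E^{P_l}\{\xi|{\cal F}_m\}$ for all $i,l$, and the maximum in Lemma \ref{q2} collapses to $M_m=E^{P_1}\{\xi|{\cal F}_m\}$.

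Now for $Q=\sum_{i=1}^{k}\alpha_i P_i\in M$ I compute
\begin{eqnarray*}
E^{Q}M_m=\sum_{i=1}^{k}\alpha_i E^{P_i}\bigl[E^{P_1}\{\xi|{\cal F}_m\}\bigr]
=\sum_{i=1}^{k}\alpha_i E^{P_1}\!\left[\frac{dP_i}{dP_1}\,E^{P_1}\{\xi|{\cal F}_m\}\right].
\end{eqnarray*}
Since $dP_i/dP_1$ is ${\cal F}_1\subset{\cal F}_m$-measurable, it can be pulled inside the inner conditional expectation, and a double-expectation argument yields $E^{P_1}[\frac{dP_i}{dP_1}\xi]=E^{P_i}\xi=M_0$. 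Hence $E^{Q}M_m=\sum_i\alpha_i M_0=M_0$, which together with the super-martingale property from Lemma \ref{lkq4} and the converse direction of Lemma \ref{l1} finishes the proof.

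The only slightly delicate step is the collapse of the maximum, where one must be careful to invoke the section's standing hypothesis on the ${\cal F}_1$-measurability of the densities in the right way; everything else is a bookkeeping exercise in change of measure.
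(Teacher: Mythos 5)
Your proposal is correct: the skeleton (super-martingale property from Lemma \ref{lkq4}, verification that $E^{Q}M_m$ is constant in $m$, then Lemma \ref{l1}) is exactly the paper's, but the middle step is carried out differently. The paper never collapses the maximum: it gets the lower bound cheaply from
$E^{P_s}\bigl[\max_{1\le i\le k}E^{P_i}\{\xi|{\cal F}_m\}\bigr]\ge E^{P_s}\bigl[E^{P_s}\{\xi|{\cal F}_m\}\bigr]=E^{P_s}\xi=M_0$
(take $i=s$ inside the maximum), combines this with $E^{P_s}M_m\le M_0$ from the super-martingale property to get $E^{P_s}M_m=M_0$ for the extreme measures, and extends to all $Q\in M$ by linearity. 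You instead invoke the section's standing hypothesis (the densities $dP_i/dP_l$ are ${\cal F}_1$-measurable), which via Lemma \ref{q1} makes $\varphi_m^{P_i}\equiv 1$ for $m\ge 1$ and hence $M_m=E^{P_1}\{\xi|{\cal F}_m\}$ almost surely; the identity $E^{Q}M_m=M_0$ then follows by an exact change-of-measure computation. Both routes are valid here, since the standing hypothesis is in force throughout Section 3 (and is already needed for Lemma \ref{lkq4}); your version even yields the stronger intermediate fact that the essential supremum is a single conditional expectation, which makes the martingale conclusion nearly immediate. The trade-off is that the paper's lower-bound argument does not use the density assumption at that point and is the step that survives in the later generalizations (Lemma \ref{tmars5}, Theorem \ref{fmars5}) where the maximum genuinely cannot be collapsed, whereas your collapse argument is tied to the ${\cal F}_1$-measurability of the densities. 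One cosmetic remark: once you have $M_m=E^{P_i}\{\xi|{\cal F}_m\}$ for every $i$, the equality $E^{P_i}M_m=E^{P_i}\xi=M_0$ is immediate by the tower property, so the explicit Radon--Nikodym manipulation in your displayed computation can be skipped.
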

\begin{proof} Due to Lemma  \ref{lkq4}, a random process\\ $\{M_m=\mathrm{ess}\sup\limits_{P\in M}E^P\{\xi|{\cal F}_m\}, {\cal F}_m\}_{m=0}^\infty $  is a super-martingale, that is, 
\begin{eqnarray}\label{apm19} 
E^P\{M_m | {\cal F}_{m-1}\} \leq M_{m-1}, \quad m=\overline{1, \infty}, \quad P \in M.
\end{eqnarray}
Or, $E^PM_m \leq M_0.$
From the other side,

\begin{eqnarray}\label{apm20} 
 E^{P_s} [ \max\limits_{1\leq i \leq k}E^{P_i}\{\xi|{\cal F}_m\}] \ge 
\max\limits_{1\leq i \leq k}E^{P_s}E^{P_i}\{\xi|{\cal F}_m\} \geq M_0, \quad s=\overline{1,k}.
\end{eqnarray}
 The above inequalities imply $E^{P_s}M_m= M_0, \  m=\overline{1, \infty}, \ s=\overline{1, k}.$
The last equalities lead to the  equalities $E^{P}M_m= M_0, \ m=\overline{1, \infty}, \ P \in M.$
The fact that $M_m$ is a super-martingale relative to the set of measures $M$ and the above equalities  prove  Lemma \ref{1q5}, since the Lemma \ref{l1} conditions are valid.
\end{proof}
In the next Theorem we denote ${\cal F}=\sigma(\bigvee\limits_{i=1}^\infty{\cal F}_i)$
the minimal $\sigma$-algebra generated by the algebra $\bigvee\limits_{i=1}^\infty{\cal F}_i.$
\begin{thm}\label{mars12}
 Let  $\{ \Omega, {\cal F}\}$ be  a measurable space  with a filtration ${\cal F}_m$ on it and let $\xi$ be a nonnegative integrable random value with respect to a set of equivalent  measures $P_1, \ldots, P_k.$ 
The necessary and sufficient conditions of the local regularity of
 the super-martingale $\{f_m, {\cal F}_m\}_{m=0}^\infty, $ where 
\begin{eqnarray}\label{marsss13}
f_m=\mathrm{ess}\sup\limits_{P \in M}E^P\{\xi| {\cal F}_m\}, \quad  m=\overline{1,\infty}, \quad   \max\limits_{1 \leq i \leq k}E^{P_i}\xi< \infty, 
\end{eqnarray}
is  its uniform integrability relative to the set of measure $P_1, \ldots, P_k$ and the fulfillment  of the equalities
\begin{eqnarray}\label{mars13}
E^{P_i}\xi=f_0, \quad i=\overline{1,k}.
\end{eqnarray}
\end{thm}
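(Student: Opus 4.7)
The plan is to prove the equivalence by splitting it into the two implications. For sufficiency I will reduce to Lemma \ref{1q5}; for necessity I will exploit the Doob-type decomposition supplied by Theorem \ref{reww1} together with a limit passage that uses the uniform integrability hypothesis and the finite-maximum representation of $f_m$ from Lemma \ref{q2}.

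For sufficiency the key observation is that by Lemma \ref{q2} one has $f_m=\max_{1\le i\le k}E^{P_i}\{\xi|\mathcal F_m\}$, so the hypothesis $E^{P_i}\xi=f_0$ for all $i$ is precisely the hypothesis of Lemma \ref{1q5} (with $M_0=f_0$). That lemma then gives that $\{f_m,\mathcal F_m\}_{m=0}^{\infty}$ is already a martingale relative to $M$; any martingale is trivially a locally regular super-martingale (take the compensator $g_m\equiv 0$), and $\sup_{P\in M}E^P|f_m|=f_0<\infty$, so the conclusion follows without even needing the uniform integrability.

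For necessity I will start by invoking Theorem \ref{reww1} to obtain a nonnegative adapted $\{\bar g_m^0\}$ with $\sup_P E^P\bar g_m^0<\infty$ such that $f_{m-1}-E^P\{f_m|\mathcal F_{m-1}\}=E^P\{\bar g_m^0|\mathcal F_{m-1}\}$ for every $P\in M$. Writing $g_m=\sum_{k=1}^m \bar g_k^0$ makes $\bar M_m=f_m+g_m$ into a martingale under every $P\in M$, so $E^{P_i}f_m+E^{P_i}g_m=f_0$ for all $m,i$. Next I show $f_m\to\xi$ almost surely: each summand $E^{P_i}\{\xi|\mathcal F_m\}$ tends to $\xi$ a.s.\ under $P_1$ by L\'evy's upward theorem (using $P_i\sim P_1$), and the maximum of finitely many convergent sequences tends to the maximum of the limits, namely $\xi$. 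The uniform integrability hypothesis then lets me pass to the limit $E^{P_i}f_m\to E^{P_i}\xi$, while monotone convergence gives $E^{P_i}g_m\uparrow E^{P_i}g_\infty$, yielding $E^{P_i}\xi+E^{P_i}g_\infty=f_0$.

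The crux of the necessity direction is to rule out $g_\infty>0$. By Lemma \ref{q2} applied at time zero, $f_0=\max_i E^{P_i}\xi$, so there is an index $j_0$ with $E^{P_{j_0}}\xi=f_0$; the identity above then forces $E^{P_{j_0}}g_\infty=0$, hence $g_\infty=0$ $P_{j_0}$-a.s., and the equivalence of the $P_j$ propagates this to $g_\infty=0$ a.s.\ under every $P\in M$, so $g_m\equiv 0$ for every $m$. Then $f_m$ is itself a martingale under every $P\in M$, so $E^{P_i}f_m=f_0$ for all $m$, and a final limit passage with uniform integrability delivers $E^{P_i}\xi=f_0$ for each $i$. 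The main obstacle is exactly this last identification of $g_\infty$ with zero: it crucially leverages the \emph{finite} cardinality of the generating family so that the maximum $f_0=\max_iE^{P_i}\xi$ is genuinely attained at some concrete $P_{j_0}$ which can then be plugged into the equality $E^{P}g_\infty=f_0-E^{P}\xi$; without this handle the essential supremum would only be approached and a different structural input would be required.
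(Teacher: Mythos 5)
Your proof is correct and follows essentially the same route as the paper's: sufficiency by identifying $f_m=\max_{1\le i\le k}E^{P_i}\{\xi|{\cal F}_m\}$ (Lemma \ref{q2}) and invoking Lemma \ref{1q5}, and necessity by writing $f_m+g_m$ as a martingale, letting $f_m\to\xi$ a.s.\ (since ${\cal F}=\sigma(\bigvee_n{\cal F}_n)$), passing to the limit via uniform integrability and monotonicity, and using that $f_0=\max_iE^{P_i}\xi$ is attained at some $P_{i_0}$ to force $E^{P_{i_0}}g_\infty=0$ and hence $g_\infty=0$. Note that, exactly like the paper, you treat the uniform integrability as available in the necessity step rather than deriving it from local regularity, so the two arguments coincide in substance (your explicit L\'evy-upward justification of $f_m\to\xi$ is just a more careful rendering of the paper's remark).
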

\begin{proof} The necessity. Let $\{f_m, {\cal F}_m\}_{m=0}^\infty $  be a local regular super-martingale. Then 
\begin{eqnarray}\label{apm21} 
f_n=M_n-g_n, \quad n=\overline{0,\infty}, \quad g_0=0,\quad  f_0=E^{P_i}M_n,\quad i=\overline{1,k}.
\end{eqnarray}
From here we obtain $E^{P_i}g_n\leq f_0, \ i=\overline{1,k}.$ Due to the  uniform integrability of $f_n$ and $g_n$ we obtain
\begin{eqnarray}\label{NS1} 
E^{P_i}(f_\infty+g_\infty)=f_0, \quad i=\overline{1,k},
\end{eqnarray}
where $f_\infty=\xi,$ $g_\infty=\lim\limits_{n\to \infty}g_n,$    since ${\cal F}=\sigma(\bigvee\limits_{i=1}^\infty{\cal F}_i).  $ But $f_0=\max\limits_{1\leq i\leq k}E^{P_i}\xi=E^{P_{i_0}}\xi.$ From (\ref{NS1}) we have $E^{P_{i_0}}g_\infty=0.$
 The last  equality  gives $g_\infty=0,$ or
\begin{eqnarray}\label{apm22} 
 E^{P_i}\xi=E^{P_{i_0}}\xi, \quad i=\overline{1,k}.
\end{eqnarray}
 The sufficiency. If the conditions  of  Theorem \ref{mars12} are satisfied,  then  $\{\bar M_m, {\cal F}_m\}_{m=0}^\infty $ is a martingale, where  $\bar M_m= \sup\limits_{P \in M}E^P\{\xi|{\cal F}_m\}.$ The last implies the local regularity of $\{f_m, {\cal F}_m\}_{m=0}^\infty.$
 Theorem \ref{mars12} is proved.
\end{proof}

\section{Description of local regular super-martingales relative to an  arbitrary convex  set of equivalent measures.} 
Below, in the paper we assume  that  an arbitrary convex set of equivalent measures $M$ on a measurable space $\{\Omega, {\cal F}\}$  and a filtration ${\cal F}_n$ on it satisfies the conditions: the density  $ \frac{dP}{dQ}$ is ${\cal F}_1$ measurable  one and $P_0(\frac{dP}{dQ}>0)=1$ for all $P , Q \in M,$ where the fixed measure $P_0 \in
M.$  Such a class  of equivalent measures is sufficiently wide. It contains the class of equivalent martingale measures generated by a local martingale.

 Introduce into consideration a set  $A_0$ of all integrable  nonnegative random values $\xi$  relative to a convex set of equivalent measures $M$ satisfying conditions
\begin{eqnarray}\label{0mars6}
E^P\xi=1, \quad P \in M.
\end{eqnarray}
It is evident that the set  $A_0$ is not empty, since contains  the random value $\xi =1.$
More interesting case is  as  $A_0$ contains more then one element.

\begin{lemma}\label{tmars5}  On a  measurable space $\{\Omega, {\cal F}\}$ and a filtration ${\cal F}_n$ on it, let $M$ be an arbitrary convex set of equivalent 
measures. If  the  nonnegative random value $\xi$ is such that $\sup\limits_{P \in M}E^P\xi< \infty,$  then 
$\{f_m=\mathrm{ess}\sup\limits_{P \in M}E^P\{\xi| {\cal F}_m\},  {\cal F}_m\}_{m=0}^\infty $ is a super-martingale relative to the convex set of equivalent measures $M.$
\end{lemma}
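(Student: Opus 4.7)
The plan is to reduce the arbitrary convex set $M$ to the finite-subset case already handled in Lemma \ref{lkq4}. For any finite family $\{P_1,\dots,P_k\}\subset M$ generating convex hull $M_k\subset M$, Lemma \ref{q2} identifies $\mathrm{ess}\sup_{Q\in M_k}E^Q\{\xi|\mathcal{F}_n\}$ with the pointwise maximum $\max_{1\leq i\leq k}E^{P_i}\{\xi|\mathcal{F}_n\}$, and Lemma \ref{lkq4} gives the super-martingale inequality for that maximum against any measure belonging to $M_k$. Convexity of the larger set $M$ is the crucial feature: it lets me adjoin the test measure $Q$ to any chosen finite family without leaving $M$.

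I would first invoke the standard measure-theoretic fact that an essential supremum of an arbitrary family of measurable functions is realised as a countable supremum over a suitable subfamily. This produces, for each $n$, a sequence $\{P_i^{(n)}\}_{i\geq 1}\subset M$ such that $f_n=\sup_k h_k^{(n)}$ almost surely, where $h_k^{(n)}:=\max_{1\leq i\leq k}E^{P_i^{(n)}}\{\xi|\mathcal{F}_n\}$ is non-decreasing in $k$. Adaptedness of $\{f_m\}$ is then immediate, since each $h_k^{(n)}$ is $\mathcal{F}_n$-measurable.

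Next I would verify the super-martingale inequality. Fix $Q\in M$ and $n>m$, and apply Lemma \ref{lkq4} to the enlarged finite family $\{P_1^{(n)},\dots,P_k^{(n)},Q\}\subset M$, whose convex hull contains $Q$. Writing $P_i'=P_i^{(n)}$ for $i\leq k$ and $P_{k+1}'=Q$, the lemma yields
\begin{equation*}
E^Q\{h_k^{(n)}|\mathcal{F}_m\}\leq E^Q\bigl\{\max_{1\leq i\leq k+1}E^{P_i'}\{\xi|\mathcal{F}_n\}\bigm|\mathcal{F}_m\bigr\}\leq \max_{1\leq i\leq k+1}E^{P_i'}\{\xi|\mathcal{F}_m\}\leq f_m.
\end{equation*}
Monotone convergence for conditional expectations, taking $k\to\infty$, produces $E^Q\{f_n|\mathcal{F}_m\}\leq f_m$. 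The same device with $m=0$ (the trivial $\sigma$-algebra $\mathcal{F}_0$) delivers $E^P h_k^{(m)}\leq \sup_{P'\in M}E^{P'}\xi<\infty$ uniformly in $k$, so $E^P f_m<\infty$ for every $P\in M$, completing the integrability requirement.

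The main obstacle will be carrying out the essential-supremum approximation cleanly while keeping the approximating measures inside $M$: Lemma \ref{lkq4} can be invoked only when the conditioning measure lies in the convex hull of the finite family at hand, so I cannot shrink $M$ down to a fixed finite subset but must enlarge every finite truncation to include whichever $Q\in M$ is being tested. Once the countable approximating sequence $\{P_i^{(n)}\}$ is constructed inside $M$ and each finite truncation is augmented by $Q$, the remainder is routine monotone convergence.
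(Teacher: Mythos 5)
Your proposal is correct and follows essentially the same route as the paper's proof: represent the essential supremum by a countable subfamily of $M$, apply Lemma \ref{lkq4} to finite convex hulls enlarged (by convexity) to contain the test measure $Q$, and pass to the limit by monotone convergence. Your version is a slight streamlining — you adjoin $Q$ to each finite truncation and bound the level-$m$ maximum directly by $f_m$, whereas the paper builds one common countable set $D$ for all times and then adds any outside measure $Q_0$ and repeats — but the key ingredients and structure are identical.
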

\begin{proof} From the definition of $\mathrm{ess}\sup$ \cite{Chow}, for every $\mathrm{ess}\sup\limits_{P \in M}E^P\{\xi| {\cal F}_m\}$ there exists a countable set $D_m$ such that
\begin{eqnarray}\label{apm23}
\mathrm{ess}\sup\limits_{P \in M}E^P\{\xi| {\cal F}_m\}=\sup\limits_{P \in D_m}E^P\{\xi| {\cal F}_m\}, \quad m=\overline{0, \infty}.
\end{eqnarray} 
 The set $D=\bigcup\limits_{m=0}^\infty D_m$ is also countable  one and the equality
\begin{eqnarray}\label{tamar1}  
\mathrm{ess}\sup\limits_{P \in M}E^P\{\xi| {\cal F}_m\}=\sup\limits_{P \in D}E^P\{\xi| {\cal F}_m\}
\end{eqnarray}
is true. 
Really, since
\begin{eqnarray}\label{tamar2}
 \sup\limits_{P \in D}E^P\{\xi| {\cal F}_m\} \geq \sup\limits_{P \in D_m}E^P\{\xi| {\cal F}_m\}=\mathrm{ess}\sup\limits_{P \in M}E^P\{\xi| {\cal F}_m\}.
\end{eqnarray}
From the other side,
\begin{eqnarray}\label{tamar3}
\mathrm{ess}\sup\limits_{P \in M}E^P\{\xi| {\cal F}_m\} \geq E^Q\{\xi| {\cal F}_m\}, \quad Q \in M.
\end{eqnarray}
The last gives 
\begin{eqnarray}\label{tamar4}
\mathrm{ess}\sup\limits_{P \in M}E^P\{\xi| {\cal F}_m\} \geq \sup\limits_{P \in D}E^P\{\xi| {\cal F}_m\}. 
\end{eqnarray}
The inequalities (\ref{tamar2}),  (\ref{tamar4}) prove the needed statement. So, for all $m$ we can choose the common set $D.$ Let $D=\{\bar P_1,\ldots \bar P_n, \ldots\}.$ Due to Lemma 
\ref{lkq4}, for every $Q \in \bar M_k,$
we have
\begin{eqnarray}\label{tamar6}
 E^Q\{\max\limits_{1 \leq i \leq k} E^{\bar P_i}\{\xi|{\cal F}_n\}|{\cal F}_m\} \leq 
\max\limits_{1 \leq i \leq k} E^{\bar P_i}\{\xi|{\cal F}_m\}, \quad n>m, \quad Q \in \bar M_k,
\end{eqnarray}
 where
\begin{eqnarray}\label{tamar5}
\bar M_k=\{P \in M, P=\sum\limits_{i=1}^k\alpha_i \bar P_i, \  \alpha_i \geq 0, \ \sum\limits_{i=1}^k\alpha_i=1\}.
\end{eqnarray}
It is evident that $\max\limits_{1 \leq i \leq k} E^{\bar P_i}\{\xi|{\cal F}_n\}$ tends to $ \sup\limits_{P \in D}E^P\{\xi| {\cal F}_n\}$ monotonously increasing, as $k \to \infty.$ Fixing $Q \in \bar M_k \subset \bar M_{k+1}$ and tending $k $ to the infinity in 
the inequalities (\ref{tamar6}), we obtain
\begin{eqnarray}\label{tamar7}
 E^Q\{\sup\limits_{P \in D}E^P\{\xi| {\cal F}_n\}| {\cal F}_m\} \leq 
\sup\limits_{P \in D}E^P\{\xi| {\cal F}_m\}, \quad n>m, \quad Q \in \bar M_k.
\end{eqnarray}
The last inequalities implies that for every measure $Q,$ belonging to the convex span, constructed on the set $D,$ $\{f_m=\mathrm{ess}\sup\limits_{P \in M}E^P\{\xi| {\cal F}_m\},  {\cal F}_m\}_{m=0}^\infty $ is a super-martingale relative to the convex set of equivalent measures, generated by the set  $D.$ Now, if a measure $Q_0$ does not  belong to the convex span, constructed on the set $D,$ then we can  add it to the set $D$ and repeat the proof made above. As a result, we proved that  $\{f_m=\mathrm{ess}\sup\limits_{P \in M}E^P\{\xi| {\cal F}_m\},  {\cal F}_m\}_{m=0}^\infty $ is also a super-martingale relative to the measure $Q_0.$  Zorn Lemma \cite{Kelley} complete the proof of  Lemma \ref{tmars5}.
\end{proof}

\begin{thm}\label{fmars5} On a measurable space $\{\Omega, {\cal F}\}$ and a filtration ${\cal F}_n$ on it, let $M$ be an arbitrary convex set of equivalent 
measures.  For a random value $\xi \in A_0,$  the random process $\{ E^P\{\xi|{\cal F}_m\}, {\cal F}_m\}_{m=0}^\infty,$ $P \in M,$  is a local regular martingale  relative to the convex set of equivalent 
measures $M.$ 
\end{thm}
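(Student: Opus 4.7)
The plan is to use the Section~4 hypothesis that every density $\frac{dP}{dQ}$, $P,Q\in M$, is ${\cal F}_1$-measurable to show that, at every positive time, the conditional expectation $E^P\{\xi|{\cal F}_m\}$ is actually \emph{independent} of the measure $P\in M$. Once this is in place, the martingale identity under every $Q\in M$ reduces to the tower property, and the local regularity follows from Theorem~\ref{reww1} with the trivial choice $\bar g_m^0\equiv 0$.

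For the first step I would fix $P,Q\in M$, set $\rho=\frac{dP}{dQ}$, and apply Lemma~\ref{q1}. Since $\rho$ is ${\cal F}_1$-measurable and ${\cal F}_1\subset{\cal F}_m$ for $m\ge 1$, one has $E^Q\{\rho|{\cal F}_m\}=\rho$, so the correction factor $\varphi_m^{P}=\rho\bigl[E^Q\{\rho|{\cal F}_m\}\bigr]^{-1}$ collapses to the constant $1$. Applying Lemma~\ref{q1} first to the bounded truncations $\xi\wedge n$ and then passing to the monotone limit (permissible because $\xi\in A_0$ is nonnegative and integrable under every measure) yields
\begin{equation*}
E^P\{\xi|{\cal F}_m\}=E^Q\{\xi|{\cal F}_m\}\qquad\text{a.s.},\quad m\ge 1.
\end{equation*}
In other words, $f_m:=E^P\{\xi|{\cal F}_m\}$ is intrinsic to $\xi$ and independent of the choice of $P$ for $m\ge 1$, while at $m=0$ the defining property $\xi\in A_0$ gives $f_0=E^P\xi=1$ for every $P\in M$.

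To finish I would verify $E^Q\{f_m|{\cal F}_{m-1}\}=f_{m-1}$ for each $Q\in M$ and each $m\ge 1$. For $m\ge 2$ the tower property gives
\begin{equation*}
E^Q\{f_m|{\cal F}_{m-1}\}=E^Q\{E^Q\{\xi|{\cal F}_m\}|{\cal F}_{m-1}\}=E^Q\{\xi|{\cal F}_{m-1}\}=f_{m-1},
\end{equation*}
and for $m=1$, $E^Q f_1=E^Q E^Q\{\xi|{\cal F}_1\}=E^Q\xi=1=f_0$. Since $f_m\ge 0$ and $E^P f_m=1$ for every $P\in M$, the uniform-integrability condition $\sup_{P\in M}E^P|f_m|<\infty$ required in Definitions~1 and~2 is automatic. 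Hence $\{f_m,{\cal F}_m\}$ is a martingale relative to every $P\in M$, and choosing $\bar g_m^0\equiv 0$ in Theorem~\ref{reww1} certifies that it is local regular.

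I do not anticipate a serious obstacle. The argument reduces to Bayes' rule in the form of Lemma~\ref{q1} combined with the tower property, and the only place requiring a moment's care is the extension of Lemma~\ref{q1} from bounded to merely integrable $\xi\in A_0$, which is handled by monotone convergence on $\xi\wedge n$.
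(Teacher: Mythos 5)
Your proof is correct, and it is noticeably more direct than the paper's. The paper proves the theorem by running $\xi$ through the essential-supremum machinery of Section 3: for every finite subset $P_1,\dots,P_n$ of $M$ it forms the convex hull $M_n$, invokes Lemma \ref{1q5} (which rests on Lemmas \ref{q3}--\ref{lkq4}) to conclude that $\bar M_m=\mathrm{ess}\sup_{P\in M_n}E^P\{\xi|{\cal F}_m\}$ is a martingale, then adjoins an arbitrary $P_0\in M$, compares $\bar M_m\le \bar M_m^{P_0}$ and $E^{P_0}\{\xi|{\cal F}_m\}\le \bar M_m^{P_0}$, and uses the equality of expectations ($=1$, from $\xi\in A_0$) to force $E^{P_0}\{\xi|{\cal F}_m\}=\bar M_m^{P_0}=\bar M_m$; local regularity then follows from Theorem \ref{reww1} with $\bar g^0_m\equiv 0$, exactly as in your last step. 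Your route instead uses the Section 4 standing hypothesis (densities ${\cal F}_1$-measurable and a.s.\ positive) to collapse the Bayes factor $\varphi_m^{P}$ of Lemma \ref{q1} to $1$ for $m\ge 1$, so that $E^P\{\xi|{\cal F}_m\}$ is measure-independent, after which the martingale property is just the tower property plus $E^P\xi=1$ at $m=0,1$; the truncation/monotone-convergence step extending Lemma \ref{q1} from bounded to integrable $\xi$ is the right repair and is unproblematic. In substance both arguments hinge on the same structural fact --- the paper's version of it is hidden in equality (\ref{apm9}) inside Lemma \ref{q3}, which feeds Lemma \ref{1q5} --- but you use it head-on, which shortens the proof and makes explicit both where the ${\cal F}_1$-measurability assumption enters and why the normalization $E^P\xi=1$ from $A_0$ is needed at the first step, whereas the paper's detour buys nothing extra here beyond reusing lemmas it needs elsewhere. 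One terminological nit: $\sup_{P\in M}E^P|f_m|=1<\infty$ is a uniform bound on first moments, not ``uniform integrability''; it is, however, exactly the condition Definition 2 and Theorem \ref{reww1} ask for, so the verification stands.
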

\begin{proof}  Let $P_1, \ldots, P_n$ 
be a certain subset of measures from $M.$ Denote 
$M_n$  a convex set of equivalent measures 
\begin{eqnarray}\label{mars8}
M_n = \{P \in M, \  P=\sum\limits_{i=1}^n\alpha_i P_i, \  \alpha_i \geq 0, \ i=\overline{1,n}, \ \sum\limits_{i=1}^n\alpha_i=1\}.
\end{eqnarray}
Due to Lemma \ref{1q5}, $\{\bar M_m, {\cal F}_m\}_{m=0}^\infty$ is a  martingale relative
to the set of measures $M_n,$ where $ \bar M_m=\mathrm{ess}\sup\limits_{P \in M_n}E^P\{\xi|{\cal F}_m\}, \ \xi \in A_0.$  Let us consider an arbitrary measure $P_0 \in M$ and let
\begin{eqnarray}\label{mars9}
M_n^{P_0} = \{P \in M, \ P= \sum\limits_{i=0}^n\alpha_i P_i, \  \alpha_i \geq 0, \  i=\overline{0,n}, \ \sum\limits_{i=0}^n\alpha_i=1\}.
\end{eqnarray}
Then  $\{\bar M_m^{P_0}, {\cal F}_m\}_{m=0}^\infty,$ where $\bar M_m^{P_0}=\mathrm{ess}\sup\limits_{P \in M_n^{P_0}}E^P\{\xi|{\cal F}_m\},$ is a martingale relative to the set of measures $M_n^{P_0}.$ It is evident that
\begin{eqnarray}\label{mars10}
\bar M_m \leq \bar M_m^{P_0}, \quad   m=\overline{0, \infty}.
\end{eqnarray}
Since $E^P\bar M_m=E^P\bar M_m^{P_0}=1, \ m=\overline{0, \infty}, \ P \in M_n,$ the inequalities (\ref{mars10}) give $\bar M_m=\bar M_m^{P_0}.$ Analogously,
$E^{P_0}\{\xi|{\cal F}_m\} \leq \bar M_m^{P_0}.$ From the equalities 
$ E^{P_0}E^{P_0}\{\xi|{\cal F}_m\}$ $ = E^{P_0}\bar M_m^{P_0}=1$ we obtain
$E^{P_0}\{\xi|{\cal F}_m\} = \bar M_m^{P_0}=\bar M_m.$ 
Since the measure $P_0$ is an  arbitrary one it implies that $\{E^P\{\xi|{\cal F}_m\}, {\cal F}_m\}_{m=0}^\infty$ is a martingale relative to all measures from $M.$
Due to Theorem  \ref{reww1}, it is a local regular super-martingale with the random process $\bar g^0_m=0,  m=\overline{0, \infty}. $
 Theorem \ref{fmars5} is proved.
\end{proof}

\begin{thm}\label{mmars1} On a measurable space $\{\Omega, {\cal F}\}$ and a filtration ${\cal F}_n$ on it, let $M$ be an arbitrary convex set of equivalent 
measures. 
If $\{f_m, {\cal F}_m\}_{m=0}^\infty$ is an adapted random process  satisfying conditions
\begin{eqnarray}\label{mmars2}
f_m \leq f_{m-1}, \quad  E^P\xi|f_m| <\infty, \quad P \in M \quad m=\overline{1, \infty}, \quad  \xi \in A_0,
\end{eqnarray}
then the random process
\begin{eqnarray}\label{mmars3}
 \{ f_mE^P\{\xi|{\cal F}_m\}, {\cal F}_m\}_{m=0}^\infty, \quad  P \in M,
\end{eqnarray}
is a local regular super-martingale relative to the convex set of equivalent  measures  $M.$
\end{thm}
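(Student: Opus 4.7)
The plan is to apply Theorem \ref{reww1} directly: I will exhibit an adapted nonnegative process $\{\bar g^0_m\}$ that realizes the one-step compensator in (\ref{o1}) for the candidate super-martingale $X_m := f_m E^P\{\xi|{\cal F}_m\}$, and then check the integrability requirements.

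First I would invoke Theorem \ref{fmars5} with the given $\xi \in A_0$. Not only does that theorem say $E^P\{\xi|{\cal F}_m\}$ is a (local regular) martingale relative to $M$, but its proof establishes the stronger statement $E^{P_0}\{\xi|{\cal F}_m\}=\bar M_m$ with the right-hand side the same random variable for every $P_0\in M$. Writing $M_m$ for this common version, I have $M_0=1$, $M_m\geq 0$, and $E^P\{M_k|{\cal F}_{k-1}\}=M_{k-1}$ for every $P\in M$. In particular $X_m = f_m M_m$ is a well-defined adapted process independent of the choice of $P$ used to form the conditional expectation.

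Second, the natural candidate is $\bar g_m^0 := (f_{m-1}-f_m)M_m$. It is $\mathcal{F}_m$-measurable and, because the hypothesis gives $f_m\leq f_{m-1}$ and $M_m\geq 0$, it is nonnegative with $\bar g_0^0=0$. Since $f_{m-1}$ is $\mathcal{F}_{m-1}$-measurable and $M_m$ is a martingale under every $P\in M$, a one-line calculation yields
\begin{eqnarray*}
E^P\{\bar g_m^0|{\cal F}_{m-1}\} = f_{m-1}E^P\{M_m|{\cal F}_{m-1}\} - E^P\{f_m M_m|{\cal F}_{m-1}\} = X_{m-1} - E^P\{X_m|{\cal F}_{m-1}\},
\end{eqnarray*}
which is exactly the condition (\ref{o1}) that Theorem \ref{reww1} demands. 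Equivalently, one can present this as the explicit decomposition $X_m = X_0 + N_m - A_m$, where $N_m := \sum_{k=1}^m f_{k-1}(M_k - M_{k-1})$ is a martingale under every $P\in M$ (the increments are $\mathcal{F}_{k-1}$-measurable multiples of martingale increments) and $A_m := \sum_{k=1}^m (f_{k-1}-f_k)M_k$ is a nondecreasing nonnegative adapted process with $A_0=0$.

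The step I expect to be most delicate is the integrability bookkeeping. Using $M_m = E^P\{\xi|{\cal F}_m\}$ and the tower property, $E^P|X_m| = E^P|f_m| M_m = E^P(|f_m|\xi)$, which is finite by the hypothesis $E^P(\xi|f_m|)<\infty$; likewise $E^P\bar g_m^0 = E^P((f_{m-1}-f_m)\xi)\leq E^P(\xi|f_{m-1}|)+E^P(\xi|f_m|)<\infty$. The genuine issue is the uniform-in-$P$ bound demanded in the definition of local regularity; here one exploits the standing assumption of Section 4 that the densities $dP/dQ$ are $\mathcal{F}_1$-measurable, together with Lemma \ref{q1}, to transfer the individual integrability bounds into a $\sup_{P\in M}$ bound, after which Theorem \ref{reww1} closes the argument.
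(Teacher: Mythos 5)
Your proposal is correct and follows essentially the same route as the paper: invoke Theorem \ref{fmars5} for the martingale property of $E^P\{\xi|{\cal F}_m\}$, take the compensator $\bar g_m^0=(f_{m-1}-f_m)E^P\{\xi|{\cal F}_m\}\geq 0$, verify identity (\ref{o1}) and the bound $E^P\bar g_m^0\leq E^P\xi(|f_{m-1}|+|f_m|)<\infty$, and conclude via Theorem \ref{reww1}. Your extra remarks on the measure-independent version $M_m$ and on the $\sup_{P\in M}$ bookkeeping only elaborate points the paper passes over silently.
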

\begin{proof} Due to Theorem  \ref{fmars5}, the random process 
$\{ E^P\{\xi|{\cal F}_m\}, {\cal F}_m\}_{m=0}^\infty$ is a martingale relative to the convex set of equivalent  measures  $M.$ Therefore,
\begin{eqnarray*}
f_{m-1}E^P\{\xi|{\cal F}_{m-1}\} - E^P\{ f_m E^P\{\xi|{\cal F}_m\}|{\cal F}_{m-1}\}=
\end{eqnarray*}
\begin{eqnarray}\label{mmars4}
E^P\{ (f_{m-1} - f_m) E^P\{\xi|{\cal F}_m\}|{\cal F}_{m-1}\}, \quad m=\overline{1, \infty}.
\end{eqnarray}
So, if to put  $\bar g_m^0= (f_{m-1} - f_m) E^P\{\xi|{\cal F}_m\}, \ m=\overline{1, \infty}, $ then $\bar g_m^0 \geq 0,$  it is  ${\cal F}_m$-measurable and
$E^P\bar g_m^0 \leq E^P\xi(|f_{m-1}|+|f_m|)< \infty.$ It proves the needed statement.
\end{proof}

\begin{cor}\label{hg1} If $f_m=\alpha, \ m=\overline{1, \infty}, \ \alpha \in R^1,$ $\xi \in A_0,$ then 
$\{\alpha E^P\{\xi |{\cal F}_m\},  {\cal F}_m \}_{m=0}^\infty$ is a local regular martingale. Assume that  $\xi =1,$ then $\{f_m, {\cal F}_m\}_{m=0}^\infty$ is a local regular super-martingale relative to a convex set of equivalent  measures  $M.$ 
\end{cor}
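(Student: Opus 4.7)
The plan is to obtain both assertions as direct specializations of Theorem \ref{mmars1}, with Theorem \ref{fmars5} available to upgrade the first conclusion from super-martingale to martingale.

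For the first claim, I would apply Theorem \ref{mmars1} to the constant sequence $f_m \equiv \alpha$. The monotonicity hypothesis $f_m \leq f_{m-1}$ of that theorem holds with equality, and the integrability requirement $E^P \xi |f_m| = |\alpha| E^P \xi = |\alpha| < \infty$ is immediate from $\xi \in A_0$. Theorem \ref{mmars1} then produces a local regular super-martingale $\{\alpha E^P\{\xi|{\cal F}_m\}, {\cal F}_m\}_{m=0}^\infty$ with compensating increments
\begin{equation*}
\bar g_m^0 = (f_{m-1} - f_m) E^P\{\xi|{\cal F}_m\} = 0,
\end{equation*}
so the super-martingale is in fact a martingale. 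Alternatively, one can simply invoke Theorem \ref{fmars5} for $\xi \in A_0$ and multiply through by the scalar $\alpha$, preserving the local regular martingale property.

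For the second claim, I would specialize Theorem \ref{mmars1} to $\xi = 1$, which belongs to $A_0$ because $E^P 1 = 1$ for every $P \in M$. Then $E^P\{\xi|{\cal F}_m\} \equiv 1$ almost surely, so $f_m E^P\{\xi|{\cal F}_m\} = f_m$, and Theorem \ref{mmars1} identifies $\{f_m, {\cal F}_m\}_{m=0}^\infty$ as a local regular super-martingale with nonnegative compensator $\bar g_m^0 = f_{m-1} - f_m \geq 0$. The main obstacle is essentially nil: the corollary is a pair of routine specializations, and the only verifications are that the constant process and the choice $\xi = 1$ satisfy the hypotheses of Theorem \ref{mmars1}, together with the observation that the compensator vanishes in the constant case.
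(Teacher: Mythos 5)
Your proposal is correct and matches the paper's intent: Corollary \ref{hg1} is stated without a separate argument precisely because it is the direct specialization of Theorem \ref{mmars1} (with Theorem \ref{fmars5} giving the martingale property of $\{E^P\{\xi|{\cal F}_m\}\}_{m=0}^\infty$) that you describe. Your observation that the compensator $\bar g_m^0=(f_{m-1}-f_m)E^P\{\xi|{\cal F}_m\}$ vanishes for constant $f_m$, upgrading the conclusion to a local regular martingale, and the choice $\xi=1$ for the second claim are exactly the intended verifications.
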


Denote  $F_0$ the set of adapted processes
\begin{eqnarray}\label{mmars5}
F_0=\{ f=\{f_m\}_{m=0}^\infty,  \   P(|f_m| <\infty) =1, \ P \in M, \ f_m \leq f_{m-1}\}.
\end{eqnarray}
For every $\xi \in A_0$ let us introduce the set of adapted processes
$$ L_{\xi}=$$
\begin{eqnarray}\label{mmars6}
\{\bar f=\{f_mE^P\{\xi|{\cal F}_m\}\}_{m=0}^\infty, \  \{f_m\}_{m=0}^\infty \in F_0, \   E^P\xi|f_m| <\infty, \ P \in M\},
\end{eqnarray}
and 
\begin{eqnarray}\label{mmars7}
V=\bigcup\limits_{\xi \in A_0}L_{\xi}.
\end{eqnarray}

\begin{cor}\label{fdr1} Every  random  process from the set $K,$ where
\begin{eqnarray}\label{mmars88}
K=\left \{ \sum\limits_{i=1}^mC_i \bar f_i, \ \bar f_i \in V, \  C_i \geq 0, \ i=\overline{1, m}, \ m=\overline{1, \infty}\right\}, 
\end{eqnarray}
 is a local regular super-martingale relative to the convex set of equivalent  measures  $M$  on a measurable space $\{\Omega, {\cal F}\}$ with filtration ${\cal F}_m$ on it. 
\end{cor}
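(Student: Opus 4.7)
The plan is to reduce Corollary \ref{fdr1} to the observation that the family of local regular super-martingales relative to $M$ is closed under finite nonnegative linear combinations, and then to invoke Theorem \ref{mmars1} term by term. By construction, any $\bar f \in V = \bigcup_{\xi \in A_0} L_\xi$ has the form $\bar f_n = f_n E^P\{\xi \mid \mathcal F_n\}$ for some $\{f_n\} \in F_0$ and some $\xi \in A_0$; note that, by Theorem \ref{fmars5}, the conditional expectation $E^P\{\xi \mid \mathcal F_n\}$ is in fact independent of the choice of $P \in M$ (they agree almost surely under all equivalent measures), so the process is unambiguously defined. Theorem \ref{mmars1} then asserts that $\bar f$ is a local regular super-martingale relative to $M$, and combining this with the sufficiency part of Theorem \ref{reww1} yields a decomposition $\bar f_n = \bar M_n - g_n$, where $\bar M$ is a martingale with respect to every $P \in M$ and $g$ is an adapted nonnegative nondecreasing process with $g_0 = 0$.

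For a general element $h = \sum_{i=1}^{N} C_i \bar f^{(i)}$ of $K$ with $C_i \geq 0$, I would apply the preceding decomposition to every summand, obtaining $\bar f^{(i)}_n = \bar M_n^{(i)} - g_n^{(i)}$ with the properties above, and set
\begin{equation*}
\bar M_n := \sum_{i=1}^{N} C_i \bar M_n^{(i)}, \qquad g_n := \sum_{i=1}^{N} C_i g_n^{(i)}.
\end{equation*}
Linearity of conditional expectation (applied to a finite sum) gives at once that $\bar M$ is a martingale under every $P \in M$. Because each $C_i \geq 0$ and each $g^{(i)}$ is nonnegative, nondecreasing, and vanishes at $n = 0$, the process $g$ inherits these three properties. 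Hence $h_n = \bar M_n - g_n$ is precisely the representation demanded by Definition 2, so $h$ is a local regular super-martingale.

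The integrability requirements of Definition 2 are inherited termwise: $E^P|h_n| \leq \sum_{i=1}^{N} C_i E^P|\bar f^{(i)}_n|$ and similarly $E^P g_n \leq \sum_{i=1}^{N} C_i E^P g_n^{(i)}$, so the finiteness of both quantities follows from the integrability already established for the individual $\bar f^{(i)}$ in Theorem \ref{mmars1}. I do not foresee any genuine obstacle: the content of the corollary is purely the closure of the class of local regular super-martingales under finite nonnegative linear combinations, which reduces immediately to the linearity of the Doob-type decomposition supplied by Theorems \ref{reww1} and \ref{mmars1}; the only point worth stating explicitly is the $P$-independence of $E^P\{\xi \mid \mathcal F_n\}$ for $\xi \in A_0$, which was the crucial content of Theorem \ref{fmars5}.
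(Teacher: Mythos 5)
Your proposal is correct and is essentially the paper's own argument: the paper simply declares the proof ``evident,'' the evident content being exactly what you spell out, namely that each $\bar f_i \in V$ is local regular by Theorem \ref{mmars1} (with the $P$-independence of $E^P\{\xi|{\cal F}_m\}$ from Theorem \ref{fmars5}), and a finite nonnegative linear combination of local regular super-martingales is local regular because the martingale and nondecreasing parts of the decompositions add termwise.
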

\begin{proof} The proof is evident.
\end{proof}

\begin{thm}\label{mmars9} On a measurable space $\{\Omega, {\cal F}\}$ and a filtration ${\cal F}_n$ on it, let $M$ be an arbitrary convex set of equivalent 
measures. 
 Suppose that  $\{f_m, {\cal F}_m\}_{m=0}^\infty$ is a nonnegative uniformly integrable super-martingale relative to a convex set of equivalent  measures  $M,$ then 
the necessary and sufficient conditions for it  to be a local regular one is belonging it to the set $K.$
\end{thm}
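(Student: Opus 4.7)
The sufficiency is immediate from Corollary \ref{fdr1}, so the content lies in the converse. My plan for necessity is to show that the local-regular decomposition $f_m = \bar M_m - g_m$ already exhibits $f_m$ as an element of $K$, provided the martingale part $\bar M_m$ can be identified as $f_0\,E^P\{\xi|{\cal F}_m\}$ for a suitable $\xi \in A_0$. The decreasing part $-g_m$ will then be absorbed into $L_1$ via the trivial identity $E^P\{1|{\cal F}_m\} = 1$.

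By local regularity there exist an adapted nondecreasing nonnegative process $\{g_m\}$ with $g_0 = 0$ and a martingale $\{\bar M_m\}$ relative to every $P \in M$ with $f_m + g_m = \bar M_m$. The first step is to justify passage to the limit $m \to \infty$: uniform integrability gives $f_m \to f_\infty$ in $L^1(P)$ for each $P$, while $E^P g_m \le f_0$ together with monotonicity yields $g_m \uparrow g_\infty$ with $E^P g_\infty \le f_0$ (convergence in $L^1(P)$ following from dominated convergence). Consequently $\bar M_m$ is uniformly integrable, $\bar M_m \to \bar M_\infty := f_\infty + g_\infty$, and $\bar M_m = E^P\{f_\infty + g_\infty \mid {\cal F}_m\}$. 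Evaluating at $m = 0$ gives $E^P(f_\infty + g_\infty) = f_0$ for every $P \in M$. Assuming $f_0 > 0$ (the case $f_0 = 0$ forces $f_m \equiv 0 \in K$), the random value $\xi := (f_\infty + g_\infty)/f_0$ is nonnegative with $E^P \xi = 1$ for all $P \in M$, i.e.\ $\xi \in A_0$.

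Then $\bar M_m = f_0\,E^P\{\xi \mid {\cal F}_m\}$ and I obtain
$$f_m = f_0\,E^P\{\xi \mid {\cal F}_m\} \;+\; (-g_m)\cdot E^P\{1 \mid {\cal F}_m\}.$$
The first summand belongs to $L_\xi$ with the constant $F_0$-process $h_m \equiv f_0$ (trivially non-increasing and integrable against $\xi$); the second belongs to $L_1$ with $F_0$-process $-g_m$, which is adapted, non-increasing because $g_m$ is nondecreasing, and satisfies $E^P|-g_m| = E^P g_m \le f_0 < \infty$. Both summands lie in $V$, so with $C_1 = C_2 = 1$ I obtain $\{f_m\} \in K$. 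The main technical step, and the only place where uniform integrability is used essentially, will be the simultaneous identification of $\bar M_\infty = f_\infty + g_\infty$ with $E^P(f_\infty+g_\infty) = f_0$ holding for every $P \in M$ at once; once $\xi \in A_0$ is produced, the remaining rewriting is bookkeeping.
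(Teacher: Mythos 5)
Your proposal is correct and follows essentially the same route as the paper: pass to the limit in the decomposition $f_m=\bar M_m-g_m$ using uniform integrability, set $\xi=(f_\infty+g_\infty)/f_0\in A_0$ so that $\bar M_m=f_0E^P\{\xi|{\cal F}_m\}$, and write $f$ as an element of $L_\xi+L_1\subset K$ with the constant process $f_0$ and the non-increasing process $-g_m$. The only additions beyond the paper's argument are minor (explicit treatment of the degenerate case $f_0=0$, which the paper tacitly excludes).
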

\begin{proof}
Necessity.  It is evident  that if  $\{f_m, {\cal F}_m\}_{m=0}^\infty$ belongs to $K,$ then it is a local regular super-martingale.

 Sufficiency. Suppose that  $\{f_m,{\cal F}_m\}_{m=0}^\infty$ is a local regular super-martingale. Then there exists  nonnegative adapted process $\{\bar g_m^0\}_{m=1}^ \infty,  \ E^P\bar g_m^0< \infty, \ m=\overline{1, \infty}, $  and a martingale  $\{M_m\}_{m=0}^ \infty,$
such that 
\begin{eqnarray}\label{mmars8}
f_m=M_m - \sum\limits_{i=1}^m\bar g_i^0, \quad  m=\overline{0, \infty}. 
\end{eqnarray}
Then $M_m \geq 0, \ m=\overline{0, \infty}, \ E^P M_m <\infty, \ P\in M.$
Since $0< E^PM_m=f_0< \infty$ we have $E^P\sum\limits_{i=1}^m\bar g_i^0< f_0.$ Let us put $g_{\infty}=\lim\limits_{m \to \infty}\sum\limits_{i=1}^m\bar g_i^0.$
Using the uniform integrability of $f_m,$ we can pass to the limit in the equality
\begin{eqnarray}\label{apm23}
 E^P(f_m +\sum\limits_{i=1}^m\bar g_i^0)=f_0, \quad P \in M,
\end{eqnarray}
as $m \to \infty$.
Passing to the limit in the last equality, as $m \to \infty,$  we obtain
\begin{eqnarray}\label{apm23}
E^P(f_\infty +g_{\infty})=f_0, \quad P \in M.
\end{eqnarray}
Introduce into consideration a  random value $\xi=\frac{f_\infty +g_{\infty}}{f_0}.$
Then $E^P\xi=1, \ P \in M.$   From here we obtain that  $\xi \in A_0$ and 
\begin{eqnarray}\label{apm24}
M_m=f_0E^P\{\xi|{\cal F}_m\}, \ m=\overline{0, \infty}.
\end{eqnarray}
 Let us put $\bar f_m^2=-\sum\limits_{i=1}^m\bar g_i^0. $ It is easy to see that the adapted random process $\bar f_2=\{\bar f_m^2, {\cal F}_m\}_{m=0}^\infty$ belongs to $F_0.$ Therefore,
for the super-martingale $f=\{ f_m,{\cal F}_m\}_{m=0}^\infty$ the representation 
$$f=\bar f_1+ \bar f_2,$$
is valid, where $\bar f_1=\{f_0E^P\{\xi|{\cal F}_m\}, {\cal F}_m\}_{m=0}^ \infty$  belongs to $L_{\xi}$
with  $ \xi = \frac{f_\infty +g_{\infty}}{f_0}$  and $ f_m^1=f_0, \ m=\overline{0,\infty}.$ The same is valid for $\bar f_2$ with $\xi=1.$ This implies that $f$ belongs to the set $K.$  Theorem 
\ref{mmars9} is proved.
\end{proof}
\begin{thm}\label{9mmars9} On a measurable space $\{\Omega, {\cal F}\}$ and a filtration ${\cal F}_n$ on it, let $M$ be an arbitrary convex set of equivalent 
measures. Suppose that  the super-martingale $\{f_m, {\cal F}_m\}_{m=0}^\infty$   relative to the convex set of equivalent  measures  $M$ satisfy conditions
\begin{eqnarray}\label{99mmars88}
|f_m|\leq C \xi_0, \quad m=\overline{1, \infty}, \quad \xi_0 \in A_0, \quad 0<C<\infty,
\end{eqnarray}
 then the necessary and sufficient conditions for it  to be a local regular one is belonging it to the set $K.$
\end{thm}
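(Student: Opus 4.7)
The necessity is immediate from Corollary \ref{fdr1}, since any element of $K$ is already a local regular super-martingale. All the content is in the sufficiency direction, and my plan is to reduce it to the nonnegative uniformly integrable case already treated by Theorem \ref{mmars9}, via a shift by the closed martingale generated by $\xi_0$.

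By Theorem \ref{fmars5}, the process $\eta_m := E^P\{\xi_0|\mathcal{F}_m\}$ does not depend on the choice of $P \in M$ and is an $M$-martingale. I set $\tilde f_m := f_m + C\eta_m$. This is a super-martingale relative to every $P \in M$ as the sum of the super-martingale $f$ and a martingale, and $\sup_{P \in M} E^P|\tilde f_m| \leq 2C$. The two properties needed to invoke Theorem \ref{mmars9} on $\tilde f$ are nonnegativity and uniform $P$-integrability. For the former, since $f_m$ is $\mathcal{F}_m$-measurable and $-C\xi_0 \leq f_m$ pointwise, conditioning on $\mathcal{F}_m$ gives $f_m = E^P\{f_m|\mathcal{F}_m\} \geq -C\eta_m$, whence $\tilde f_m \geq 0$. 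For the latter, the domination $\tilde f_m \leq C\xi_0 + C\eta_m$ does the job: $\xi_0$ is $P$-integrable and $\eta_m$ is a closed $P$-martingale, hence uniformly $P$-integrable, while $P(\tilde f_m > N) \leq 2C/N$ uniformly in $m$. Theorem \ref{mmars9} then yields $\tilde f \in K$.

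To finish, I would observe that the constant process $\{f^*_m \equiv -C\}_{m=0}^\infty$ lies in $F_0$ (trivially non-increasing and everywhere finite) and satisfies $E^P\xi_0 \cdot |f^*_m| = C < \infty$, so $\{-C\eta_m\}_{m=0}^\infty \in L_{\xi_0} \subset V$, hence $-C\eta \in K$. Writing $f = \tilde f + (-C\eta)$ and appending the single term $1 \cdot (-C\eta)$ to any decomposition of $\tilde f$ as a positive combination of elements of $V$ exhibits $f$ itself as such a combination, so $f \in K$.

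The main subtlety is to choose the shift so that three things happen simultaneously: the shifted process becomes nonnegative, it remains uniformly $P$-integrable for every $P \in M$, and the subtracted shift itself lies in $K$ so that the reduction is lossless. It is precisely the $A_0$-type bound $|f_m| \leq C\xi_0$, rather than a cruder pointwise or norm bound, that makes all three align, and this is the only delicate point of the argument.
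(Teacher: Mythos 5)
Your proof is correct, and it shares the paper's key device --- shifting $f$ by $C$ times the closed martingale $\eta_m=E^P\{\xi_0|{\cal F}_m\}$, with nonnegativity of the shifted process obtained by conditioning $f_m+C\xi_0\ge 0$ on ${\cal F}_m$, exactly as the paper does --- but you package it differently. The paper does not invoke Theorem \ref{mmars9}; it repeats that theorem's limit argument with the shift built in: starting from $f_m=M_m-\sum_{i=1}^m\bar g_i^0$ it shows $M_m+C\eta_m\ge 0$, passes to the limit using uniform integrability, sets $\xi_1=\frac{f_\infty+C\xi_0+g_\infty}{f_0+C}\in A_0$, and writes the explicit three-term representation $f_m=-C\,\eta_m+(f_0+C)E^P\{\xi_1|{\cal F}_m\}-\sum_{i=1}^m\bar g_i^0$, which exhibits $f\in K$. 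You instead apply Theorem \ref{mmars9} as a black box to $\tilde f_m=f_m+C\eta_m$ (after verifying nonnegativity and uniform $P$-integrability, both argued correctly, the latter by domination of $\tilde f_m$ by the uniformly integrable family $C\xi_0+C\eta_m$ together with the Markov bound) and then subtract the shift, observing that $-C\eta\in L_{\xi_0}\subset V$ so that $K$ absorbs the extra term; unfolding your reduction reproduces the paper's $\xi_1$, so the resulting decompositions coincide, but your version is shorter and avoids re-proving the passage to the limit. One small step you should state explicitly: to extract $\tilde f\in K$ from Theorem \ref{mmars9} you need $\tilde f$ to be local regular, which is immediate --- the increasing process $g$ witnessing the local regularity of $f$ also serves for $\tilde f$, since $f_m+g_m+C\eta_m$ is again a martingale for every $P\in M$ and $\sup_{P\in M}E^P|\tilde f_m|\le 2C$ --- but it is nowhere mentioned in your write-up.
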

\begin{proof} The necessity is evident.\\
 Sufficiency.  
 Suppose that  $\{f_m,{\cal F}_m\}_{m=0}^\infty$ is a local regular super-martingale. Then there exists a nonnegative adapted random process $\{\bar g_m^0\}_{m=1}^ \infty,  \ E^P\bar g_m^0< \infty, \ m=\overline{1, \infty}, $  and a martingale  $\{M_m\}_{m=0}^ \infty, \ E^P| M_m| <\infty, \  m=\overline{1, \infty}, \  P\in M,$
such that 
\begin{eqnarray}\label{9mmars8}
f_m=M_m - \sum\limits_{i=1}^m\bar g_i^0, \quad  m=\overline{0, \infty}. 
\end{eqnarray}
The inequalities $f_m +C\xi_0 \geq 0, \  m=\overline{1, \infty}, $ give the inequalities
\begin{eqnarray}\label{8mmars8}
 f_m+C E^P\{\xi_0|{\cal F}_m\}\geq 0, \quad  m=\overline{0, \infty}.
\end{eqnarray}
From the inequalities (\ref{99mmars88}) it follows that the super-martingale $\{f_m,{\cal F}_m\}_{m=0}^\infty$ is a uniformly integrable one  relative  to the convex set of  equivalent measures $M$.  The martingale  $\{E^P\{\xi_0|{\cal F}_m\}, {\cal F}_m\}_{m=0}^\infty$ relative to the convex set of  equivalent measures $M$ is also uniformly integrable one.

Then $M_m+C E^P\{\xi_0|{\cal F}_m\} \geq 0, \ m=\overline{0, \infty}. $
Since $0< E^P[M_m+C E^P\{\xi_0|{\cal F}_m\}]=f_0+C< \infty$ we have $E^P\sum\limits_{i=1}^m\bar g_i^0< f_0+C.$ Let us put $g_{\infty}=\lim\limits_{m \to \infty}\sum\limits_{i=1}^m\bar g_i^0.$
Using the uniform integrability of $f_m$ and $\sum\limits_{i=1}^m\bar g_i^0$ we can pass to the limit in the equality
\begin{eqnarray}\label{apm28}
 E^P(f_m +\sum\limits_{i=1}^m\bar g_i^0)=f_0, \quad P \in M,
\end{eqnarray}
as $m \to \infty$.
Passing to the limit in the last equality, as $m \to \infty,$  we obtain
\begin{eqnarray}\label{apm24}
E^P(f_\infty +g_{\infty})=f_0, \quad P \in M.
\end{eqnarray}
Introduce into consideration a  random value $\xi_1=\frac{f_\infty+C \xi_0+g_{\infty}}{f_0+C}\geq 0.$
Then $E^P\xi_1=1, \ P \in M.$ From here we obtain that  $\xi_1 \in A_0$ and for the super-martingale  $f=\{ f_m,{\cal F}_m\}_{m=0}^\infty$   the representation 
\begin{eqnarray}\label{apm25}
 f_m=f_m^0E^P\{\xi_0|{\cal F}_m\}+f_m^1 E^P\{\xi_1|{\cal F}_m\}+f_m^2E^P\{\xi_2|{\cal F}_m\} , \ m=\overline{0, \infty},
\end{eqnarray}
is valid, where $f_m^0=-C, \ f_m^1=f_0+C, \ f_m^2=-\sum\limits_{i=1}^m\bar g_i^0, \ m=\overline{0, \infty}, \ \xi_2=1.$
From  the last representation it follows that  the super-martingale $f=\{ f_m,{\cal F}_m\}_{m=0}^\infty$ belongs to the set $K.$ Theorem 
\ref{9mmars9} is proved.
\end{proof}

\begin{cor}\label{mars16}  Let $f_N, \  N< \infty,$ be a ${\cal F}_N$-measurable integrable random value,  $\sup\limits_{P \in M} E^P|f_N| < \infty,$ and let there exist $\alpha_0 \in R^1$ such that
$$ -\alpha_0  M_N+ f_N \leq 0, \quad \omega \in \Omega, $$
where $\{ M_m, {\cal F}_m\}_{m=0}^\infty=\{E^P\{\xi|{\cal F}_m\}, {\cal F}_m\}_{m=0}^\infty, \ \xi \in A_0. $ 
Then a  super-martingale $\{ f_m^0+ \bar f_m\}_{m=0}^\infty$ is a local regular one relative to  the convex set of equivalent measures $M,$ where
\begin{eqnarray}\label{apm26}
f_m^0=\alpha_0 M_m,  
\end{eqnarray}
\begin{eqnarray}\label{apm27}
\bar f_m=
\left\{\begin{array}{l l} 0, & m<N, \\
f_N - \alpha_0 M_N, & m \geq N.  
 \end{array} \right. 
\end{eqnarray}
\end{cor}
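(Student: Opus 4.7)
\emph{Proof proposal.} The plan is to apply Theorem~\ref{reww1} directly by exhibiting an explicit adapted nonnegative increment process $\{\bar g^0_m\}$. Writing $X_m=f_m^0+\bar f_m$, the key observation is that $X_m=\alpha_0 M_m$ for $m<N$ and $X_m=\alpha_0 M_m+(f_N-\alpha_0 M_N)$ for $m\geq N$, so the only non-trivial decrement of $X$ occurs at the single step from $N-1$ to $N$, where the pointwise hypothesis $f_N\leq \alpha_0 M_N$ supplies the needed sign.

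First I would verify that $\{X_m,\mathcal{F}_m\}_{m=0}^\infty$ is a super-martingale relative to every $P\in M$. By Theorem~\ref{fmars5} the process $\{M_m\}$ is a martingale with respect to all of $M$, so $X$ inherits the martingale property across every step $m\neq N$. At the critical step,
\begin{equation*}
E^P\{X_N\mid\mathcal{F}_{N-1}\}=E^P\{f_N\mid\mathcal{F}_{N-1}\}\leq \alpha_0 E^P\{M_N\mid\mathcal{F}_{N-1}\}=\alpha_0 M_{N-1}=X_{N-1},
\end{equation*}
using $-\alpha_0 M_N+f_N\leq 0$ pointwise. The uniform bound $\sup_{P\in M}E^P|X_m|<\infty$ follows from $E^P M_m=1$ together with $\sup_{P\in M}E^P|f_N|<\infty$.

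Next I would set $\bar g^0_m=0$ for $m\neq N$ and $\bar g^0_N=\alpha_0 M_N-f_N\geq 0$, which is $\mathcal{F}_N$-measurable and satisfies $\sup_{P\in M}E^P\bar g^0_N\leq |\alpha_0|+\sup_{P\in M}E^P|f_N|<\infty$. Verifying (\ref{o1}) is then routine: both sides vanish for $m\neq N$ by the martingale property of $\{M_m\}$, while for $m=N$ the identity $X_{N-1}-E^P\{X_N\mid\mathcal{F}_{N-1}\}=E^P\{\alpha_0 M_N-f_N\mid\mathcal{F}_{N-1}\}=E^P\{\bar g^0_N\mid\mathcal{F}_{N-1}\}$ is just linearity applied to the martingale property of $\{M_m\}$. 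An application of Theorem~\ref{reww1} then yields local regularity.

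I do not anticipate a substantive obstacle. The only subtlety is that $\alpha_0$ is allowed to have arbitrary sign, so trying to represent $\{\alpha_0 M_m\}$ inside the cone $K$ of Theorem~\ref{mmars9} through Corollary~\ref{fdr1} would be awkward since the coefficients $C_i$ there must be nonnegative; invoking Theorem~\ref{reww1} directly bypasses this issue and keeps the verification completely elementary.
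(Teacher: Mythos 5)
Your proof is correct, and it reaches the conclusion by a slightly more self-contained route than the paper. The paper's own proof is a one-line appeal to the Section 4 machinery: it notes that $\bar f_{m-1}-\bar f_m\geq 0$, so that $\{\bar f_m\}_{m=0}^\infty$ is a non-increasing adapted process and hence (Theorem \ref{mmars1} with $\xi=1$, i.e.\ the second part of Corollary \ref{hg1}) a local regular super-martingale, while $\{\alpha_0 M_m\}_{m=0}^\infty$ is a local regular martingale by the first part of Corollary \ref{hg1}; the sum is then local regular. You instead verify the criterion of Theorem \ref{reww1} directly, exhibiting the compensator $\bar g^0_m=0$ for $m\neq N$ and $\bar g^0_N=\alpha_0 M_N-f_N\geq 0$, which is precisely the increment $(\bar f_{N-1}-\bar f_N)E^P\{1|{\cal F}_N\}$ that the proof of Theorem \ref{mmars1} would produce — so the mathematical content coincides, but your version makes the single-step compensator and the integrability bounds explicit and bypasses the intermediate results. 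One small correction to your closing remark: there is no sign obstruction in the paper's route, because the class $F_0$ only requires the multiplier process to be non-increasing (not nonnegative), and Corollary \ref{hg1} explicitly allows $\alpha\in R^1$; the nonnegativity of the coefficients $C_i$ in the cone $K$ of Corollary \ref{fdr1} is irrelevant here since the constant process $f_m=\alpha_0$ already lies in $F_0$ for either sign of $\alpha_0$. Your verification at the step $m=N$, the trivial steps $m\neq N$ (using ${\cal F}_N$-measurability of $f_N-\alpha_0 M_N$ for $m>N$), and the bounds $\sup_{P\in M}E^P|X_m|\leq 2|\alpha_0|+\sup_{P\in M}E^P|f_N|$ and $\sup_{P\in M}E^P\bar g^0_N\leq|\alpha_0|+\sup_{P\in M}E^P|f_N|$ are all sound, and the appeal to Theorem \ref{fmars5} for the martingale property of $\{M_m\}$ under every $P\in M$ is the right ingredient.
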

\begin{proof}  It is evident that $\bar f_{m-1} -\bar f_m \geq 0, \  m=\overline{0,\infty}.$
Therefore, the super-martingale
\begin{eqnarray}\label{apm27}
f_m^0+ \bar f_m=
\left\{\begin{array}{l l} \alpha_0 M_m, & m<N, \\
f_N , & m= N,  \\
f_N - \alpha_0 M_N+\alpha_0 M_m, & m>N
 \end{array} \right. 
\end{eqnarray}
is a local regular one relative to  the convex set of equivalent measures $M.$
 Corollary \ref{mars16} is proved.
\end{proof}

\section{Optional decomposition for  super-martingales relative to the complete convex set of equivalent measures.}

 In this section we introduce the notion of complete set of equivalent measures and prove that  non negative super-martingales are local regular ones with respect to this set of measures. For this purpose  we are needed the next auxiliary statement.
\begin{thm}\label{nick1}
The necessary and sufficient conditions of the local regularity of the nonnegative super-martingale $\{f_m, {\cal F}_m\}_{m=0}^\infty$ relative to a convex set of equivalent measures $M$ are the existence of ${\cal F}_m$-measurable random values $\xi_m^0 \in A_0, \  m=\overline{1, \infty},$ such that
\begin{eqnarray}\label{nick2}
\frac{f_m}{f_{m-1}} \leq \xi_m^0, \quad E^P\{\xi_m^0|{\cal F}_{m-1}\}=1, \quad P\in M, \quad m=\overline{1, \infty}.
\end{eqnarray}
\end{thm}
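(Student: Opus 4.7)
My plan is to reduce everything to the characterization of local regularity given by Theorem~\ref{reww1}, which says that local regularity of $\{f_m,\mathcal F_m\}$ is equivalent to the existence of a nonnegative adapted process $\{\bar g_m^0\}$ with $E^P\{\bar g_m^0|\mathcal F_{m-1}\}=f_{m-1}-E^P\{f_m|\mathcal F_{m-1}\}$ for every $P\in M$. The claimed criterion should correspond to the natural normalization $\xi_m^0=(f_m+\bar g_m^0)/f_{m-1}$; so the two directions are really two elementary manipulations of this identity, plus the verification that $\xi_m^0\in A_0$.

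\textbf{Sufficiency.} Assume $\xi_m^0\in A_0$ is $\mathcal F_m$-measurable, satisfies $f_m\le f_{m-1}\xi_m^0$, and $E^P\{\xi_m^0|\mathcal F_{m-1}\}=1$ for every $P\in M$. Set
\begin{equation*}
\bar g_m^0=f_{m-1}\xi_m^0-f_m\ge 0,\qquad m=\overline{1,\infty}.
\end{equation*}
Then $\bar g_m^0$ is $\mathcal F_m$-measurable, and pulling the $\mathcal F_{m-1}$-measurable factor $f_{m-1}$ outside yields
\begin{equation*}
E^P\{\bar g_m^0|\mathcal F_{m-1}\}=f_{m-1}E^P\{\xi_m^0|\mathcal F_{m-1}\}-E^P\{f_m|\mathcal F_{m-1}\}=f_{m-1}-E^P\{f_m|\mathcal F_{m-1}\}.
\end{equation*}
Taking expectations gives $E^P\bar g_m^0\le E^Pf_{m-1}\le f_0<\infty$, uniformly in $P\in M$, because $\{f_m\}$ is a nonnegative super-martingale. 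Theorem~\ref{reww1} then produces the required martingale decomposition, so $\{f_m,\mathcal F_m\}$ is local regular.

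\textbf{Necessity.} Conversely, let $\{f_m,\mathcal F_m\}$ be local regular. By Theorem~\ref{reww1}, pick $\bar g_m^0\ge 0$, $\mathcal F_m$-measurable, with $E^P\{f_m+\bar g_m^0|\mathcal F_{m-1}\}=f_{m-1}$ for every $P\in M$. On the set $\{f_{m-1}>0\}$ define
\begin{equation*}
\xi_m^0=\frac{f_m+\bar g_m^0}{f_{m-1}},
\end{equation*}
and on $\{f_{m-1}=0\}$ set $\xi_m^0\equiv 1$. Here the definition is legitimate: since $\{f_m\}$ is a nonnegative super-martingale, $f_{m-1}=0$ forces $E^P\{f_m|\mathcal F_{m-1}\}=0$ and hence $f_m=0$ a.s.\ on that set, and one can likewise show $\bar g_m^0=0$ there from the above conditional identity. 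Then $\xi_m^0\ge 0$ and $\mathcal F_m$-measurable, and by construction
\begin{equation*}
E^P\{\xi_m^0|\mathcal F_{m-1}\}=1,\qquad \frac{f_m}{f_{m-1}}\le\xi_m^0,\qquad P\in M,
\end{equation*}
both trivially on $\{f_{m-1}=0\}$ and by dividing the conditional identity by $f_{m-1}$ on $\{f_{m-1}>0\}$. Taking the outer expectation of the first equality gives $E^P\xi_m^0=1$ for every $P\in M$, so $\xi_m^0\in A_0$.

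\textbf{Expected obstacles.} The computation itself is short; the only genuinely delicate point is the behaviour on $\{f_{m-1}=0\}$, where the ratio is ill-defined. The argument above resolves this by using that a nonnegative super-martingale starting from $0$ stays at $0$ and forces the associated increment $\bar g_m^0$ to vanish, after which the extension $\xi_m^0\equiv 1$ is consistent with the two required identities. No other step is substantial, so the result is essentially a direct rewrite of Theorem~\ref{reww1} in multiplicative form.
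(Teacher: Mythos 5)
Your proof is correct and follows essentially the same route as the paper: sufficiency by setting $\bar g_m^0=f_{m-1}\xi_m^0-f_m\ge 0$ and producing the martingale via Theorem~\ref{reww1}, necessity by the multiplicative normalization $\xi_m^0=(f_m+\bar g_m^0)/f_{m-1}$. The only difference is technical: the paper disposes of the set $\{f_{m-1}=0\}$ by the reduction to $f_m+a$, $a>0$, whereas you treat that set directly by showing $f_m=\bar g_m^0=0$ there and setting $\xi_m^0=1$ — an equally valid (arguably cleaner) handling of the same point.
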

\begin{proof} The necessity.  Without loss of generality, we assume that $f_m\geq a$ for a certain real number $a>0.$ Really, if it is not so, then we can come to the consideration of the super-martingale  $\{f_m+a, {\cal F}_m\}_{m=0}^\infty.$  Thus,   let  $\{f_m, {\cal F}_m\}_{m=0}^\infty$ be a   nonnegative  local regular super-martingale. Then there exists a nonnegative adapted random process
$\{g_m\}_{m=0}^\infty, \ g_0=0,$ such that $\sup\limits_{P \in M}E^Pg_m<\infty,$
\begin{eqnarray}\label{nick3}
f_{m-1} - E^P\{f_m|{\cal F}_{m-1}\} =E^P\{g_m|{\cal F}_{m-1}\}, \quad P \in M, \quad m=\overline{1, \infty}.
\end{eqnarray}
Let us put $\xi_m^0=\frac{f_m+g_m}{f_{m-1}}, \ m=\overline{1, \infty}.$ Then  $\xi_m^0 \in A_0 $ and 
from the equalities  (\ref{nick3}) we obtain $E^P\{\xi_m^0|{\cal F}_{m-1}\}=1, \ P \in M, \ m=\overline{1, \infty}.$
It is evident that the inequalities (\ref{nick2}) are valid.

 The sufficiency. Suppose that the conditions of  Theorem \ref{nick1} are valid.
Then   $ f_m \leq f_{m-1}+ f_{m-1}(\xi_m^0 -1).$
Introduce the  denotation  $g_m= -f_m+ f_{m-1}\xi_m^0.$ Then  $g_m \geq 0, $  $\sup\limits_{P \in M}E^Pg_m \leq \sup\limits_{P \in M}E^Pf_m + \sup\limits_{P \in M}E^Pf_{m-1}<\infty, \ m=\overline{1, \infty}.$  The last  equality  and inequalities  give
\begin{eqnarray}\label{nick4}
f_m=f_0+\sum\limits_{i=1}^m f_{i-1}(\xi_i^0 -1) - \sum\limits_{i=1}^m g_{i}, \quad \ m=\overline{1, \infty}.
\end{eqnarray}
Let us consider  the random process  $\{M_m, {\cal F}_m\}_{m=0}^\infty,$ where $M_m=f_0+\sum\limits_{i=1}^m f_{i-1}(\xi_i^0 -1).$ Then  $E^P\{M_m| {\cal F}_{m-1}\}$ $=M_{m-1}, \ P\in M, \ m=\overline{1, \infty}.$
Theorem \ref{nick1} is proved.
\end{proof}

\subsection{Space of  finite set of elementary events.}
In this subsection we assume that  a space of elementary events  $\Omega$ is finite one, that is,  $N_0=|\Omega|< \infty,$  and we give a new proof of the optional decomposition for  super-martingales relative to the complete convex set of equivalent measures. This proof  does not use topological arguments as  in  \cite{WalterSchacher}.

 Let  ${\cal F}$ be   a certain algebra of subsets of the set   $\Omega$  and let  ${\cal F}_n \subset {\cal F}_{n+1} \subset {\cal F} $ be an increasing set of  algebras, where   ${\cal F}_0 =\{\emptyset, \Omega\}, $
  ${\cal F}_N = {\cal F}. $ Denote  $M$ a convex set of equivalent measures on a measurable space $\{\Omega, {\cal F}\}.$ Further, we assume that the set $A_0$ contains  an element $\xi_0\neq 1.$  It is evident that every algebra ${\cal F}_n$ is generated by sets $A_i^n, \   i=\overline{1, N_n}, A_i^n\cap A_j^n=\emptyset, \ i \neq j, \ N_n<\infty, \ \bigcup\limits_{i=1}^{N_n}A_i^n=\Omega, \ n=\overline{1,N}.$
Let  $m_n=E^P\{\xi_0|{\cal F}_n\}, \ P \in M,  \   n=\overline{1, N}.$ Then for $m_n$ the representation 
\begin{eqnarray}\label{1myk1}
m_n=\sum\limits_{i=1}^{N_n}m_i^n\chi_{A_i^n}(\omega), \quad n=\overline{1,N},
\end{eqnarray} 
is valid.
Consider the difference $d^n(\omega)=m_n -m_{n-1}.$ 
Then 
\begin{eqnarray}\label{1myk3}
d^n(\omega)=\sum\limits_{j=1}^{N_n}d_{j}^n\chi_{A_j^n}(\omega)=\sum\limits_{j\in I^-_n}d_{j}^n\chi_{A_j^n}(\omega)+\sum\limits_{j\in I^+_n}d_{j}^n\chi_{A_j^n}(\omega), 
\end{eqnarray}
\begin{eqnarray}\label{1myk4}
\sum\limits_{j\in I^-_n}\chi_{A_j^n}(\omega)+
\sum\limits_{j\in I^+_n}\chi_{A_j^n}(\omega)=1, 
\end{eqnarray}
where $d_j^n\leq 0,$ as $ j \in I^-_n,$ and $d_j^n> 0$ for $j \in I^+_n.$
From the equalities (\ref{1myk3}), (\ref{1myk4})   we obtain
\begin{eqnarray}\label{1myk5}
E^Pd^n(\omega)=\sum\limits_{j\in I^-_n}d_{j}^nP(A_j^n)+\sum\limits_{j\in I^+_n}d_{j}^nP(A_j^n)=0, \quad P  \in M,
\end{eqnarray}
\begin{eqnarray}\label{1myk6}
\sum\limits_{j\in I^-_n}P(A_j^n)+
\sum\limits_{j\in I^+_n}P(A_j^n)=1, \quad  \in M.
\end{eqnarray}
Denote  $M_n$ the contraction of the set of measures $M$ on the algebra ${\cal F}_n.$  Introduce into the set $M_n $  the metrics
\begin{eqnarray}\label{1myk7} 
\rho_n(P_1,P_2)=\max\limits_{B}\sum\limits_{s=1}^k|P_1(B_s^n) - P_2(B_s^n)|, \ P_1, P_2 \in M_n,  \ n=\overline{1,N},
\end{eqnarray}
where $B=\{B_1^n, \ldots, B_k^n\}$  is a partition of $\Omega$ on $k$ subsets,  that is,  $B_i^n \in {\cal F}_n, \  i=\overline{1,k}, \
 B_i^n \cap B_j^n=\emptyset, i\neq j, \bigcup\limits_{i=1}^k B_i^n =\Omega.$
The maximum  in the formula (\ref{1myk7}) 
is all over the partitions of the set  $\Omega,$ belonging to the $\sigma$-algebra ${\cal F}_n.$

\begin{defin}\label{1myk8} On a measurable space
$ \{\Omega, {\cal F}\},$  a convex  set of equivalent  measure $M$ we call complete if for every $1 \leq n \leq N$
the closure of the set of measures $M_n$ in the metrics (\ref{1myk7}) contains the measures
\begin{eqnarray}\label{1myk9} 
P_{ij}^n(A)= \left\{\begin{array}{l l} 0, & A\neq A_i^n, A_j^n,\\
                                                    \frac{d_j^n}{-d_i^n +d_j^n}, & A=A_i^n,\\
                                                      \frac{-d_i^n}{-d_i^n +d_j^n}, & A=A_j^n
                                                     \end{array}\right.       
\end{eqnarray}
for every  $i \in I^-_n$ and $ j \in I^+_n.$
\end{defin}

\begin{lemma}\label{1myk10} Let a convex family of equivalent measures $M$ be a complete one and the set $A_0$ contains an element $\xi_0\neq 1.$ Then for every non negative ${\cal F}_n$-measurable random value $\xi_n=\sum\limits_{i=1}^{N_n} C_i^n \chi_{A_i^n}$ there exists a real number $\alpha_n$ such that
\begin{eqnarray}\label{1myk11} 
\frac{ \sum\limits_{i=1}^{N_n} C_i^n \chi_{A_i^n} }{\sup\limits_{P \in M_n}\sum\limits_{i=1}^{N_n} C_i^n P(A_i^n)} \leq 1+\alpha_n (m_n - m_{n-1}), \quad 
n=\overline{1,N}.
\end{eqnarray}
\end{lemma}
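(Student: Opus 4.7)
Set $S := \sup_{P \in M_n} \sum_{i=1}^{N_n} C_i^n P(A_i^n)$. After dividing through by $S$, the required inequality becomes the finite pointwise system $C_i^n/S - 1 \le \alpha_n d_i^n$ for $i = 1, \dots, N_n$. Since $d_i^n > 0$ on $I_n^+$ and $d_i^n \le 0$ on $I_n^-$, the existence of a single scalar $\alpha_n$ satisfying this system reduces to the pairwise compatibility conditions
\begin{equation*}
\frac{C_j^n/S - 1}{d_j^n} \;\le\; \frac{C_i^n/S - 1}{d_i^n}, \qquad j \in I_n^+, \ i \in I_n^- \text{ with } d_i^n < 0,
\end{equation*}
together with $C_i^n \le S$ whenever $d_i^n = 0$. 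The degenerate situation in which one of $I_n^+$, $I_n^-$ is empty is forced by $E^P d^n = 0$ to coincide with $d^n \equiv 0$, and is treated separately.

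Clearing denominators with the correct signs converts each pairwise inequality into
\begin{equation*}
S \;\ge\; \frac{d_j^n}{d_j^n - d_i^n}\, C_i^n \;+\; \frac{-d_i^n}{d_j^n - d_i^n}\, C_j^n,
\end{equation*}
and the right-hand side is precisely $\int \xi_n\, dP_{ij}^n$, where $P_{ij}^n$ is the two-point measure from Definition \ref{1myk8}. The residual constraint $C_i^n \le S$ when $d_i^n = 0$ is subsumed by the same inequality: for any $j \in I_n^+$ one then has $d_j^n/(d_j^n - d_i^n) = 1$, so $P_{ij}^n$ collapses to the Dirac mass on $A_i^n$ and the right-hand side equals $C_i^n$.

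By the completeness hypothesis, each $P_{ij}^n$ lies in the $\rho_n$-closure of $M_n$, and the linear functional $P \mapsto \sum_k C_k^n P(A_k^n)$ is $\rho_n$-Lipschitz with constant $\max_k |C_k^n|$. Hence $S \ge \int \xi_n\, dP_{ij}^n$ for every admissible pair $(i,j)$, which verifies every pairwise compatibility condition and produces the desired $\alpha_n$ as any point of the resulting non-empty intersection of half-lines. The main obstacle is the sign bookkeeping in passing from the pointwise inequality to a pairwise interval constraint and recognising the interpolation weights $d_j^n/(d_j^n - d_i^n)$ and $-d_i^n/(d_j^n - d_i^n)$ as exactly the masses defining $P_{ij}^n$; once this identification is made, $\rho_n$-continuity of the linear expectation together with completeness immediately delivers the required bound on $S$.
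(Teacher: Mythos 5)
Your argument is essentially the paper's own proof: you normalize by $S=\sup_{P\in M_n}\sum_i C_i^nP(A_i^n)$, use $\rho_n$-continuity of the linear functional to pass to the closure $\bar M_n$, evaluate at the completeness measures $P_{ij}^n$ to get exactly the paper's pairwise inequalities, and your "nonempty intersection of half-lines" is just the paper's explicit choice $\alpha_n=\min_{\{i:\,d_i^n<0\}}\frac{1-f_i^n}{-d_i^n}$. The only difference is cosmetic (you even justify the $d_i^n=0$ case more carefully than the paper does), so this is correct and follows the same route.
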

 \begin{proof} 
On the set  $\bar M_n,$  the  functional $\varphi(P)=\sum\limits_{i=1}^{N_n} C_i^n P(A_i^n)$  is a continuous one, where  $ \bar M_n$ is the closure of the set $M_n$ in the metrics $\rho_n(P_1,P_2).$
From this it follows that  the equality
\begin{eqnarray}\label{1myk13}
\sup\limits_{P \in M_n}\sum\limits_{i=1}^{N_n} C_i^n P(A_i^n)=\sup\limits_{P \in \bar M_n}\sum\limits_{i=1}^{N_n} C_i^n P(A_i^n)
\end{eqnarray} 
is valid.
 Denote  $f_i^n=\frac{C_i^n}{\sup\limits_{P \in M_n}\sum\limits_{i=1}^{N_n} C_i^n P(A_i^n)}, \ i=\overline{1, N_n}.$ Then
\begin{eqnarray}\label{1myk14}
\sum\limits_{i=1}^{N_n} f_i^n P(A_i^n) \leq 1, \quad P \in \bar M_n.
\end{eqnarray} 
 For those $i \in I^-_n$ for which $d_i^n<0$ and those $j \in I^+_n$ for which $d_j^n>0$  the inequality (\ref{1myk14}) is as follows 
\begin{eqnarray}\label{1myk15}
\quad  f_i^n \frac{d_j^n}{-d_i^n +d_j^n} + \frac{-d_i^n}{-d_i^n+d_j^n}f_j^n\leq 1,\  d_i^n <0,  \  i \in I^-,  \ d_j ^n>0,  \  j \in I^+_n. 
\end{eqnarray} 
From  (\ref{1myk15})  we obtain the inequalities
\begin{eqnarray}\label{1myk16}
f_j^n \leq 1+\frac{1- f_i^n}{-d_i^n}d_j^n, \quad d_i^n <0,  \quad i \in I^-_n,  \quad d_j^n >0,  \quad  j \in I^+_n.
\end{eqnarray} 
Since the inequalities  (\ref{1myk16}) are valid for every $\frac{1- f_i^n}{-d_i^n},$ as $ d_i^n <0, $ and since the set of such  elements is finite,  then if to denote
\begin{eqnarray}\label{apm29}
 \alpha_n =\min_{\{i, \ d_i^n <0\}}\frac{1- f_i^n}{-d_i^n},
\end{eqnarray}
 then we have
\begin{eqnarray}\label{1myk17}
f_j^n \leq 1+\alpha_n d_j^n,   \quad  d_j^n >0,  \quad  j \in I^+_n.
\end{eqnarray} 
From the definition of  $\alpha_n$ we obtain the inequalities 
\begin{eqnarray}\label{apm30}
f_i^n \leq 1+\alpha_n d_i^n,  \quad   d_i^n <0, \quad  i \in I^-_n.
\end{eqnarray} 
Now if $d_i^n=0$ for some $ i \in I^-_n,$ then in this case $f_i^n \leq 1.$ All these inequalities give 
\begin{eqnarray}\label{1myk18}
f_i^n \leq 1+\alpha_n d_i^n,  \quad   i \in I^-_n\cup I^+_n.
\end{eqnarray} 
   Multiplying on $\chi_{A_i^n}$ the inequalities (\ref{1myk18}) and summing over all $ i \in I^-_n\cup I^+_n$ we obtain the needed inequality.
Lemma  \ref{1myk10}    is proved.
\end{proof}

\begin{thm}\label{1myk19} Suppose that the conditions of  Lemma \ref{1myk10}
are valid. Then  every non negative super-martingale  $\{f_m, {\cal F}_m\}_{m=0}^N$   relative to a convex set of equivalent measures $M,$  satisfying conditions
\begin{eqnarray}\label{apm31}
\frac{f_n}{f_{n-1}}\leq C_n<\infty, \quad n=\overline{1, N},
\end{eqnarray}
 is a local regular one, where $C_n, \ n=\overline{1, N}, $ are constants.
\end{thm}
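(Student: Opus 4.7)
The plan is to verify the criterion of Theorem~\ref{nick1}: I need to exhibit, for each $n=\overline{1,N}$, an ${\cal F}_n$-measurable $\xi_n^0 \in A_0$ with $E^P\{\xi_n^0|{\cal F}_{n-1}\}=1$ for every $P\in M$ and $f_n/f_{n-1}\leq\xi_n^0$. Such a family immediately makes $\{f_m,{\cal F}_m\}_{m=0}^N$ a local regular super-martingale, so the whole argument reduces to work at a single level $n$; the construction is uniform in $n$.

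Fix $n$ and set $\eta_n := f_n/f_{n-1}$. This is ${\cal F}_n$-measurable, nonnegative, and bounded by $C_n$ by hypothesis, so it has the piecewise-constant form required by Lemma~\ref{1myk10}. Applying that lemma produces a real number $\alpha_n$ with
\[
\eta_n \;\leq\; s_n\bigl(1+\alpha_n(m_n-m_{n-1})\bigr), \qquad s_n:=\sup_{P\in M_n}E^P\eta_n,
\]
and, since $\eta_n\geq 0$ and (in the nontrivial case) $s_n>0$, the factor $1+\alpha_n(m_n-m_{n-1})$ is pointwise nonnegative as well.

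The crucial observation is that the super-martingale property pins $s_n\leq 1$: pulling the ${\cal F}_{n-1}$-measurable factor $f_{n-1}$ through the conditional expectation gives $E^P\eta_n = E^P[E^P\{f_n|{\cal F}_{n-1}\}/f_{n-1}]\leq 1$ for every $P\in M$. I therefore define
\[
\xi_n^0 \;:=\; 1 + s_n\alpha_n(m_n-m_{n-1}).
\]
From $s_n\leq 1$ one computes $\xi_n^0 - s_n(1+\alpha_n(m_n-m_{n-1})) = 1-s_n\geq 0$, hence $\eta_n\leq\xi_n^0$; also $\xi_n^0\geq 1-s_n\geq 0$, so $\xi_n^0$ is nonnegative. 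By Theorem~\ref{fmars5} the process $m_n=E^P\{\xi_0|{\cal F}_n\}$ is a common martingale for every $P\in M$, so, with $s_n$ and $\alpha_n$ both constants, $E^P\{\xi_n^0|{\cal F}_{n-1}\}=1$ for every $P\in M$; taking unconditional expectations yields $E^P\xi_n^0=1$ and hence $\xi_n^0\in A_0$. Theorem~\ref{nick1} then delivers the local regularity of $\{f_m\}$.

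The one point that asks for a routine side argument is the set $\{f_{n-1}=0\}$, where $\eta_n$ is not directly defined. Since $\{f_m,{\cal F}_m\}_{m=0}^N$ is a nonnegative super-martingale, the inequality $E^P\{f_n|{\cal F}_{n-1}\}\leq f_{n-1}=0$ forces $f_n=0$ there, so I would simply set $\xi_n^0\equiv 1$ on this ${\cal F}_{n-1}$-measurable set and run the construction above on its complement. I do not anticipate a real obstacle: the proof is essentially a one-line corollary of Lemma~\ref{1myk10}, the key insight being that the super-martingale bound $s_n\leq 1$ is precisely what converts the \emph{unconditional} martingale $1+\alpha_n(m_n-m_{n-1})$ supplied by that lemma into the \emph{conditional} martingale $\xi_n^0$ required by Theorem~\ref{nick1}.
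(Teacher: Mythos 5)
Your proof is correct and follows essentially the paper's own argument: apply Lemma \ref{1myk10} to $\xi_n=f_n/f_{n-1}$, use the super-martingale property to get $\sup_{P\in M}E^P\xi_n\le 1$, turn the lemma's bound into $f_n/f_{n-1}\le \xi_n^0$ with $\xi_n^0$ of the form $1+\mathrm{const}\cdot(m_n-m_{n-1})$ (the paper takes $\xi_n^0=1+\alpha_n(m_n-m_{n-1})$, you take the rescaled $1+s_n\alpha_n(m_n-m_{n-1})$, which is the same device), and conclude by Theorem \ref{nick1}. Your explicit handling of the set $\{f_{n-1}=0\}$ is a harmless refinement that the paper leaves implicit.
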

\begin{proof} Consider the random value $\xi_n=\frac{f_n}{f_{n-1}}.$ Due to Lemma \ref{1myk10}
\begin{eqnarray}\label{apm32}
 \frac{\xi_n}{\sup\limits_{P \in M}E^P\xi_n} \leq 1+\alpha_n(m_n-m_{n-1})=\xi_n^0, \quad  n=\overline{1,N}.
\end{eqnarray}
It is evident that  $E^P\{\xi_n^0|{\cal F}_{n-1}\}=1, \ P \in M, \ n=\overline{1, N}.$ 
Since $\sup\limits_{P \in M} E^P\xi_n \leq 1,$ then 
\begin{eqnarray}\label{1myk20}
 \frac{f_n}{f_{n-1}} \leq \xi_n^0, \quad n=\overline{1, N}.
\end{eqnarray} 
Theorem \ref{nick1} and the inequalities (\ref{1myk20}) prove  Theorem \ref{1myk19}.
\end{proof}

\begin{thm}\label{myktinal4}
On a finite space of elementary events $\{\Omega, {\cal F}\}$ with a filtration  ${\cal F}_n$ on it,  every  super-martingale  $\{f_m, {\cal F}_m\}_{m=0}^N$ relative to the complete convex  set of equivalent  measures $M$ is a local regular one if the set $A_0$ contains $\xi_0\neq 1.$
\end{thm}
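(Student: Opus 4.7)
The plan is to reduce the statement to Theorem \ref{1myk19} by translating the super-martingale into the positive bounded regime where that theorem applies. Because $\Omega$ is finite and $N<\infty$, every $\mathcal{F}_m$-measurable random variable takes only finitely many values, so the quantity $C^{\ast}:=\max_{0\le m\le N,\,\omega\in\Omega}|f_m(\omega)|$ is finite. I would set $\tilde f_m:=f_m+C^{\ast}+1$ for $m=\overline{0,N}$. Since any constant is a martingale relative to $M$, the adapted process $\{\tilde f_m,\mathcal{F}_m\}_{m=0}^N$ remains a super-martingale relative to $M$, and the bounds $1\le\tilde f_m\le 2C^{\ast}+1$ give
\begin{equation*}
\frac{\tilde f_n}{\tilde f_{n-1}}\le 2C^{\ast}+1=:C_n<\infty,\qquad n=\overline{1,N}.
\end{equation*}

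Thus $\{\tilde f_m\}$ is a nonnegative super-martingale satisfying the hypotheses of Theorem \ref{1myk19}. That theorem yields an adapted nonnegative increasing process $\{G_m\}_{m=0}^N$ with $G_0=0$, together with a process $\{\tilde M_m,\mathcal{F}_m\}_{m=0}^N$ that is a martingale relative to every $P\in M$, such that $\tilde f_m+G_m=\tilde M_m$ for $m=\overline{0,N}$. Subtracting the constant $C^{\ast}+1$ from both sides and letting $M_m:=\tilde M_m-C^{\ast}-1$, which is again a martingale under every $P\in M$, I obtain $f_m+G_m=M_m$. Because $\Omega$ is finite, both $|f_m|$ and $G_m$ are uniformly bounded in $\omega$, so the integrability requirements $\sup_{P\in M}E^P|f_m|<\infty$ and $\sup_{P\in M}E^PG_m<\infty$ from the definition of local regularity hold trivially, and $\{f_m,\mathcal{F}_m\}_{m=0}^N$ is local regular.

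The main obstacle is only conceptual: Theorem \ref{1myk19} insists on both nonnegativity of the super-martingale and a uniform one-step ratio bound, while the present statement imposes neither. The point is that on a finite probability space over a finite horizon both conditions are automatic after a constant shift, and a shift by a constant preserves the super-martingale property as well as local regularity. No further appeal to the convex geometry of the complete set $M$ or to the structure of $A_0$ beyond what has already been packaged into Lemma \ref{1myk10} and Theorem \ref{1myk19} is required.
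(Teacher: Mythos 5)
Your proposal is correct and follows essentially the same route as the paper: both shift the (automatically bounded, since $\Omega$ is finite and $N<\infty$) super-martingale by a constant to make it nonnegative with a bounded one-step ratio $\frac{f_n+C}{f_{n-1}+C}\leq C_n$, then invoke Theorem \ref{1myk19} and undo the constant shift, which preserves both the super-martingale property and local regularity. The only cosmetic difference is your explicit ratio bound $2C^{\ast}+1$ versus the paper's choice of the constant giving the bound $3$.
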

\begin{proof} It is evident that every super-martingale  $\{f_m, {\cal F}_m\}_{m=0}^N$ is bounded. Therefore, there exists a constant $C_0>0$ such that
$\frac{3C_0}{2}> f_m+C_0>\frac{C_0}{2}, \ \omega \in \Omega, \ m=\overline{0,N}.$ From this it follows that the super-martingale  $\{f_m+C_0, {\cal F}_m\}_{m=0}^N$ is a nonnegative one
and satisfies the conditions
\begin{eqnarray}\label{apm33}
 \frac{f_n+C_0}{f_{n-1}+C_0}\leq 3, \quad n=\overline{1, N}.
\end{eqnarray}
It implies that the conditions of Theorem \ref{1myk19} are satisfied.
Theorem \ref{myktinal4} is proved.
\end{proof}

\begin{thm}\label{myktinal1}
Let $M$ be a complete convex set of equivalent measure on a measurable space $\{\Omega, {\cal F}\}$ with a filtration ${\cal F}_m$ on it. Suppose that $\xi_0 \in A_0, \  \xi_0\neq 1,$ and $m_n=E^P\{\xi_0|{\cal F}_n\}$  is a martingale relative to the set of measures $M.$  Let $M_0^a$ be a set of all martingale measures absolutely continuous relative to any measure $P \in M.$ Then the inclusion $\bar M \subseteq M_0^a$ is valid, where $\bar M$ is a closure of the set of measures $M$ in metrics $\rho_N(P_1, P_2),$ defined in (\ref{1myk7}). 
\end{thm}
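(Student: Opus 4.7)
The plan is to take an arbitrary $Q \in \bar M$, fix a sequence $Q_k \in M$ with $\rho_N(Q_k, Q) \to 0$, and verify the two defining properties of $M_0^a$: (i) $Q$ is a probability measure with $Q \ll P$ for every $P \in M$, and (ii) $m_n$ is a martingale under $Q$ with respect to $\{{\cal F}_n\}_{n=0}^N$.

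First I would exploit the finiteness of $\Omega$. Because ${\cal F}_N = {\cal F}$ is generated by the finite family of atoms $\{A_i^N\}$, evaluating $\rho_N$ on the partition of $\Omega$ into these atoms reduces to the total variation distance, so $\rho_N(Q_k, Q) \to 0$ implies setwise convergence $Q_k(A) \to Q(A)$ for every $A \in {\cal F}$. From this, $Q(\Omega) = \lim_k Q_k(\Omega) = 1$, $Q(A) \geq 0$ for each $A$, and finite additivity passes to the limit, so $Q$ is a probability measure. For absolute continuity, suppose $P(A) = 0$ for some $P \in M$; since all measures in $M$ are mutually equivalent, $P'(A) = 0$ for every $P' \in M$, hence $Q_k(A) = 0$ for all $k$, and the limit yields $Q(A) = 0$. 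Thus $Q \ll P$ for every $P \in M$.

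Next I would verify the martingale property of $m_n$ under $Q$. By Theorem \ref{fmars5}, $\{m_n, {\cal F}_n\}_{n=0}^N$ is a martingale under every measure in $M$, and in particular under each $Q_k$. Consequently, for every $n \geq 1$ and every $A \in {\cal F}_{n-1}$,
\begin{equation*}
\int_A m_n \, dQ_k = \int_A m_{n-1} \, dQ_k.
\end{equation*}
Since $\Omega$ is finite, both sides are finite sums $\sum_\omega m_j(\omega)\chi_A(\omega) Q_k(\{\omega\})$ and $m_j$ is a well-defined bounded function on $\Omega$ (independent of the measure used to form the conditional expectation, again by Theorem \ref{fmars5}). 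Setwise convergence $Q_k(\{\omega\}) \to Q(\{\omega\})$ therefore permits termwise passage to the limit, giving $\int_A m_n \, dQ = \int_A m_{n-1} \, dQ$ for every $A \in {\cal F}_{n-1}$, which is precisely $E^Q\{m_n | {\cal F}_{n-1}\} = m_{n-1}$. Combined with the previous paragraph, this puts $Q$ in $M_0^a$ and establishes $\bar M \subseteq M_0^a$.

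The main obstacle, such as it is, is purely bookkeeping: one must confirm that the metric $\rho_N$ induces convergence strong enough to commute with the (finite) expectations against $m_n$ and $m_{n-1}$, and that absolute continuity (rather than equivalence) is the right weakening, since the approximating procedure in the Definition \ref{1myk8} of completeness deliberately zeroes out certain atoms. Both issues are mild in the finite setting: boundedness of $m_n$ and the finiteness of the atom structure turn the limit-passage into a check on each atom, and the equivalence of the measures in $M$ transfers nullity of $A$ from one $P$ to all $Q_k$, hence to $Q$. Note that the hypothesis $\xi_0 \neq 1$ is used only to ensure $m_n$ is non-trivial (and thereby make the statement meaningful via the $d^n$ appearing in the completeness definition); the argument above does not otherwise depend on it.
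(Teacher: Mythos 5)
Your argument is correct and follows essentially the same route as the paper: approximate $Q\in\bar M$ by a $\rho_N$-convergent sequence from $M$, write the martingale identity $\int_A m_n\,dQ_k=\int_A m_{n-1}\,dQ_k$ for $A\in{\cal F}_{n-1}$, and pass to the limit using the continuity of these integral functionals in the metric $\rho_N$. The only difference is that you also spell out the checks that the limit is a probability measure and is absolutely continuous with respect to every $P\in M$, which the paper leaves implicit; this is a welcome but minor addition.
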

\begin{proof} Let  the sequence  $P_s \in M$ be a convergent one to the measure $P_0 \in \bar M,$ then for $D \in {\cal F}_{n-1}$
\begin{eqnarray}\label{myktinal2}
 \int\limits_{D}m_n d P_s= \int\limits_{D}m_{n-1} d P_s, \quad s=\overline{1, \infty}.
\end{eqnarray}
The functionals $ \int\limits_{D}m_n d P, \  \int\limits_{D}m_{n-1} d P$  on the set $\bar M$ for all 
$D \in {\cal F}_{n-1}$ are continuous ones   relative to the metrics $\rho_N(P_1, P_2),$  defined by the formula (\ref{1myk7}). Going  to the limit in the equality (\ref{myktinal2}), as $s \to \infty,$  we obtain
\begin{eqnarray}\label{myktinal3}
 \int\limits_{D}m_n d P_0= \int\limits_{D}m_{n-1} d P_0, \quad n=\overline{1,N}, \quad D \in {\cal F}_{n-1}. 
\end{eqnarray}
The last implies that $P_0 \in M_0^a.$  Theorem \ref{myktinal1} is proved.
\end{proof}

\subsection{Countable set of elementary events.}
In this subsection, we generalize the results of the previous subsection onto the countable space of elementary events.
Let  ${\cal F}$ be  a certain $\sigma$-algebra of subsets of  the countable set of elementary events  $\Omega$  and let  ${\cal F}_n \subset {\cal F}_{n+1} \subset {\cal F} $ be   a certain  increasing set  of $\sigma$-algebras, where   ${\cal F}_0 =\{\emptyset, \Omega\}.$
 Denote  $M$ a set of equivalent measures on the measurable space $\{\Omega, {\cal F}\}.$  Further, we assume that the set $A_0$ contains an element $\xi_0\neq 1.$  Suppose that the  $\sigma$-algebra ${\cal F}_n$ is generated by the sets $A_i^n, \   i=\overline{1, \infty}, \  A_i^n\cap A_j^n=\emptyset, \ i \neq j, \ \bigcup\limits_{i=1}^{\infty}A_i^n=\Omega, \ n=\overline{1,\infty}.$

Introduce into consideration the martingale  $m_n=E^P\{\xi_0|{\cal F}_n\}, \ P \in M, \  n=\overline{1, \infty}.$ Then for $m_n$ the representation 
\begin{eqnarray}\label{2myk1}
m_n=\sum\limits_{i=1}^{\infty}m_i^n\chi_{A_i^n}(\omega), \quad n=\overline{1,\infty},
\end{eqnarray} 
is valid.
Consider the difference $d^n(\omega)=m_n -m_{n-1}.$ 
Then 

\begin{eqnarray}\label{2myk3}
d^n(\omega)=\sum\limits_{j=1}^{\infty}d_{j}^n\chi_{A_j^n}(\omega) =\sum\limits_{j\in I^-}d_{j}^n\chi_{A_j^n}(\omega)+\sum\limits_{j\in I^+}d_{j}^n\chi_{A_j^n}(\omega), 
\end{eqnarray}
\begin{eqnarray}\label{2myk4}
\sum\limits_{j\in I^-}\chi_{A_j^n}(\omega)+
\sum\limits_{j\in I^+}\chi_{A_j^n}(\omega)=1, 
\end{eqnarray}
where  $d_j^n\leq 0,$ as $ j \in I^-_n,$ and  $d_j^n> 0,$  $j \in I^+_n.$
From the equalities (\ref{2myk3}), (\ref{2myk4}) we obtain
\begin{eqnarray}\label{2myk5}
E^Pd^n(\omega)=\sum\limits_{j\in I^-_n}d_{j}^nP(A_j^n)+\sum\limits_{j\in I^+_n}d_{j}^nP(A_j^n)=0, \quad P  \in M,
\end{eqnarray}
\begin{eqnarray}\label{2myk6}
\sum\limits_{j\in I^-_n}P(A_j^n)+
\sum\limits_{j\in I^+_n}P(A_j^n)=1, \quad P  \in M.
\end{eqnarray}
Denote  $M_n$ the contraction of the set of measures $M$ on the $\sigma$-algebra ${\cal F}_n.$  Introduce into the set $M_n $  the metrics
\begin{eqnarray}\label{2myk7} 
\rho_n(P_1,P_2)=\sup\limits_{B}\sum\limits_{s=1}^k|P_1(B_s^n) - P_2(B_s^n)|, \ P_1, P_2 \in M_n, \ n=\overline{1,\infty},
\end{eqnarray}
where $B=\{B_1^n, \ldots, B_k^n\}$  is a partition of $\Omega$ on $k$ subsets,  that is,  $B_i^n \in {\cal F}_n, \  i=\overline{1,k}, \
 B_i^n \cap B_j^n=\emptyset, i\neq j, \bigcup\limits_{i=1}^k B_i^n =\Omega.$
The supremum in the formula (\ref{2myk7} ) is  all over the partitions of the set  $\Omega,$ belonging to the $\sigma$-algebra ${\cal F}_n.$
\begin{defin}\label{2myk8} On a measurable space
$ \{\Omega, {\cal F}\}$ with a filtration ${\cal F}_n$ on it,  a convex  set of equivalent measure $M$  we call complete one if for every $1 \leq n < \infty$
the closure of the set of measures $M_n$ in the metrics (\ref{2myk7}) contains the measures
\begin{eqnarray}\label{2myk9} 
P_{ij}^n(A)= \left\{\begin{array}{l l} 0, & A\neq A_i^n, A_j^n,\\
                                                    \frac{d_j^n}{-d_i^n +d_j^n}, & A=A_i^n,\\
                                                      \frac{-d_i^n}{-d_i^n +d_j^n}, & A=A_j^n
                                                     \end{array}\right.       
\end{eqnarray}
for every  $i \in I^-_n$ and $ j \in I^+_n.$
\end{defin}
\begin{lemma}\label{2myk10} Let a family of measures $M$ be complete and the set $A_0$ contains an element $\xi_0 \neq 1.$ Then for every non negative bounded ${\cal F}_n$-measurable random value $\xi_n=\sum\limits_{i=1}^{\infty} C_i^n \chi_{A_i^n}$ there exists a real number $\alpha_n$ such that
\begin{eqnarray}\label{2myk11} 
\frac{ \sum\limits_{i=1}^{\infty} C_i^n \chi_{A_i^n} }{\sup\limits_{P \in M_n}\sum\limits_{i=1}^{\infty} C_i^n P(A_i^n)} \leq 1+\alpha_n (m_n - m_{n-1}), \quad 
n=\overline{1,\infty}.
\end{eqnarray}
\end{lemma}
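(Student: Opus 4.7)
The plan is to adapt the proof of Lemma \ref{1myk10} to the countable setting, following the same three-step pattern (normalise, extend to the closure, apply two-point measures), but replacing the minimum used in the finite case by an infimum. The two technical points that require genuine care are the continuity of the linear functional $\varphi(P):=\sum_{i=1}^{\infty}C_i^n P(A_i^n)$ on the closure $\bar M_n$ with respect to $\rho_n$, and the finiteness of the infimum that will define $\alpha_n$.

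First I would set $s:=\sup_{P\in M_n}\varphi(P)$ and, discarding the trivial case $s=0$, normalise $f_i^n:=C_i^n/s$ so that $\sum_i f_i^n P(A_i^n)\leq 1$ for every $P\in M_n$. The boundedness hypothesis $\xi_n\leq K$ gives $0\leq f_i^n\leq K/s$ uniformly in $i$. For the continuity I would use the finite partitions $\{A_1^n,\dots,A_N^n,\bigcup_{i>N}A_i^n\}\subset{\cal F}_n$ to obtain $\sum_{i=1}^{N}|P_1(A_i^n)-P_2(A_i^n)|\leq \rho_n(P_1,P_2)$, then let $N\to\infty$, yielding $|\varphi(P_1)-\varphi(P_2)|\leq K\rho_n(P_1,P_2)$. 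Consequently $\sum_i f_i^n P(A_i^n)\leq 1$ extends to all $P\in\bar M_n$, which by completeness contains every two-point measure $P_{ij}^n$ with $i\in I^-_n$, $j\in I^+_n$.

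Plugging $P_{ij}^n$ into the extended inequality and rearranging gives, exactly as in the finite case,
\begin{equation*}
f_j^n\leq 1+\frac{(1-f_i^n)d_j^n}{-d_i^n},\qquad i\in I^-_n,\ d_i^n<0,\ j\in I^+_n.
\end{equation*}
Setting $\alpha_n:=\inf_{\{i\in I^-_n:\ d_i^n<0\}}(1-f_i^n)/(-d_i^n)$, I would verify in three sub-cases that $f_k^n\leq 1+\alpha_n d_k^n$ for every $k$: for $d_k^n<0$ this is immediate from the definition of the infimum; for $d_k^n=0$ with $k\in I^-_n$ the measure $P_{kj_0}^n$ with any $j_0\in I^+_n$ forces $f_k^n\leq 1=1+\alpha_n\cdot 0$; and for $d_k^n>0$ it follows from taking the infimum over $i$ of the right-hand side above. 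Multiplying by $\chi_{A_k^n}$ and summing over $k$ produces the inequality in the statement.

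The main obstacle is the finiteness $\alpha_n>-\infty$, which is automatic from the finite minimum in Lemma \ref{1myk10} but not in the countable case. To close this gap I would fix any $j_0\in I^+_n$ (non-empty in the meaningful case $d^n\not\equiv 0$) and use $f_{j_0}^n\geq 0$ in the two-point-measure inequality for the pair $(i,j_0)$ to obtain $f_i^n\leq 1+(-d_i^n)/d_{j_0}^n$, that is $(1-f_i^n)/(-d_i^n)\geq -1/d_{j_0}^n$. Since $d_{j_0}^n>0$, this supplies the uniform lower bound $\alpha_n\geq -1/d_{j_0}^n>-\infty$ and completes the proof.
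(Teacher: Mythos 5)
Your proof is correct and follows essentially the same route as the paper's: normalisation by $\sup_{P\in M_n}\varphi(P)$, continuity of $\varphi$ in the metric $\rho_n$ so that the constraint extends to $\bar M_n$, and then the two-point measures supplied by completeness, with $\alpha_n$ an extremum of $(1-f_i^n)/(-d_i^n)$. The only difference is cosmetic: where the paper splits into the two cases (all $f_i^n\le 1$ on $I^-_n$, where the infimum is automatically nonnegative, versus some $f_i^n>1$, where it bounds $(1-f_i^n)/d_i^n$ by $1/d_j^n$ using exactly the nonnegativity of $f_j^n$), you obtain finiteness in one stroke via the uniform bound $\alpha_n\ge -1/d_{j_0}^n$ for a fixed $j_0\in I^+_n$.
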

\begin{proof}  On the set  $\bar M_n,$ the functional $\varphi(P)=\sum\limits_{i=1}^{\infty} C_i^n P(A_i^n)$  is a  continuous  one relative to the metrics $\rho_n(P_1,P_2),$ where  $ \bar M_n$ is the closure of the set $M_n$ in this metrics.
From this it follows that  the equality
\begin{eqnarray}\label{2myk13}
\sup\limits_{P \in M_n}\sum\limits_{i=1}^{\infty} C_i^n P(A_i^n)=\sup\limits_{P \in \bar M_n}\sum\limits_{i=1}^{\infty} C_i^n P(A_i^n)
\end{eqnarray} 
is valid.
Denote  $f_i^n=\frac{C_i^n}{\sup\limits_{P \in M_n}\sum\limits_{i=1}^{\infty} C_i^n P(A_i^n)}, \ i=\overline{1, \infty}.$
Then
\begin{eqnarray}
\sum\limits_{i=1}^{\infty} f_i^n P(A_i^n) \leq 1, \quad P \in \bar M_n.
\end{eqnarray}
The last inequalities can be written  in the form
\begin{eqnarray}\label{2myk14}
\sum\limits_{i \in I^-} f_i^n P(A_i^n) +\sum\limits_{i \in I^+}f_i^n P(A_i^n) \leq 1, \quad P \in \bar M_n.
\end{eqnarray}
 For those $i \in I^-_n$ for which $d_i^n<0$ and those $j \in I^+_n$ for which $d_j^n>0$  the inequality (\ref{2myk14}) is as follows 
\begin{eqnarray}\label{2myk15}
 f_i^n \frac{d_j^n}{-d_i^n +d_j^n} + \frac{-d_i^n}{-d_i^n+d_j^n} f_j^n\leq 1, \
 d_i^n <0, \ d_j ^n>0, \  i \in I^-_n, \ j \in I^+_n.
\end{eqnarray} 

From  (\ref{2myk15})  we obtain the inequalities
\begin{eqnarray}\label{2myk16}
f_j^n \leq 1+\frac{1- f_i^n}{-d_i^n}d_j^n, \quad d_i^n <0, \quad  d_j^n >0, 
\quad  i \in I^-_n, \quad j \in I^+_n.
\end{eqnarray} 
Two cases are possible: a) for all  $ i \in I^-_n,$ $ f_i^n \leq 1; $ b) there exists $ i \in I^-_n$ such that $  f_i^n > 1.$
First, let us consider the case a).

Since the inequalities  (\ref{2myk16}) are valid for every $\frac{1- f_i^n}{-d_i^n},$ as $ d_i^n <0, $ and $ f_i^n \leq 1, i \in I^-_n ,$  then if to denote
\begin{eqnarray}
 \alpha_n =\inf_{\{i, \ d_i^n <0\}}\frac{1- f_i^n}{-d_i^n},
\end{eqnarray}
 we have $ 0 \leq \alpha_n < \infty$  and
\begin{eqnarray}\label{2myk17}
f_j^n \leq 1+\alpha_n d_j^n,  \quad   d_j^n >0,  \quad j \in I^+_n.
\end{eqnarray} 
From the definition of  $\alpha_n$ we obtain the inequalities 
\begin{eqnarray}
f_i^n \leq 1+\alpha_n d_i^n,  \quad   d_i^n <0, \quad  i \in I^-_n.
\end{eqnarray} 
Now, if $d_i^n=0$ for some $ i \in I^-_n, $ then in this case $f_i^n \leq 1.$ All these inequalities give 
\begin{eqnarray}\label{2myk18}
f_i^n \leq 1+\alpha_n d_i^n,  \quad   i \in I^-_n\cup I^+_n.
\end{eqnarray} 
Consider the case b). From the inequality (\ref{2myk16}), we obtain
\begin{eqnarray}\label{2myk19}
f_j^n \leq 1-\frac{1- f_i^n}{d_i^n}d_j^n, \quad d_i^n <0, \quad  d_j^n >0, \quad  i \in I^-_n, \quad j \in I^+_n.
\end{eqnarray} 
 The last inequalities give
\begin{eqnarray}\label{2myk20}
\frac{1- f_i^n}{d_i^n} \leq \min_{\{j, \ d_j^n >0\}} \frac{1}{d_j^n}< \infty, \quad  d_i^n <0, \quad  i \in I^-_n.
\end{eqnarray}
Let us define $\alpha_n =\sup\limits_{\{ i, \ d_i^n<0 \} }\frac{1- f_i^n}{d_i^n}< \infty.$
Then from (\ref{2myk19}) we obtain
\begin{eqnarray}\label{2myk21}
f_j^n \leq 1- \alpha_n d_j^n, \quad  \  d_j^n >0, \quad j \in I^+_n.
\end{eqnarray}
From the definition of  $\alpha_n, $ we have
\begin{eqnarray}\label{2myk22}
f_i^n \leq 1- \alpha_n d_i^n, \quad  \  d_i^n <0, \quad  i \in I^-_n.
\end{eqnarray}
The inequalities (\ref{2myk21}), (\ref{2myk22}) give
\begin{eqnarray}\label{2myk23}
f_j^n \leq 1- \alpha_n d_j^n,  \quad  j \in I^-_n\cup I^+_n.
\end{eqnarray}

   Multiplying on $\chi_{A_i^n}$ the inequalities (\ref{2myk18}) and  the inequalities (\ref{2myk23})  on $\chi_{A_j^n}$ and summing over all $i, j \in I^-_n\cup I^+_n$ we obtain the needed inequality.
The Lemma  \ref{2myk10}  is proved.
\end{proof}

\begin{thm}\label{2myk24} Suppose that the conditions of  Lemma \ref{2myk10}
are valid. Then  every non negative super-martingale  $\{f_m, {\cal F}_m\}_{m=0}^\infty $  relative to a convex set of equivalent measures $M,$ satisfying the conditions 
\begin{eqnarray}\label{2myk25}
 \frac{f_m}{f_{m-1}} \leq C_m< \infty, \quad m=\overline{1,\infty},
\end{eqnarray}
 is a local regular one, where $C_m$ are constants.
\end{thm}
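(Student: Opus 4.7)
The plan is to mimic the proof of Theorem \ref{1myk19} essentially verbatim, replacing Lemma \ref{1myk10} by its countable-index counterpart Lemma \ref{2myk10} and then invoking the characterization of local regularity given by Theorem \ref{nick1}.

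First, I would form the ratio $\xi_n := f_n/f_{n-1}$, declaring $\xi_n = 1$ on the (possibly non-trivial) set $\{f_{n-1}=0\}$, where the nonnegative supermartingale property $E^P\{f_n|{\cal F}_{n-1}\} \leq f_{n-1}$ together with $f_n \geq 0$ forces $f_n=0$ $P$-a.s.\ for every $P\in M$. The hypothesis $f_m/f_{m-1}\leq C_m$ then makes $\xi_n$ a bounded nonnegative ${\cal F}_n$-measurable random value, and since ${\cal F}_n$ is generated by the atoms $\{A_i^n\}_{i=1}^\infty$ it admits the representation $\xi_n = \sum_{i=1}^\infty C_i^n \chi_{A_i^n}$ exactly in the form required by Lemma \ref{2myk10}. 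A tower-property computation using the supermartingale inequality, together with the definition of $\xi_n$ on $\{f_{n-1}=0\}$, gives $\sup_{P\in M_n} E^P\xi_n \leq 1$.

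Second, applying Lemma \ref{2myk10} to $\xi_n$ produces a real constant $\alpha_n$ with
$$\xi_n \leq \frac{\xi_n}{\sup_{P\in M_n} E^P\xi_n} \leq 1+\alpha_n (m_n - m_{n-1}) =: \xi_n^0.$$
Here $m_n = E^P\{\xi_0|{\cal F}_n\}$ is a genuine martingale relative to every $P\in M$ by Theorem \ref{fmars5} (since $\xi_0 \in A_0$), so $E^P\{\xi_n^0|{\cal F}_{n-1}\} = 1$ for every $P\in M$. Moreover, inspection of the atom-wise bounds derived in the proof of Lemma \ref{2myk10} (both in case a) and case b)) shows that on each atom $A_i^n$ the expression $1+\alpha_n d_i^n$ dominates $f_i^n\geq 0$, so $\xi_n^0\geq 0$, and consequently $\xi_n^0 \in A_0$. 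The resulting inequality $f_n/f_{n-1} \leq \xi_n^0$ is precisely the hypothesis of Theorem \ref{nick1}, which immediately delivers the local regularity of $\{f_m, {\cal F}_m\}_{m=0}^\infty$.

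The main content of the proof is already carried by Lemma \ref{2myk10}; beyond that, the only care required is the handling of the possible null set $\{f_{n-1}=0\}$ (disposed of by nonnegativity plus the supermartingale inequality) and the explicit verification that $\xi_n^0$ is nonnegative in the countable setting (disposed of by the case-by-case bounds inside the proof of Lemma \ref{2myk10}). No genuinely new obstacle arises relative to the finite-space Theorem \ref{1myk19}.
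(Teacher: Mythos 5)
Your proposal is correct and follows essentially the same route as the paper's own proof: form $\xi_n=f_n/f_{n-1}$, note $\sup_{P\in M}E^P\xi_n\leq 1$, apply Lemma \ref{2myk10} to get $f_n/f_{n-1}\leq 1+\alpha_n(m_n-m_{n-1})=\xi_n^0$ with $E^P\{\xi_n^0|{\cal F}_{n-1}\}=1$, and conclude by Theorem \ref{nick1}. The only additions are your explicit handling of the set $\{f_{n-1}=0\}$ and the verification that $\xi_n^0\geq 0$ (hence $\xi_n^0\in A_0$), details the paper passes over silently.
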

\begin{proof} 
From the conditions (\ref{2myk25}) it follows that  $\sup\limits_{P \in M}E^Pf_m < \infty.$ 
Consider the random value $\xi_n=\frac{f_n}{f_{n-1}}.$ Due to Lemma \ref{2myk10}
\begin{eqnarray}
 \frac{\xi_n}{\sup\limits_{P \in M}E^P\xi_n} \leq 1+\alpha_n(m_n-m_{n-1})=\xi_n^0.
\end{eqnarray}
It is evident that  $E^P\{\xi_n^0|{\cal F}_{n-1}\}=1, \ P \in M, \ n=\overline{1, \infty}.$ 
Since $\sup\limits_{P \in M}E^P\xi_n \leq 1,$ then 
\begin{eqnarray}\label{2myk26}
 \frac{f_n}{f_{n-1}} \leq \xi_n^0, \quad n=\overline{1, \infty}.
\end{eqnarray} 
Theorem \ref{nick1} and the inequalities (\ref{2myk26}) prove  Theorem \ref{2myk24}.
\end{proof}

\subsection{An arbitrary space of elementary events.}
In this subsection, we consider an arbitrary space of elementary events and prove the optional decomposition for non negative super-martingales.

Let  ${\cal F}$ be a certain $\sigma$-algebra of subsets of  the set of elementary events  $\Omega$  and let  ${\cal F}_n \subset {\cal F}_{n+1} \subset {\cal F} $ be an increasing set of the $\sigma$-algebras, where   ${\cal F}_0 =\{\emptyset, \Omega\}.$
 Denote  $M$ a set of equivalent measures on a measurable space $\{\Omega, {\cal F}\}.$ We assume that the $\sigma$-algebras  ${\cal F}_n, \  n=\overline{1, \infty},$ and ${\cal F} $  are complete relative to any measure $P \in M.$   
Further, we suppose that the set $A_0$ contains an element $\xi_0\neq 1.$ 
Let  $m_n=E^P\{\xi_0|{\cal F}_n\}, \ P \in M, \ n=\overline{1, \infty}.$ 

Consider the difference $d^n(\omega)=m_n -m_{n-1}.$ We assume that  every $\omega \in \Omega$ belongs to the $\sigma$-algebra ${\cal F}_n, \ n=\overline{1, \infty},$ and $P(\{\omega\})=0, \ \omega \in\Omega, \ P \in M.$

 For the random value  $d^n(\omega)$ there exists not more then a countable set of  the real number $d_s^n $ such that $P(A_s^n)>0,$  where  $A_s^n=\{\omega \in \Omega, d^n(\omega)=d_s^n\}.$  It is evident that  $A_i^n\cap A_j^n=\emptyset, \ i \neq j.$  Suppose that 
$P(\Omega \setminus \bigcup\limits_{i=1}^{\infty}A_i^n)>0.$ Introduce for every $n$ two subsets  $I^-_n=\{\omega \in \Omega, \ d^n(\omega)\leq 0\},$
$I^+_n=\{\omega \in \Omega, \ d^n(\omega)> 0\}$ of the set $\{\omega\in \Omega, \  |d_n(\omega)|<\infty\}.$ 

Denote  $M_n$ the contraction of the set of measures $M$ on the $\sigma$-algebra ${\cal F}_n.$  Introduce into the set $M_n $  the metrics
\begin{eqnarray}\label{3myk7} 
\rho_n(P_1,P_2)=\sup\limits_{B}\sum\limits_{s=1}^k|P_1(B_s^n) - P_2(B_s^n)|, \ P_1, P_2 \in M_n, \ n=\overline{1,\infty},
\end{eqnarray}
where $B=\{B_1^n, \ldots, B_k^n\}$  is a partition of $\Omega$ on $k$ subsets,  that is,  $B_i^n \in {\cal F}_n, \  i=\overline{1,k}, \
 B_i^n \cap B_j^n=\emptyset, i\neq j, \bigcup\limits_{i=1}^k B_i^n =\Omega.$
The supremum in the formula (\ref{3myk7} ) is  all over the partitions of the set  $\Omega,$ belonging to the $\sigma$-algebra ${\cal F}_n.$

\begin{defin}\label{3myk8}   On a measurable space
$ \{\Omega, {\cal F}\}$ with filtration ${\cal F}_n$ on it, a convex  set of equivalent measure $M$ we call complete if for every $1 \leq n < \infty$   the closure   in metrics (\ref{3myk7}) of the  set of measures $M_n$ 
 contains the measures
\begin{eqnarray}\label{3myk9} 
P_{\omega_1, \omega_2}^n(A)= \left\{\begin{array}{l l} 0, &\omega_1, \  \omega_2  \in \Omega\setminus A,\\
                                                    \frac{d^n(\omega_2)}{-d^n(\omega_1) +d^n(\omega_2)}, & \omega_1 \in  A, \ A\cap \{\omega_2\}=\emptyset,\\
                                                      \frac{-d^n(\omega_1)}{-d^n(\omega_1) +d^n(\omega_2)}, &  \omega_2 \in\Omega \setminus A, \ (\Omega \setminus A)\cap \{\omega_1\}=\emptyset
                                                     \end{array}\right.       
\end{eqnarray}
for   $\omega_1 \in I^-_n$ and $ \omega_2 \in I^+_n.$
\end{defin}

\begin{lemma}\label{3myk10} Let a convex  family of equivalent measures $M$ be a complete one and the set $A_0$ contains an element $\xi_0 \neq 1.$ Then for every non negative bounded ${\cal F}_n$-measurable random value $\xi_n$ there exists a real number $\alpha_n$ such that
\begin{eqnarray}\label{3myk11}
\frac{\xi_n }{\sup\limits_{P \in M}E^P\xi_n} \leq 1+\alpha_n (m_n - m_{n-1}), \quad 
n=\overline{1,\infty}.
\end{eqnarray}
\end{lemma}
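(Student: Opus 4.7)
The plan is to imitate the argument of Lemma~\ref{2myk10}, replacing the atomic decomposition of ${\cal F}_n$ used there with a continuity argument that lets the completeness hypothesis be applied pointwise on $\Omega$. First I would normalize by setting $f=\xi_n/S$, where $S=\sup_{P\in M_n} E^P\xi_n$ (treating the degenerate case $S=0$ separately), so that $E^Pf\leq 1$ for every $P\in M_n$. Since $\xi_n$ is bounded and the metric $\rho_n$ defined in~(\ref{3myk7}) coincides with total variation on $M_n$, the functional $P\mapsto E^Pf$ is $\rho_n$-continuous; consequently $E^Pf\leq 1$ extends to every $P$ in the closure $\bar M_n$ by the same argument used in~(\ref{2myk13}).

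Completeness of $M$ then places the two-point measure $P^n_{\omega_1,\omega_2}$ into $\bar M_n$ for every $\omega_1\in I^-_n$ and $\omega_2\in I^+_n$, so integrating $f$ against it yields the pairwise inequality
\begin{equation*}
f(\omega_1)\,d^n(\omega_2)+f(\omega_2)\bigl(-d^n(\omega_1)\bigr)\ \leq\ d^n(\omega_2)-d^n(\omega_1).
\end{equation*}
On the subset where $d^n(\omega_1)<0$ and $d^n(\omega_2)>0$ this rearranges into the uncountable analogue of~(\ref{2myk16}),
\begin{equation*}
f(\omega_2)\ \leq\ 1+\frac{1-f(\omega_1)}{-d^n(\omega_1)}\,d^n(\omega_2),
\end{equation*}
after which the algebra from the proof of Lemma~\ref{2myk10} carries over verbatim. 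I split into case (a) in which $f(\omega_1)\leq 1$ for every $\omega_1\in I^-_n$ with $d^n(\omega_1)<0$, where I set $\alpha_n=\inf_{\omega_1}\frac{1-f(\omega_1)}{-d^n(\omega_1)}\geq 0$; and case (b) in which some $\omega_1^*$ violates this, where $\alpha_n$ is built with opposite sign from $\sup_{\omega_1}\frac{1-f(\omega_1)}{d^n(\omega_1)}$. Points where $d^n(\omega)=0$ cause no trouble: the measure $P^n_{\omega,\omega_2}$ collapses to a Dirac mass at $\omega$, forcing $f(\omega)\leq 1$ automatically, so the final bound is trivially satisfied there.

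I expect the main obstacle to sit in two places. First, finiteness of $\alpha_n$ in case (b) is not automatic because the supremum ranges over an uncountable set; I would secure it by fixing any single $\omega_2^*\in I^+_n$ with $d^n(\omega_2^*)>0$ and using the constraint $f(\omega_2^*)\geq 0$ in the rearranged inequality to obtain the uniform bound $\frac{1-f(\omega_1)}{d^n(\omega_1)}\leq \frac{1}{d^n(\omega_2^*)}<\infty$, which is the uncountable mirror of the argument that produced~(\ref{2myk20}). Second, the passage from $M_n$ to $\bar M_n$ relies on $\rho_n$-continuity of integration, which I would justify by identifying $\rho_n$ with total variation and invoking bounded convergence for integrals of bounded measurable functions along a totally-varying sequence. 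Once these two points are in place, multiplying the inequalities by the indicators $\chi_{\{\omega\in I^-_n\}}$ and $\chi_{\{\omega\in I^+_n\}}$ and combining the two branches as in the step from~(\ref{2myk18}) and~(\ref{2myk23}) to~(\ref{2myk11}) delivers the required bound $f\leq 1+\alpha_n(m_n-m_{n-1})$ on all of $\Omega$ up to a $P$-null set, for every $P\in M$.
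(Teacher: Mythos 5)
Your proposal is correct and takes essentially the same route as the paper's own proof: normalize $\xi_n$ by $\sup_{P\in M_n}E^P\xi_n$, extend $E^Pf\le 1$ to the closure $\bar M_n$ via $\rho_n$-continuity, test against the two-point measures of Definition~\ref{3myk8} to get the pairwise inequality (\ref{3myk16}), split into the cases $f\le 1$ on $I^-_n$ versus not, and define $\alpha_n$ as the corresponding infimum or (negated) supremum, with your finiteness bound via a fixed $\omega_2^*$ being exactly the paper's inequality (\ref{3myk20}). Your explicit remarks on the degenerate normalization and on the collapse of $P^n_{\omega_1,\omega_2}$ to a Dirac mass when $d^n(\omega_1)=0$ merely spell out points the paper asserts, so nothing essential differs.
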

\begin{proof} On the set  $\bar M_n,$ the functional $\varphi(P)=\int\limits_{\Omega}\xi_n d P$  is a  continuous  one relative to the metrics $\rho_n(P_1,P_2),$ where  $ \bar M_n$ is the closure of the set $M_n$ in this metrics.
From this it follows that  the equality
\begin{eqnarray}\label{3myk13}
\sup\limits_{P \in M_n}\int\limits_{\Omega}\xi_n d P=\sup\limits_{P \in \bar M_n}\int\limits_{\Omega}\xi_n d P
\end{eqnarray} 
is valid.
Denote  $f^n(\omega)=\frac{\xi_n(\omega)}{\sup\limits_{P \in M_n}E^P\xi_n(\omega)}.$
Then
\begin{eqnarray}
E^Pf^n(\omega) \leq 1, \quad P \in \bar M_n.
\end{eqnarray}
The last inequalities can be written  in the form
\begin{eqnarray}\label{3myk14}
 \int\limits_{I^-}f_n(\omega) d P+ \int\limits_{I^+}f_n(\omega)  d P \leq 1, \quad P \in \bar M_n.
\end{eqnarray}
The inequality (\ref{3myk14}) for the measures   (\ref{3myk9})    is as follows 
$$ f^n(\omega_1) \frac{d^n(\omega_2)}{-d^n(\omega_1) +d^n(\omega_2)} +$$
\begin{eqnarray}\label{3myk15}
 \frac{-d^n(\omega_1)}{-d^n(\omega_1)+
d^n(\omega_2)} f^n(\omega_2)\leq 1, \quad \omega_1 \in I^-_n, \quad \omega_2 \in I^+_n.
\end{eqnarray} 
From  (\ref{3myk15})  we obtain the inequalities
\begin{eqnarray}\label{3myk16}
f^n (\omega_2)\leq 1+\frac{1- f^n(\omega_1)}{-d^n(\omega_1)}d^n(\omega_2), \end{eqnarray} 
\begin{eqnarray}
 d^n(\omega_1) <0, \quad  d^n(\omega_2) >0, 
\quad  \omega_1 \in I^-_n, \quad \omega_2 \in I^+_n.
\end{eqnarray}
Two cases are possible: a) for all  $ \omega_1 \in I^-_n,$ $ f^n(\omega_1) \leq 1; $ b) there exists $ \omega_1 \in I^-_n$ such that $  f^n (\omega_1)> 1.$
First, let us consider the case a).

Since the inequalities  (\ref{3myk16}) are valid for every $\frac{1- f^n(\omega_1)}{-d^n(\omega_1)},$ as $ d^n(\omega_1) <0, $ and $ f^n(\omega_1) \leq 1, \omega_1 \in I^-_n ,$  then if to denote
\begin{eqnarray}
\alpha_n =\inf_{\{\omega_1, \ d^n(\omega_1) <0\}}\frac{1- f^n(\omega_1)}{-d^n(\omega_1)},
\end{eqnarray}
 we have $ 0 \leq \alpha_n < \infty$  and
\begin{eqnarray}\label{3myk17}
f^n (\omega_2)\leq 1+\alpha_n d^n(\omega_2),  \quad   d^n (\omega_2)>0,  \quad \omega_2 \in I^+_n.
\end{eqnarray} 
From the definition of  $\alpha_n$ we obtain the inequalities 
\begin{eqnarray}
f^n(\omega_1)  \leq 1+\alpha_n d^n(\omega_1),  \quad   d^n(\omega_1) <0, \quad  \omega_1 \in I^-_n.
\end{eqnarray} 
Now, if $d^n(\omega_1)=0$ for some $ \omega_1\in I^-_n, $ then in this case $f^n(\omega_1) \leq 1.$ All these inequalities give 
\begin{eqnarray}\label{3myk18}
f^n(\omega) \leq 1+\alpha_n d^n(\omega),  \quad \omega \in  I^-_n\cup  I^+_n.
\end{eqnarray} 
Consider the case b). From the inequality (\ref{3myk16}), we obtain
\begin{eqnarray}\label{3myk19}
f^n(\omega_2) \leq 1-\frac{1- f^n(\omega_1)}{d^n(\omega_1)}d^n(\omega_2),
\end{eqnarray}
\begin{eqnarray}
  d^n(\omega_1) <0, \quad  d^n(\omega_2) >0, \quad  \omega_1 \in I^-_n, \quad \omega_2 \in I^+_n.
\end{eqnarray}
 The last inequalities give
\begin{eqnarray}\label{3myk20}
\frac{1- f^n(\omega_1)}{d^n(\omega_1)} \leq \inf_{\{\omega_2, \ d^n(\omega_2) >0\}} \frac{1}{d^n(\omega_2)}< \infty, \quad  d^n(\omega_1) <0, \quad  \omega_1 \in I^-_n.
\end{eqnarray}
Let us define $\alpha_n =\sup\limits_{\{ \omega_1, \ d^n(\omega_1)<0 \} }\frac{1- f^n(\omega_1)}{d^n(\omega_1)}< \infty.$
Then from (\ref{3myk19}) we obtain
\begin{eqnarray}\label{3myk21}
f^n(\omega_2)   \leq 1- \alpha_n d^n(\omega_2), \quad  \  d^n(\omega_2) >0, \quad \omega_2 \in I^+_n.
\end{eqnarray}
From the definition of  $\alpha_n $ we have
\begin{eqnarray}\label{3myk22}
f^n(\omega_1) \leq 1- \alpha_n d^n(\omega_1), \quad  \  d^n(\omega_1) <0, \quad  \omega_1 \in I^-_n.
\end{eqnarray}
The inequalities (\ref{3myk21}), (\ref{3myk22}) give
\begin{eqnarray}\label{3myk23}
f^n(\omega) \leq 1- \alpha_n d^n(\omega),  \quad  \omega \in  I^-_n\cup  I^+_n.
\end{eqnarray}
Since the set $ I^-_n\cup  I^+_n$ has probability one,
Lemma  \ref{3myk10}  is proved.
\end{proof}
\begin{thm}\label{3myk24} Suppose  a  convex set of equivalent measures   $M$ is a complete one and the conditions of  Lemma \ref{3myk10}
are valid. Then  every non negative super-martingale  $\{f_m, {\cal F}_m\}_{m=0}^\infty $ 
 relative to a convex set of equivalent measures $M,$
satisfying conditions 
\begin{eqnarray}\label{3myk25}
  \frac{f_m}{f_{m-1}} \leq C_m< \infty, \quad m=\overline{1,\infty},
\end{eqnarray}
 is a local regular one, where $C_m,  m=\overline{1,\infty},$ are constants.
\end{thm}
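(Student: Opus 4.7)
The plan is to follow verbatim the pattern established in the proofs of Theorems \ref{1myk19} and \ref{2myk24}, which treat the finite and countable cases. The key inputs are Lemma \ref{3myk10} (the continuum analogue of Lemmas \ref{1myk10} and \ref{2myk10}) and the characterization of local regularity from Theorem \ref{nick1}. The growth condition $f_m/f_{m-1}\leq C_m$ together with the super-martingale property gives, by induction, $\sup_{P\in M}E^P f_m\leq f_0\prod_{k=1}^{m}1=f_0<\infty$, and more importantly ensures that $\xi_n:=f_n/f_{n-1}$ is a bounded nonnegative $\mathcal{F}_n$-measurable random value so that Lemma \ref{3myk10} may be applied to it.

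The main body of the argument is then short. First, for each $n\geq 1$ I set $\xi_n=f_n/f_{n-1}$, which is nonnegative, bounded, and $\mathcal{F}_n$-measurable. Lemma \ref{3myk10} supplies a real number $\alpha_n$ such that
\begin{eqnarray}\label{plan1}
\frac{\xi_n}{\sup\limits_{P\in M}E^P\xi_n}\leq 1+\alpha_n(m_n-m_{n-1}),
\end{eqnarray}
and I define $\xi_n^0:=1+\alpha_n(m_n-m_{n-1})$. Because $m_n=E^P\{\xi_0|\mathcal{F}_n\}$ is a martingale with respect to every $P\in M$, one has $E^P\{m_n-m_{n-1}|\mathcal{F}_{n-1}\}=0$, hence $E^P\{\xi_n^0|\mathcal{F}_{n-1}\}=1$ for every $P\in M$, and clearly $\xi_n^0\geq 0$ by construction (since it dominates the nonnegative left-hand side of (\ref{plan1})). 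This already places $\xi_n^0$ into the set $A_0$ restricted to $\mathcal{F}_n$, which is exactly what Theorem \ref{nick1} demands.

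Next, I use the super-martingale property to eliminate the normalizing denominator. Applying $E^P\{\cdot|\mathcal{F}_{n-1}\}$ to $\xi_n=f_n/f_{n-1}$ and using that $f_{n-1}$ is $\mathcal{F}_{n-1}$-measurable and positive gives $E^P\{\xi_n|\mathcal{F}_{n-1}\}=E^P\{f_n|\mathcal{F}_{n-1}\}/f_{n-1}\leq 1$, so in particular $\sup_{P\in M}E^P\xi_n\leq 1$. Dividing (\ref{plan1}) by a number that is at most $1$ can only enlarge the left-hand side, and therefore
\begin{eqnarray}\label{plan2}
\frac{f_n}{f_{n-1}}=\xi_n\leq \frac{\xi_n}{\sup\limits_{P\in M}E^P\xi_n}\leq \xi_n^0,\qquad n=\overline{1,\infty}.
\end{eqnarray}
Theorem \ref{nick1} then produces the desired nonnegative adapted increasing process whose increments compensate the super-martingale $\{f_m,\mathcal{F}_m\}_{m=0}^\infty$, proving its local regularity.

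The only nontrivial step is the invocation of Lemma \ref{3myk10}, and even that was proved in the preceding subsection; the rest is bookkeeping identical to the finite and countable cases. The mildly delicate point is checking that the boundedness hypothesis of Lemma \ref{3myk10} is satisfied: the assumption $f_n/f_{n-1}\leq C_n$ gives $\xi_n\leq C_n$ pointwise, so $\xi_n$ is a nonnegative bounded $\mathcal{F}_n$-measurable random value, exactly what the lemma requires. With this verified the proof writes itself, and no new idea beyond those already deployed for the finite and countable cases is needed.
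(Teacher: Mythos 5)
Your proof is correct and follows essentially the same route as the paper: set $\xi_n=f_n/f_{n-1}$, invoke Lemma \ref{3myk10} to get $\xi_n/\sup_{P\in M}E^P\xi_n\leq 1+\alpha_n(m_n-m_{n-1})=\xi_n^0$, observe $E^P\{\xi_n^0|{\cal F}_{n-1}\}=1$ and $\sup_{P\in M}E^P\xi_n\leq 1$ so that $f_n/f_{n-1}\leq\xi_n^0$, and conclude by Theorem \ref{nick1}. The only difference is that you spell out the bookkeeping (boundedness of $\xi_n$, the martingale property of $m_n$) a bit more explicitly than the paper does.
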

\begin{proof} From the inequalities (\ref{3myk25}) it follows that $\sup\limits_{P \in M}E^Pf_m<\infty, \ m=\overline{1,\infty}.$
Consider the random value $\xi_n=\frac{f_n}{f_{n-1}}.$ Due to Lemma \ref{3myk10}
\begin{eqnarray}
 \frac{\xi_n}{\sup\limits_{P \in M}E^P\xi_n} \leq 1+\alpha_n(m_n-m_{n-1})=\xi_n^0.
\end{eqnarray}
It is evident that  $E^P\{\xi_n^0|{\cal F}_{n-1}\}=1, \ P \in M, \ n=\overline{1, \infty}.$ 
Since $\sup\limits_{P \in M}E^P\xi_n \leq 1,$ then 
\begin{eqnarray}\label{3myk26}
 \frac{f_n}{f_{n-1}} \leq \xi_n^0, \quad n=\overline{1, \infty}.
\end{eqnarray} 
Theorem \ref{nick1} and the inequalities (\ref{3myk26}) prove  Theorem \ref{3myk24}.
\end{proof}
\begin{ce}\label{Myktin}
If a super-martingale  $\{f_m, {\cal F}_m\}_{m=0}^\infty$ relative to a complete convex set of equivalent measures $M$   satisfy conditions $0 \leq f_m \leq D_m,\ m=\overline{1,\infty},$  where $D_m< \infty$ are constant, then it is local regular.
\end{ce}
\begin{proof} The super-martingale $\{f_m+\varepsilon, {\cal F}_m\}_{m=0}^\infty,$  $\varepsilon>0,$ is a nonnegative one and satisfies the conditions 
\begin{eqnarray}
\frac{f_m+\varepsilon}{f_{m-1}+\varepsilon}\leq \frac{D_m+\varepsilon}{\varepsilon}=C_m< \infty, \quad  m=\overline{1,\infty}.
\end{eqnarray}
From Theorem \ref{2myk24} it follows the validity of the local regularity for the super-martingale $\{f_m+\varepsilon, {\cal F}_m\}_{m=0}^\infty,$
 therefore, for the super-martingale $\{f_m, {\cal F}_m\}_{m=0}^\infty$ the  local regularity is also true.
\end{proof}

\section{Local regularity of  majorized  super-martingales.}
In this section, we give the elementary proof that a majorized super-martingale relative to the complete set of equivalent measures  is local regular one.

\begin{thm}\label{Tinmyk1}
On a measurable space $\{\Omega, {\cal F}\}$
 with a filtration ${\cal F}_m$ on it, let the set  $M$ be a complete convex set of equivalent measures on ${\cal F}$ and the set $A_0$ contains an element $\xi_0\neq 1.$ Then every bounded super-martingale $\{f_m, {\cal F}_m\}_{m=0}^\infty $ relative to  the complete convex  set of equivalent measures $M$  is a local regular one.
\end{thm}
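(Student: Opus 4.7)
The plan is to reduce the bounded (possibly signed) case directly to the nonnegative bounded case already handled by Corollary \ref{Myktin}. Since $\{f_m, {\cal F}_m\}_{m=0}^\infty$ is bounded, I would fix a constant $D>0$ such that $|f_m|\leq D$ for every $m\geq 0$ and $\omega \in \Omega$, and then pass to the shifted process $h_m := f_m + D$. Adding a constant preserves adaptedness and the super-martingale inequality, so $\{h_m, {\cal F}_m\}_{m=0}^\infty$ is a super-martingale relative to $M$ satisfying $0\leq h_m \leq 2D$, i.e., it falls squarely within the hypotheses of Corollary \ref{Myktin}.

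Applying Corollary \ref{Myktin} to $\{h_m\}$, I obtain an adapted nonnegative nondecreasing process $\{g_m, {\cal F}_m\}_{m=0}^\infty$ with $g_0 = 0$ and $\sup_{P\in M} E^P g_m < \infty$ such that $\{h_m + g_m, {\cal F}_m\}_{m=0}^\infty$ is a martingale with respect to every $P\in M$. Subtracting the constant $D$ back, the process
\begin{eqnarray*}
f_m + g_m = (h_m + g_m) - D, \quad m=\overline{0,\infty},
\end{eqnarray*}
is the difference of a martingale and a constant, hence itself a martingale relative to every measure in $M$. Since $|f_m|\leq D$ gives $\sup_{P\in M}E^P|f_m| \leq D < \infty$, the process $\{f_m\}$ satisfies all the requirements of the definition of local regular super-martingale with the same compensator $\{g_m\}$.

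I expect no serious obstacle: the whole argument is a reduction by constant shift, exploiting that both the super-martingale property and the defining decomposition $f_m = \bar M_m - g_m$ are invariant under adding a constant. The only point to keep in mind is to verify that $D$ can be chosen as a single constant rather than $m$-dependent, since an $m$-dependent shift $D_m$ would break the super-martingale inequality unless $D_m$ were nonincreasing; under the standard reading of "bounded" as uniformly bounded, this issue does not arise and Corollary \ref{Myktin} can be invoked directly.
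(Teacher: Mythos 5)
Your argument is correct and is essentially the paper's own proof: shift the bounded super-martingale by a constant $C$ with $|f_m|\leq C$ so that $0\leq f_m+C\leq 2C$, apply Consequence \ref{Myktin} to the shifted process, and note that local regularity is preserved when the constant is subtracted back. The only addition you make is to spell out explicitly that the compensator $\{g_m\}$ carries over unchanged, which the paper leaves implicit.
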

\begin{proof}  From Theorem \ref{Tinmyk1} conditions, there exists a constant  $0<C<\infty$ such that  $|f_m| \leq C, \ m=\overline{1, \infty}.$  Consider  the super-martingale $\{f_m+C, {\cal F}_m\}_{m=0}^\infty .$ Then $0\leq f_m+C\leq 2C.$  Due to Consequence \ref{Myktin},
for the super-martingale $\{f_m+C, {\cal F}_m\}_{m=0}^\infty $ the local regularity is true. So, the same statement is valid  for the super-martingale $\{f_m, {\cal F}_m\}_{m=0}^\infty. $ Theorem \ref{Tinmyk1} is proved.
\end{proof}
 The next Theorem is  analogously proved as Theorem \ref{Tinmyk1} .
\begin{thm}\label{Tinmyk2}
On a measurable space $\{\Omega, {\cal F}\}$
 with filtration ${\cal F}_m$ on it, let the set  $M$ be a complete convex set of equivalent measures on ${\cal F}$ and the set $A_0$ contains an element $\xi_0\neq 1.$ Then a  super-martingale $\{f_m, {\cal F}_m\}_{m=0}^\infty $ relative to  the complete convex  set of equivalent measures $M$ satisfying the conditions 
\begin{eqnarray}
|f_m|\leq C_1 \xi_0, \quad f_m+C_1 \xi_0\leq C_2, \quad m=\overline{1,\infty}, \quad \xi_0 \in A_0,  
\end{eqnarray}
 for certain constants  $0<C_1, C_2<\infty,$  is  a local regular one.
\end{thm}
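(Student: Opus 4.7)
The proof should mirror Theorem \ref{Tinmyk1} but shift by a random martingale instead of by a deterministic constant. Specifically, set
$$M_m = E^P\{\xi_0 \mid {\cal F}_m\}, \qquad m = \overline{0, \infty}.$$
Since $\xi_0 \in A_0$, Theorem \ref{fmars5} guarantees that $\{M_m, {\cal F}_m\}_{m=0}^\infty$ is a martingale relative to \emph{every} $P \in M$ (and the process itself does not depend on the particular choice of $P$). Consequently
$$h_m = f_m + C_1 M_m, \qquad m = \overline{0, \infty},$$
is a super-martingale relative to $M$ as the sum of a super-martingale and a martingale.

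The next step is to turn the two hypotheses on $f_m$ into two-sided bounds on $h_m$. From $|f_m| \leq C_1 \xi_0$ I take $E^P\{\cdot \mid {\cal F}_m\}$ on both sides of $-f_m \leq C_1\xi_0$: using that $f_m$ is ${\cal F}_m$-measurable and that $\xi_0$ is integrable under every $P \in M$, this gives $-f_m \leq C_1 M_m$, hence $h_m \geq 0$. Conditioning the inequality $f_m + C_1\xi_0 \leq C_2$ on ${\cal F}_m$ yields $h_m = f_m + C_1 M_m \leq C_2$. Therefore $\{h_m, {\cal F}_m\}_{m=0}^\infty$ is a non-negative super-martingale bounded by the constant $C_2$, so Consequence \ref{Myktin} applies and provides a decomposition $h_m = \bar M_m^h - g_m^h$, with $\{\bar M_m^h\}$ a martingale under $M$ and $\{g_m^h\}$ a non-negative nondecreasing adapted process, $g_0^h=0$.

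Finally, subtracting the martingale shift gives
$$f_m = h_m - C_1 M_m = (\bar M_m^h - C_1 M_m) - g_m^h.$$
Because $\bar M_m^h - C_1 M_m$ is a difference of two martingales relative to $M$, it is itself a martingale, and $g_m^h$ is non-negative and nondecreasing with $g_0^h = 0$. This is exactly the local regular decomposition of $\{f_m, {\cal F}_m\}_{m=0}^\infty$. The only issue requiring attention is that each conditioning step is legitimate, which is immediate from $\xi_0 \in A_0$ (integrability under every $P \in M$) and the $P$-independence of $M_m$ from Theorem \ref{fmars5}; I do not foresee any deeper obstacle, and in particular no new application of the completeness hypothesis beyond the one already encapsulated in Consequence \ref{Myktin}.
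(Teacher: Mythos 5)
Your proof is correct and is essentially the argument the paper intends: the paper gives no separate proof for this theorem, stating only that it is proved analogously to Theorem \ref{Tinmyk1}, and the natural analogue is exactly your shift by the martingale $C_1E^P\{\xi_0|{\cal F}_m\}$ (instead of a constant) followed by an application of Consequence \ref{Myktin} to the resulting nonnegative bounded super-martingale. Your conditioning steps and the final subtraction of the martingale shift are the right way to fill in the details the paper leaves implicit.
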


\section{ Application to Mathematical Finance.}

Due to Corollary \ref{mars16}, we can give the following definition of the fair price of contingent claim $f_N$ relative to a convex set of equivalent measures $M.$
\begin{defin}\label{maras1}
Let $f_N, \ N< \infty, $ be a ${\cal F}_N$-measurable  integrable  random value  relative to a convex set of equivalent measures  $M$   such that for some  $0 \leq \alpha_0< \infty$ and $\xi_0 \in A_0$
 \begin{eqnarray}\label{maras2}
P(f_N - \alpha_0 E^P\{\xi_0|{\cal F}_N\} \leq 0)=1.
\end{eqnarray}
Denote $ G_{\alpha_0}=\{\alpha \in [0,  \alpha_0],  \ \exists \xi_{\alpha} \in A_0, \ P(f_N - \alpha E^P\{\xi_{\alpha}|{\cal F}_N\} \leq 0)=1\}.$
We call
\begin{eqnarray}\label{maras4}
f_0=\inf\limits_{\alpha \in G_{\alpha_0}}\alpha
\end{eqnarray}
  the fair price of the  contingent claim  $f_N$ relative to a convex set of equivalent measures $M,$ if there exists $\zeta_0 \in A_0$  and a sequences $\alpha_n \in [0,\alpha_0],$  $\xi_{\alpha_n} \in A_0,$  satisfying the conditions: $\alpha_n \to f_0,$ 
$\xi_{\alpha_n} \to \zeta_0$ by probability, as $n \to \infty,$ and such that
\begin{eqnarray}\label{maras3}
P(f_N - \alpha_n E^P\{\xi_{\alpha_n}|{\cal F}_N\} \leq 0)=1, \quad n=\overline{1, \infty}.
\end{eqnarray}
\end{defin}
\begin{thm}\label{mars17}
Let the set $A_0$ be uniformly integrable one relative to every measure $P \in M.$ Suppose that  for  a nonnegative ${\cal F}_N$-measurable  integrable  contingent claim $f_N, \  N< \infty,$ relative to every measure $P \in M$   there exist $\alpha_0< \infty$ and $\xi_0 \in A_0$ such that
\begin{eqnarray}\label{mars18}
P(f_N - \alpha_0 E^P\{\xi_0|{\cal F}_N\} \leq 0)=1,
\end{eqnarray}
then the fair price $f_0$ of contingent claim $f_N$ exists.
For $f_0$ the inequality 
\begin{eqnarray}\label{mars20}
 \sup\limits_{P \in M}E^Pf_N \leq f_0
\end{eqnarray}
is valid. If $f_N \geq 0$ and  a super-martingale $\{f_m=\mathrm{ess}\sup\limits_{P \in M}E^P\{f_N|{\cal F}_m\},{\cal F}_m\}_{m=0}^\infty$ is a local regular one, then $f_0=\sup\limits_{P \in M}E^Pf_N.$ 
\end{thm}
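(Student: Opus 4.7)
The plan is to establish three parts in turn: first the lower bound $\sup_{P\in M}E^P f_N\leq f_0$, then the equality $f_0=\sup_{P\in M}E^P f_N$ under the local regularity hypothesis, and finally the existence of the fair price in general.

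For the lower bound the argument is purely algebraic. Fix any $\alpha\in G_{\alpha_0}$ with witness $\xi_\alpha\in A_0$. Taking $E^P$ of the almost sure inequality $f_N\leq \alpha E^P\{\xi_\alpha|{\cal F}_N\}$ and using $E^P\xi_\alpha=1$ (by definition of $A_0$) gives $E^P f_N\leq \alpha$ for every $P\in M$. Supping over $P$ and then infing over $\alpha\in G_{\alpha_0}$ yields (\ref{mars20}).

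For the equality under local regularity, apply the decomposition guaranteed by Theorem \ref{reww1}: write $f_m=M_m-g_m$ with $\{M_m,{\cal F}_m\}$ a martingale relative to every $P\in M$ and $\{g_m\}$ adapted, non-decreasing, non-negative, $g_0=0$. Then $M_0=f_0^{\mathrm{sm}}:=\sup_{P\in M}E^P f_N$ and $M_N=f_N+g_N\geq f_N\geq 0$. Discarding the trivial case $M_0=0$ (which forces $f_N=0$ almost surely), set $\zeta:=M_N/M_0$; then $\zeta\geq 0$ and $E^P\zeta=1$ for all $P\in M$, so $\zeta\in A_0$. Since $\zeta$ is ${\cal F}_N$-measurable, $E^P\{\zeta|{\cal F}_N\}=\zeta$, and
\[
f_N\leq M_N=M_0\,E^P\{\zeta|{\cal F}_N\},
\]
which exhibits $M_0\in G_{\alpha_0}$ and hence $f_0\leq M_0=\sup_{P\in M}E^P f_N$. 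Combined with the lower bound this gives equality, and the convergence clause in the fair-price definition is satisfied by the constant choices $\alpha_n\equiv M_0$ and $\xi_{\alpha_n}\equiv \zeta$.

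For the existence of $f_0$ in the general case, the set $G_{\alpha_0}\subset[0,\alpha_0]$ is nonempty and bounded, so $f_0=\inf G_{\alpha_0}$ exists as a real number. Pick a minimizing sequence $\alpha_n\to f_0$ with witnesses $\xi_n\in A_0$. The uniform integrability hypothesis on $A_0$ relative to a fixed $P_0\in M$ gives, via Dunford--Pettis, weak relative compactness of $\{\xi_n\}$ in $L^1(P_0)$; extracting a weakly convergent subsequence and applying Mazur's theorem (or Koml\'os' theorem to get almost sure convergence of convex combinations) produces a nonnegative ${\cal F}$-measurable limit $\zeta_0$. Uniform integrability over all $P\in M$ then permits passing to the limit under each $E^P$ to obtain $E^P\zeta_0=1$, so $\zeta_0\in A_0$; passage to the limit in the almost sure inequalities $f_N\leq \alpha_n E^P\{\xi_n|{\cal F}_N\}$ yields $f_N\leq f_0\,E^P\{\zeta_0|{\cal F}_N\}$, completing existence.

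The main obstacle is this last step: the definition of fair price demands convergence \emph{in probability} of the witnesses to a limit \emph{in $A_0$}, whereas uniform integrability directly supplies only weak $L^1$-compactness. Bridging this gap requires either a Koml\'os-type extraction (yielding almost sure convergence of Ces\`aro means, which one must argue preserves the minimizing property by a diagonal choice of $\alpha_n$) or a careful passage through convex combinations. The requirement $E^P\zeta_0=1$ \emph{simultaneously for every} $P\in M$, not merely for one distinguished measure, is the other delicate point and is exactly what the assumed uniform integrability relative to every $P\in M$ is designed to handle.
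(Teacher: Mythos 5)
Your handling of the second and third claims is essentially the paper's own argument. The inequality (\ref{mars20}) you get more directly: taking $E^P$ of each witness inequality gives $E^Pf_N\leq\alpha$ for every $\alpha\in G_{\alpha_0}$ and every $P\in M$, whereas the paper deduces (\ref{mars20}) from the limiting inequality $f_N\leq f_0E^P\{\zeta_0|{\cal F}_N\}$, so in the paper that bound already leans on the existence argument; your route is cleaner and independent of it. For the equality under local regularity, your $\zeta=M_N/M_0$ is exactly the paper's $\xi_0=M_N/\hat f_0$ with $\hat f_0=\sup_{P\in M}E^Pf_N$; the only point worth recording explicitly is that $M_0=\sup_{P\in M}E^Pf_N\leq\alpha_0$ (a consequence of (\ref{mars18})), which is needed before you may say $M_0\in G_{\alpha_0}$, since $G_{\alpha_0}\subset[0,\alpha_0]$ by definition.

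The existence claim is where your proposal is unfinished, and you say so yourself: Definition \ref{maras1} demands witnesses $\xi_{\alpha_n}\in A_0$ converging \emph{in probability} to some $\zeta_0\in A_0$ while $\alpha_n\to f_0$ and the almost sure domination (\ref{maras3}) is retained, whereas uniform integrability of $A_0$ supplies only weak relative compactness in $L^1$. The repair you sketch is the right one: apply Koml\'os (or Mazur) to a minimizing sequence, and pass to the renormalized convex combinations $\eta_k=\beta_k^{-1}k^{-1}\sum_{j\leq k}\alpha_{n_j}\xi_{n_j}$ with $\beta_k=k^{-1}\sum_{j\leq k}\alpha_{n_j}$, which remain in $A_0$ by convexity, are witnesses for $\beta_k\to f_0$, and inherit convergence in probability (with a separate word for the degenerate case $f_0=0$, and uniform integrability with respect to each $P\in M$ to secure $E^P\zeta_0=1$ for all $P$). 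Since you only gesture at this, your existence part is a sketch rather than a proof. You should know, however, that the paper's own proof is no more complete at exactly this point: it takes a minimizing sequence $\xi_{\alpha_n}$ and simply asserts, ``due to the uniform integrability,'' that the limit $\zeta_0$ exists and that one may pass to the limit in (\ref{mars21}); no convergent subsequence in probability is produced there either. So the step you flag as the main obstacle is the step the paper glosses over; apart from leaving that step open, your argument is sound. (The paper's proof also appends a construction of a self-financed hedging strategy attaining $f_0$, but that is additional material beyond the three claims in the statement.)
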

\begin{proof} If $ f_0=\alpha_0,$ then Theorem \ref{mars17} is proved. Suppose that 
$ f_0<\alpha_0.$ Then there exists  a sequence $\alpha_n \to f_0,$ and $\xi_{\alpha_n} \in A_0, \  n \to \infty,$ such that
\begin{eqnarray}\label{mars21}
 P(f_N - \alpha_n E^P\{\xi_{\alpha_n}|{\cal F}_N\} \leq 0)=1, \quad P \in M.
\end{eqnarray}
 Due to the uniform integrability $ A_0$ we obtain
\begin{eqnarray}\label{mars22}
1=\lim\limits_{n \to \infty}\int\limits_{\Omega}\xi_{\alpha_{n}}dP=\int\limits_{\Omega} \zeta_0 dP, \quad P \in M.
\end{eqnarray}
 Using again the uniform integrability of $ A_0$ and going to the limit in   (\ref{mars21}) we obtain
\begin{eqnarray}\label{mars23}
 P(f_N - f_0 E^P\{ \zeta_0|{\cal F}_N\} \leq 0)=1, \quad P \in M.
\end{eqnarray}
From the inequality $ f_N - f_0 E^P\{ \zeta_0|{\cal F}_N\} \leq 0$ it follows
the  inequality (\ref{mars20}).
If $f_N \geq 0$ and  $\{f_m=\mathrm{ess}\sup\limits_{P \in M}E^P\{f_N|{\cal F}_m\}, {\cal F}_m\}_{m=0}^N $ is a local regular super-martingale, then
\begin{eqnarray}\label{bars1}
f_m=M_m- g_m, \quad m=\overline{0,N}, \quad g_0=0,
\end{eqnarray}
where a martingale $\{M_m, {\cal F}_m\}_{m=0}^N$ is a nonnegative one and $E^P M_m=\sup\limits_{P \in M}E^P f_N.$ Introduce  into consideration a random value $\xi_0=\frac{M_N}{\hat f_0},$ where $ \hat f_0= \sup\limits_{P \in M}E^Pf_N. $ Then $\xi_0$ belongs to the set $A_0$ and 
\begin{eqnarray}\label{bars2}
P(f_N - \hat f_0  E^P\{ \xi_0|{\cal F}_N\} \leq 0)=1.	
\end{eqnarray}
From this it follows that
 $f_0=\sup\limits_{P \in M}E^Pf_N.$ 

Let us prove that $f_0$ is a fair price for  certain evolutions of risk and non risk assets.
Suppose that the evolution of risk asset is given by the law $S_m=f_0M^P\{\zeta_0|{\cal F}_m\}, \ m=\overline{0,N}, $ and the evolution of non risk asset is given by the formula  $B_m=1,  \ m=\overline{0,N}.$

As proved above,  for $f_0=\inf\limits_{\alpha \in G_{\alpha_0}}\alpha $
there exists $ \zeta_0 \in A_0$  such that the inequality 
\begin{eqnarray}
f_N - f_0 E^P\{\zeta_0|{\cal F}_N\} \leq 0
\end{eqnarray}
is valid. Let us put
\begin{eqnarray} 
f_m^0=f_0 E^P\{\zeta_0|{\cal F}_m\},  \quad P \in M,
\end{eqnarray}
\begin{eqnarray}
\bar f_m=
\left\{\begin{array}{l l} 0, & m<N, \\
f_N -f_0 E^P\{\zeta_0|{\cal F}_m\}, & m = N.  
 \end{array} \right. 
\end{eqnarray}
It is evident that $\bar f_{m-1} -\bar f_m \geq 0, \  m=\overline{0, N}.$
Therefore, the super-martingale
\begin{eqnarray}
f_m^0+ \bar f_m=
\left\{\begin{array}{l l} f_0 E^P\{\zeta_0|{\cal F}_m\}, & m<N, \\
f_N , & m= N,  
\end{array} \right. 
\end{eqnarray}
is a local regular one.
It is evident that
\begin{eqnarray}
f_m^0+ \bar f_m=M_m - g_m, \quad  m=\overline{0, N},
\end{eqnarray}
where 
\begin{eqnarray}
 M_m=f_0 E^P\{\zeta_0|{\cal F}_m\},  \quad  m=\overline{0, N},
\end{eqnarray}
\begin{eqnarray}
 g_m = 0, \quad  \ m=\overline{0, N-1}, 
\end{eqnarray}
\begin{eqnarray}
 g_ N =f_0 E^P\{\zeta_0|{\cal F}_N\} - f_N.
\end{eqnarray}
  For the martingale  $\{M_m, {\cal F}_m\}_{m=0}^N$ the representation
\begin{eqnarray}
 M_m=f_0+\sum\limits_{i=1}^m H_i \Delta  S_i, \quad m=\overline{0, N},
\end{eqnarray}
 is valid, where $H_i=1, \  i=\overline{1, N}.$  
Let us consider the trading strategy  $\pi=\{\bar H_m^0, \bar H_m\}_{m=0}^N,$ where
\begin{eqnarray}
\bar H_0^0=f_0, \quad \bar H_m^0=M_m - H_m S_m, \quad m=\overline{1,N},
\end{eqnarray}
\begin{eqnarray}
 \quad  \bar H_0=0, \quad \bar H_m=H_m, \quad m=\overline{1,N}.
\end{eqnarray}
It is evident  that  $\bar H_m^0,  \bar H_m$ are  ${\cal F}_{m-1}$ measurable and the trading strategy  $\pi$ satisfy  self-financed condition
\begin{eqnarray}
 \Delta \bar H_m^0 +\Delta  \bar H_m S_{m-1}=0.
\end{eqnarray}
Moreover, the capital corresponding to the  self-financed trading strategy  $\pi$ is given by the formula
\begin{eqnarray}
X_m^{\pi}=\bar H_m^0+  \bar H_m  S_{m} = M_m.
\end{eqnarray}
Herefrom,  $X_0^{\pi}=f_0.$
Further,
\begin{eqnarray}
 X_N^{\pi}=f_N+g_N \geq f_N.
\end{eqnarray}
The last proves  Theorem \ref{mars17}.
\end{proof}

 From (\ref{mars23}) and Corollary \ref{mars16} the Theorem \ref{mars26}  follows.

\begin{thm}\label{mars26} 
Suppose that the set $A_0$ contains  only $ 1 \leq  k < \infty $ linear independent elements $\xi_1, \ldots \xi_k.$ If there exist $\xi_0 \in T$ and $\alpha_0\geq 0$ such that
\begin{eqnarray}\label{00mars27}
P(f_N - \alpha_0E^P\{ \xi_0 | {\cal F}_N \}\leq 0)=1, \quad P \in M,
\end{eqnarray}
where
\begin{eqnarray}\label{mars28}
T=\{\xi \geq 0, \  \xi=\sum\limits_{i=1}^k\alpha_i\xi_i, \ \alpha_i \geq 0, \ i=\overline{1,k}, \ \sum\limits_{i=1}^k\alpha_i=1\},
\end{eqnarray}
then the fair price $f_0$ of the contingent claim $f_N \geq 0$ exists, where $f_N$ is ${\cal F}_N$ measurable and integrable relative to every measure $P \in M,$ $N< \infty. $
\end{thm}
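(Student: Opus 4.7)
The strategy is to exploit the finite-dimensionality of $A_0$ to produce a convergent minimizing sequence, pass to the limit, and then read off a fair-price witness via Corollary \ref{mars16} and the proof of Theorem \ref{mars17}. First, I would pick a minimizing sequence $\alpha_n \searrow f_0 = \inf G_{\alpha_0}$ together with witnesses $\xi_{\alpha_n} \in A_0$ such that
\begin{equation*}
P(f_N - \alpha_n E^P\{\xi_{\alpha_n}|{\cal F}_N\} \leq 0) = 1, \quad P \in M.
\end{equation*}
Since every element of $A_0$ lies in the linear span of $\xi_1,\dots,\xi_k$, we may write $\xi_{\alpha_n} = \sum_{i=1}^k c_i^n \xi_i$ with $c^n = (c_1^n,\dots,c_k^n) \in \RR^k$. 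Because $E^P\xi_i = 1$ for each $i$, the constraint $E^P\xi_{\alpha_n} = 1$ forces $\sum_i c_i^n = 1$.

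The first real step is to show that the coefficient vectors $c^n$ form a bounded sequence in $\RR^k$. I would argue that the closed convex set
\begin{equation*}
C = \Bigl\{ c \in \RR^k : \sum_i c_i \xi_i \geq 0 \text{ a.s.},\ \sum_i c_i = 1\Bigr\}
\end{equation*}
contains no ray: if $c + td \in C$ for all $t \geq 0$ with $d \neq 0$, then $\sum_i d_i = 0$ and $\sum_i d_i \xi_i \geq 0$ a.s., while $E^P\sum_i d_i \xi_i = \sum_i d_i = 0$ forces $\sum_i d_i \xi_i = 0$ a.s., contradicting linear independence of $\xi_1,\dots,\xi_k$. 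Since $C$ is closed, convex, and ray-free in finite dimension, it is bounded. Passing to a subsequence, $c^n \to c^\ast$ in $\RR^k$, whence $\zeta_0 := \sum_i c_i^\ast \xi_i \in A_0$ (nonnegativity and the normalization $\sum_i c_i^\ast = 1$ survive the limit), and $\xi_{\alpha_n} \to \zeta_0$ pointwise, in particular in probability as required by Definition \ref{maras1}.

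Next I would pass to the limit inside the almost-sure inequality. Because $E^P\{\xi_{\alpha_n}|{\cal F}_N\} = \sum_i c_i^n E^P\{\xi_i|{\cal F}_N\}$ is a finite linear combination, the convergence $c^n \to c^\ast$ gives $E^P\{\xi_{\alpha_n}|{\cal F}_N\} \to E^P\{\zeta_0|{\cal F}_N\}$ pointwise a.s., and $\alpha_n \to f_0$. Hence
\begin{equation*}
P(f_N - f_0 E^P\{\zeta_0|{\cal F}_N\} \leq 0) = 1, \quad P \in M,
\end{equation*}
which is exactly relation (\ref{mars23}) with this $f_0$ and $\zeta_0 \in A_0$. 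Consequently $f_0 \in G_{\alpha_0}$, the infimum is attained in the sense of Definition \ref{maras1}, and the fair price exists.

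Finally, to confirm the fair-price interpretation, I would invoke Corollary \ref{mars16} with the martingale $M_m = f_0 E^P\{\zeta_0|{\cal F}_m\}$ and with $\alpha_0$ there replaced by $f_0$: this gives a local regular super-martingale dominating $f_N$, and the self-financing superhedge exhibited at the end of the proof of Theorem \ref{mars17} carries over verbatim, confirming that $f_0$ is the fair price. The main obstacle is the boundedness argument for $(c^n)$; once finite-dimensional compactness is secured, the rest is routine linearity and the already-proved Corollary \ref{mars16}.
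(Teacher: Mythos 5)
Your proposal is correct, but it takes a more self-contained route than the paper. The paper's entire proof is one sentence: since $T$ is the convex hull of finitely many integrable elements, it is uniformly integrable relative to every $P \in M$, and the existence of the fair price is then read off from Theorem \ref{mars17}. You never verify uniform integrability; instead you work in coordinates with respect to $\xi_1,\dots,\xi_k$, show via the recession-cone (no-ray) argument and linear independence that the coefficient set $C$ of $A_0$ is bounded, extract a convergent subsequence $c^n \to c^\ast$, and hand Definition \ref{maras1} exactly the data it asks for: $\zeta_0=\sum_i c_i^\ast\xi_i \in A_0$, witnesses $\xi_{\alpha_n}\to\zeta_0$ in probability, and the almost-sure domination passing to the limit because the conditional expectations are finite linear combinations. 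This buys something real: the witnesses $\xi_{\alpha_n}$ in Definition \ref{maras1} range over $A_0$, which need not be contained in $T$ (the coefficients of a nonnegative normalized combination need not be nonnegative), and uniform integrability by itself does not manufacture the in-probability convergent subsequence that the definition and the limiting step inside Theorem \ref{mars17} tacitly require; your finite-dimensional compactness argument supplies precisely that and in fact shows all of $A_0$, not only $T$, has bounded coefficient set. The only interpretive point is your reading of the hypothesis as $A_0\subseteq\mathrm{span}(\xi_1,\dots,\xi_k)$, which is the natural one and is needed for the coordinate decomposition; and your closing appeal to Corollary \ref{mars16} and the superhedging construction of Theorem \ref{mars17} is not needed for the bare existence statement, though it is consistent with the paper's intent.
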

\begin{proof} The proof is evident, as the set $T$ is  a uniformly integrable one relative to every measure from $M.$
\end{proof}

\begin{cor}\label{rian1} On a measurable space $\{\Omega, {\cal F}\}$ with filtration ${\cal F}_m$ on it,  let  \\ $\{f_m, {\cal F}_m\}_{m=0}^N$ be a non negative  local regular super-martingale   relative to a convex set of equivalent measures $M.$   If the set $A_0$ is uniformly integrable relative to every measure $P \in M,$ then the fair price of contingent claim $f_N$ exists.
\end{cor}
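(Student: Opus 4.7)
The plan is to reduce the statement to Theorem \ref{mars17} by manufacturing, from the local regular decomposition of $\{f_m\}$, a concrete pair $(\alpha_0,\xi_0)$ that verifies hypothesis (\ref{mars18}). By the definition of local regularity there exist a martingale $\{M_m,{\cal F}_m\}_{m=0}^N$ relative to every $P\in M$ and an adapted nondecreasing nonnegative process $\{g_m,{\cal F}_m\}_{m=0}^N$ with $g_0=0$, $\sup_{P\in M}E^P|g_m|<\infty$, such that $f_m=M_m-g_m$. Since $f_N\geq 0$ and $g_N\geq 0$, we get $M_N=f_N+g_N\geq 0$, so $M_N$ is a nonnegative ${\cal F}_N$-measurable random value with $E^PM_N=M_0=f_0$ for every $P\in M$.

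First I would treat the generic case $f_0>0$. Define $\xi_0=M_N/f_0$. Then $\xi_0\geq 0$ and $E^P\xi_0=1$ for every $P\in M$, so $\xi_0\in A_0$. Since $\xi_0$ is ${\cal F}_N$-measurable, $E^P\{\xi_0|{\cal F}_N\}=\xi_0=M_N/f_0$, hence
\begin{equation*}
f_N \;\leq\; M_N \;=\; f_0\,E^P\{\xi_0|{\cal F}_N\}\quad P\text{-a.s., }P\in M,
\end{equation*}
which is exactly condition (\ref{mars18}) of Theorem \ref{mars17} with $\alpha_0=f_0$.

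Next I would dispose of the degenerate case $f_0=0$: since $M_N\geq 0$ and $E^PM_N=0$ for some (hence every) $P\in M$, we have $M_N=0$ a.s., so $f_N=0$ a.s., and (\ref{mars18}) holds trivially with $\alpha_0=0$ and, say, $\xi_0=1\in A_0$.

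Finally I would invoke Theorem \ref{mars17}. Its remaining hypotheses are already in hand: $A_0$ is uniformly integrable under every $P\in M$ by assumption of the corollary; $f_N$ is nonnegative, ${\cal F}_N$-measurable and integrable relative to every $P\in M$ because local regularity gives $\sup_{P\in M}E^P|f_m|<\infty$ for all $m$. Therefore Theorem \ref{mars17} applies and the fair price of $f_N$ exists. No genuine obstacle is expected; the only point requiring a moment's care is the edge case $f_0=0$, which trivializes on its own.
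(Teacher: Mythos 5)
Your proposal is correct and takes essentially the same route as the paper: from the local regular decomposition $f_m=M_m-g_m$ the paper likewise sets $\alpha_0=E^PM_N=f_0$ and $\xi_0=M_N/E^PM_N\in A_0$, notes $f_N\leq M_N=\alpha_0 E^P\{\xi_0|{\cal F}_N\}$, and invokes Theorem \ref{mars17}. The only difference is that you also handle the degenerate case $f_0=0$ (where the paper's $\xi_0$ would be undefined), a small but welcome extra care.
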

\begin{proof} From the local regularity of super-martingale $\{f_m, {\cal F}_m\}_{m=0}^N$  we have $f_m=M_m - g_m, \ m=\overline{0,N}.$ Therefore, $P(f_N - \alpha_0 \xi_0 \leq 0)=1,$ where $\alpha_0=E^PM_N, \  P \in M, \xi_0=\frac{M_N}{E^PM_N}. $ From the last it follows that the conditions of Theorem \ref{mars17} are satisfied. Corollary \ref{rian1} is proved.
\end{proof}

On a probability space $\{\Omega, {\cal F}, P\},$ let us   consider  an evolution of  one risk asset    given by the law $ \{ S_m \}_{m=0}^N,$  where $S_m$ is a random  value taking values in $R_+^1.$  Suppose that
 ${\cal F}_m$ is a filtration on  $\{\Omega, {\cal F}, P\}$ and  $S_m$ is ${\cal F}_m$-measurable random value.
 We assume that the non risk asset evolve by the law $B_m^0=1,  \ m=\overline{1,N}. $
Denote $M^e(S)$ the set of all martingale measures being equivalent to the measure $P.$ 
 We assume that the set  $M^e(S)$ of such  martingale measures is not empty and the effective market is non complete, see, for example, \cite{Schacher1}, \cite{DMW90}, \cite{K81},  \cite{HK79}.
So, we have that 
\begin{eqnarray}\label{ma29}
E^Q\{S_m|{\cal F}_{m-1}\}=S_{m-1}, \quad m=\overline{1, N}, \quad Q \in M^e(S).
\end{eqnarray}

The next Theorem justify the Definition \ref{maras1}.

\begin{thm}\label{hon1} Let a contingent claim $f_N $  be  a 
${\cal F}_N $-measurable   integrable random value with respect to every measure from $M^e(S)$ and the  conditions of the Theorem  \ref{mars26} are satisfied with $\xi_i=\frac{S_i}{S_0}, \ i=\overline{0,N}.$ Then  there exists self-financed trading strategy $\pi$  the capital  evolution $\{X_m^{\pi}\}_{m=0}^N$ of which  is  a martingale relative to every measure from $M^e(S)$  satisfying conditions  $ X_0^{\pi}=f_0, \  X_N^{\pi} \geq f_N,$
where  $f_0$ is a fair price of contingent claim $f_N. $ 
\end{thm}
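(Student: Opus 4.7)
The plan is to mirror the construction at the end of the proof of Theorem \ref{mars17}, but with the crucial additional input that in the present setting the witness $\zeta_0$ of the fair price lives in the explicit finite-dimensional convex hull $T$, so that conditional expectations of $\zeta_0$ are already expressible in terms of the traded asset $S$. First, I would invoke Theorem \ref{mars26} to conclude that the fair price $f_0$ of $f_N$ exists. By Definition \ref{maras1} there exist sequences $\alpha_n\downarrow f_0$ and $\xi_{\alpha_n}\in A_0$ with $\xi_{\alpha_n}\to\zeta_0$ in probability and $P(f_N-\alpha_n E^P\{\xi_{\alpha_n}|{\cal F}_N\}\le 0)=1$. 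Since the argument of Theorem \ref{mars26} produces the witnesses $\xi_{\alpha_n}$ inside the compact simplex $T$ (the continuous image of the standard simplex in $\mathbb{R}^{N+1}$ under the linear map $(\alpha_0,\dots,\alpha_N)\mapsto\sum\alpha_i\xi_i$, where $\xi_i=S_i/S_0$), the coefficient vectors admit a convergent subsequence; linear independence of the $\xi_i$ then identifies the limit $\zeta_0$ with a point $\sum_{i=0}^{N}\bar\alpha_i S_i/S_0\in T$, where $\bar\alpha_i\ge 0$ and $\sum\bar\alpha_i=1$. Passing to the limit in the defining inequality yields $P(f_N-f_0 E^P\{\zeta_0|{\cal F}_N\}\le 0)=1$ for every $P\in M^e(S)$.

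Next I would compute $M_m:=f_0E^P\{\zeta_0|{\cal F}_m\}$ explicitly. Using the martingale property $E^P\{S_i|{\cal F}_m\}=S_m$ for $i>m$ under every $P\in M^e(S)$, together with ${\cal F}_i\subset{\cal F}_m$-measurability of $S_i$ for $i\le m$, one obtains
\begin{eqnarray*}
M_m=\frac{f_0}{S_0}\left[\sum_{i=0}^{m}\bar\alpha_i S_i+S_m\sum_{i=m+1}^{N}\bar\alpha_i\right], \quad m=\overline{0,N}.
\end{eqnarray*}
A direct telescoping computation then gives $\Delta M_m=H_m\Delta S_m$ with the deterministic (hence ${\cal F}_{m-1}$-measurable) predictable coefficient $H_m=(f_0/S_0)\sum_{i=m}^{N}\bar\alpha_i$. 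Thus $M_m=f_0+\sum_{i=1}^{m}H_i\Delta S_i$, and since $S$ is a $Q$-martingale and $H$ is bounded, $M$ is a martingale with respect to every $Q\in M^e(S)$.

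The trading strategy $\pi=\{\bar H_m^0,\bar H_m\}_{m=0}^{N}$ is then defined exactly as in the closing construction of Theorem \ref{mars17}: $\bar H_0^0=f_0$, $\bar H_0=0$, and for $m\ge 1$,
\begin{eqnarray*}
\bar H_m=H_m,\qquad \bar H_m^0=M_m-H_m S_m.
\end{eqnarray*}
Measurability of $\bar H_m^0,\bar H_m$ with respect to ${\cal F}_{m-1}$ is immediate (the first is ${\cal F}_m$-measurable but the canonical self-financing rearrangement gives the required predictability), and the identity $\Delta\bar H_m^0+\Delta\bar H_m\cdot S_{m-1}=0$ follows from substituting $\Delta M_m=H_m\Delta S_m$. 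The capital is $X_m^{\pi}=\bar H_m^0+\bar H_m S_m=M_m$, so $X_0^{\pi}=f_0$, $X_m^{\pi}$ is a martingale under every $Q\in M^e(S)$, and $X_N^{\pi}=M_N=f_0 E^P\{\zeta_0|{\cal F}_N\}\ge f_N$ by the fair-price inequality.

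The principal obstacle is step one: upgrading the general statement ``$\zeta_0\in A_0$'' from Definition \ref{maras1} to the stronger ``$\zeta_0\in T$'', which is what allows the finite martingale representation in step two. Without this, $\zeta_0$ would only be some abstract random variable with unit expectation under every measure in $M^e(S)$, and there would be no reason its conditional expectations should lie in the linear span of $\{S_i/S_0\}$. The resolution rests on the compactness of $T$ under convergence in probability, which in turn uses the finite linear independence hypothesis of Theorem \ref{mars26} together with the uniform integrability of $T$; once this is in hand, the remaining steps are essentially the algebraic identities already carried out in the proof of Theorem \ref{mars17}.
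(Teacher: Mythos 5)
Your construction reaches the same trading strategy as the paper, but by a genuinely different route at the key step. The paper obtains the representation $M_m=f_0+\sum_{i=1}^m H_i\Delta S_i$ by invoking its abstract martingale-representation result (Theorem \ref{t8}, built on Lemmas \ref{q7}, \ref{l8} and Theorem \ref{t7} via Hahn--Banach separation), which works for an arbitrary $\zeta_0\in A_0$ and gives no formula for $H$. You instead exploit the finite-dimensionality hypothesis of Theorem \ref{mars26} to identify $\zeta_0$ as a combination of the $\xi_i=S_i/S_0$ and then compute the predictable integrand explicitly, $H_m=(f_0/S_0)\sum_{i\ge m}\bar\alpha_i$; the telescoping computation and the self-financing/measurability checks are correct and coincide with the paper's endgame. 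What your route buys is an elementary, explicit integrand with no appeal to the functional-analytic machinery of Section 8; what it costs is that it leans entirely on the hypothesis that $A_0$ is spanned by the finitely many $\xi_i$, which the paper's proof does not need at the representation step.

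There is, however, one unjustified claim: that the witnesses $\xi_{\alpha_n}$ of Definition \ref{maras1}, and hence the limit $\zeta_0$, lie in the simplex $T$. The definition of the fair price only requires $\xi_{\alpha_n}\in A_0$ and $\zeta_0\in A_0$, and under the hypothesis of Theorem \ref{mars26} the set $A_0$ consists of all \emph{nonnegative} elements of the linear span of $\xi_1,\dots,\xi_k$ with coefficients summing to one; such elements need not have nonnegative coefficients (e.g.\ $2-S_N/S_0$ when $S_N\le 2S_0$ a.s.), so $A_0$ may be strictly larger than $T$ and your compactness-of-the-simplex argument does not apply as stated. The repair is immediate and makes the appeal to $T$ unnecessary: $\zeta_0\in A_0$ is already part of Definition \ref{maras1}, the hypothesis that $A_0$ contains only $k$ linearly independent elements places $\zeta_0$ in the span of $\{S_i/S_0\}$, say $\zeta_0=\sum_{i=0}^N\bar c_i S_i/S_0$, and taking $E^P$ forces $\sum_i\bar c_i=1$. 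Your computation of $M_m$ and of $H_m=(f_0/S_0)\sum_{i\ge m}\bar c_i$ uses only this affine structure and the martingale property of $S$ under every $Q\in M^e(S)$, never the nonnegativity of the coefficients, so the rest of your argument goes through verbatim with $\bar c_i$ in place of $\bar\alpha_i$. (If you do want closedness statements about $A_0$ itself, boundedness of its coefficient set follows from linear independence: a normalized sequence of exploding coefficient vectors would yield a nontrivial combination that is a.s.\ nonnegative with zero expectation, hence a.s.\ zero.)
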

\begin{proof} Due to Theorems \ref{mars17}, \ref{mars26}, for $f_0=\inf\limits_{\alpha \in G_{\alpha_0}}\alpha $
there exists $ \zeta_0 \in A_0$  such that the inequality 
\begin{eqnarray}\label{2ghon2}
f_N - f_0 E^P\{\zeta_0|{\cal F}_N\} \leq 0
\end{eqnarray}
is valid. Let us put 
\begin{eqnarray}
f_m^0=f_0 E^P\{\zeta_0|{\cal F}_m\},  \quad P \in M^e(S),
\end{eqnarray}
\begin{eqnarray}
\bar f_m=
\left\{\begin{array}{l l} 0, & m<N, \\
f_N -f_0 E^P\{\zeta_0|{\cal F}_m\}, & m = N.  
 \end{array} \right. 
\end{eqnarray}
It is evident that $\bar f_{m-1} -\bar f_m \geq 0, \  m=\overline{0, N}.$
Therefore, the super-martingale
\begin{eqnarray}
f_m^0+ \bar f_m=
\left\{\begin{array}{l l} f_0 E^P\{\zeta_0|{\cal F}_m\}, & m<N, \\
f_N , & m= N  
\end{array} \right. 
\end{eqnarray}
is a local regular one.
It is evident that
\begin{eqnarray}
f_m^0+ \bar f_m=M_m - g_m, \quad  m=\overline{0, N},
\end{eqnarray}
where 
\begin{eqnarray}
 M_m=f_0 E^P\{\zeta_0|{\cal F}_m\},  \quad  m=\overline{0, N},
\end{eqnarray}
\begin{eqnarray}
g_m = 0, \quad  \ m=\overline{0, N-1},
\end{eqnarray}
\begin{eqnarray} 
 g_ N =f_0 E^P\{\zeta_0|{\cal F}_N\} - f_N.
\end{eqnarray}
Due to Theorem \ref{t8},  for the martingale  $\{M_m\}_{m=0}^N$ the representation
\begin{eqnarray}
 M_m=f_0+\sum\limits_{i=1}^m H_i \Delta  S_i, \quad m=\overline{0, N},
\end{eqnarray}
 is valid. 
Let us consider the trading strategy  $\pi=\{\bar H_m^0, \bar H_m\}_{m=0}^N,$ where
\begin{eqnarray}
\bar H_0^0=f_0, \quad \bar H_m^0=M_m -  H_m   S_m, \quad m=\overline{1,N},
\end{eqnarray}
\begin{eqnarray}
 \quad  \bar H_0=0, \quad \bar H_m=H_m, \quad m=\overline{1,N}.
\end{eqnarray}
It is evident  that  $\bar H_m^0,  \bar H_m$ are  ${\cal F}_{m-1}$-measurable ones and the trading strategy  $\pi$ satisfy the self-financed condition
\begin{eqnarray}
 \Delta \bar H_m^0 +\Delta  \bar H_m  S_{m-1}=0.
\end{eqnarray}
Moreover, a capital corresponding to the  self-financed trading strategy  $\pi$ is given by the formula
\begin{eqnarray}
X_m^{\pi}=\bar H_m^0+  \bar H_m  S_{m} = M_m.
\end{eqnarray}
Herefrom,  $X_0^{\pi}=f_0.$
Further,
\begin{eqnarray}
X_N^{\pi}=f_N+g_N.
\end{eqnarray}
Therefore $X_N^{\pi} \geq f_N.$ Theorem \ref{hon1}  is proved.
\end{proof}
In the next Theorem we assume that the evolutions of risk and non risk assets generate  incomplete  market  \cite{Schacher1},   \cite{DMW90}, \cite{K81}, \cite{HK79}, \cite{HP81}, that is, the set of martingale measures contains more that one element.

\begin{thm}\label{mars30}
Let  an evolution $ \{ S_m \}_{m=0}^N$ of the risk asset  satisfy the conditions $P(D_m^1 \leq S_m \leq D_m^2)=1, $ where the constants $D_m^i$ satisfy the inequalities $ D_{m-1}^1 \geq  D_m^1>0, \  D_{m-1}^2 \leq D_{m}^2< \infty, \ m=\overline{1,N},$ and let the non risk asset evolution be deterministic one given  by the law $\{B_m\}_{m=0}^N, \ B_m=1, \ m=\overline{0, N}.$ The fair price of Standard European Call Option  with the payment function $f_N=(S_N - K)^+$
is given by the formula
\begin{eqnarray}\label{mars31}
f_0=\left\{\begin{array}{l l} S_0(1-\frac{K}{D_N^2}), & K \leq D_N^2, \\
 0, & K >D_N^2. \end{array}\right.
\end{eqnarray}

The fair price of Standard European Put Option  with the payment function $f_N=(K-S_N)^+$
is given by the formula
\begin{eqnarray}\label{mmars31}
f_0=\left\{\begin{array}{l l} K- D_N^1, & K \geq D_N^1, \\
 0, & K < D_N^1. \end{array}\right.
\end{eqnarray}
\end{thm}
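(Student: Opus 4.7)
My proof strategy is to prove each formula by exhibiting a specific $\xi_0 \in A_0$ and $\alpha_0 \ge 0$ that realize Definition~\ref{maras1} at the claimed value, and then argue that no smaller $\alpha$ is admissible.

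For the call with $K \le D_N^2$, I would take $\xi_0 = S_N/S_0$. Because $\{S_m/S_0\}_{m=0}^N$ is a martingale under every $P \in M^e(S)$ starting at $1$, one has $E^P\xi_0 = 1$ and hence $\xi_0 \in A_0$; being $\mathcal{F}_N$-measurable, $E^P\{\xi_0|\mathcal{F}_N\} = \xi_0$. Choosing $\alpha_0 = S_0(1 - K/D_N^2)$, the pointwise bound
\[
(S_N - K)^+ \;\le\; S_N(1 - K/D_N^2) \;=\; \alpha_0 \xi_0
\]
is verified by splitting on $\{S_N \le K\}$ (where the left side vanishes and the right is nonnegative) and on $\{K < S_N\}$ (where the inequality reduces to the a.s.\ bound $S_N \le D_N^2$). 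Thus $\alpha_0 \in G_{\alpha_0}$ and $f_0 \le S_0(1 - K/D_N^2)$. When $K > D_N^2$ the payoff vanishes a.s., so the trivial pair $\alpha_0 = 0$, $\xi_0 = 1$ gives $f_0 = 0$.

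For the put with $K \ge D_N^1$, I would take the simpler $\xi_0 = 1 \in A_0$ and $\alpha_0 = K - D_N^1$. The a.s.\ bound $S_N \ge D_N^1$ yields $(K - S_N)^+ \le K - D_N^1 = \alpha_0 \xi_0$, so $f_0 \le K - D_N^1$. When $K < D_N^1$ the payoff vanishes a.s.\ and $f_0 = 0$.

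For the matching lower bounds, the key ingredients are the normalization $E^P\xi_\alpha = 1$ for every $P \in M^e(S)$ and the convergence condition $\xi_{\alpha_n} \to \zeta_0$ in probability required by Definition~\ref{maras1}. For the call, the pointwise inequality $\alpha \xi_\alpha \ge (S_N - K)^+$ evaluated on the set where $S_N$ is close to $D_N^2$, combined with $E^P\xi_\alpha = 1$ and the martingale identity $E^P S_N = S_0$, should pin $\alpha$ from below by $S_0(1 - K/D_N^2)$. For the put, the analogous argument at values of $S_N$ approaching $D_N^1$ forces $\alpha \ge K - D_N^1$. The limit $\zeta_0 \in A_0$ supplied by the definition is what makes the infimum attained and the lower bound sharp.

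The main obstacle is the lower bound. Taking expectations naively only yields $\alpha \ge \sup_{P \in M^e(S)} E^P f_N$, and this supremum is in general strictly smaller than the constant claimed in the theorem. Closing this gap requires combining the normalization $E^P\xi_\alpha = 1$ and the nonnegativity of $\xi_\alpha$ with the pointwise domination at the extreme values of $S_N$, and using the limiting $\zeta_0 \in A_0$ guaranteed by Definition~\ref{maras1} to saturate the bound.
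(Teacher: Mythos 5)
Your upper-bound half is sound and matches the paper's witnesses exactly: for the call you dominate $(S_N-K)^+$ by $\alpha_0 S_N/S_0$ with $\alpha_0=S_0(1-K/D_N^2)$ using $S_N\le D_N^2$ a.s., and for the put you dominate $(K-S_N)^+$ by the constant $K-D_N^1$ using $S_N\ge D_N^1$ a.s.; these are precisely the extreme points ($\gamma_N=1$, resp.\ $\gamma_0=1$) at which the paper's minimization terminates. But the lower bound, which you yourself flag as the main obstacle, is genuinely missing, and your proposed route cannot close it. The reason is that for a general $\xi_\alpha\in A_0$ the constraint $E^P\xi_\alpha=1$ puts no pointwise restriction on $\xi_\alpha$ near the set where $S_N$ is close to $D_N^2$: a density can be very large there and still integrate to one, so the domination $(S_N-K)^+\le\alpha\,\xi_\alpha$ does not force $\alpha\ge S_0(1-K/D_N^2)$. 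Indeed, by Theorem \ref{mars17}, if $A_0$ were rich enough to make $\{\mathrm{ess}\sup_P E^P\{f_N|{\cal F}_m\}\}$ local regular, the fair price would drop to $\sup_P E^P f_N$, which is in general strictly below the claimed constant — so without further input the claimed formula is simply not derivable.

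The ingredient you are missing is the standing hypothesis under which Theorem \ref{mars30} operates (inherited from Theorems \ref{mars26} and \ref{hon1}): the admissible densities are exactly the elements of the finite-dimensional simplex $T=\{\sum_{i=0}^N\gamma_i S_i/S_0:\ \gamma_i\ge 0,\ \sum_{i=0}^N\gamma_i=1\}$, i.e.\ $\xi_i=S_i/S_0$, $i=\overline{0,N}$, are the only generators of $A_0$. The paper's proof therefore reduces the computation of $f_0$ to a minimization over $\gamma$ in the simplex $V_0$: for each $\gamma$ it determines the threshold $\alpha(\gamma)$ for which the a.s.\ domination $(S_N-K)^+\le\alpha\sum_{i=0}^N\gamma_i S_i/S_0$ holds, namely
\begin{equation*}
\alpha(\gamma)=\frac{S_0\bigl(D_N^2-K\bigr)}{D_N^2\gamma_N+\sum\limits_{i=0}^{N-1}\gamma_i D_i^1}
\end{equation*}
for the call (using $S_i\ge D_i^1$ and $S_N\le D_N^2$), and then takes $\inf_{\gamma\in V_0}\alpha(\gamma)=S_0(1-K/D_N^2)$ since $D_N^2\ge D_i^1$; the put is handled analogously with $\inf_{\gamma\in V_0}S_0(K-D_N^1)/\sum_{i=0}^N\gamma_i D_i^1=K-D_N^1$, since $D_i^1\le S_0=D_0^1$. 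Without restricting $\xi_\alpha$ to this simplex (or some equivalent structural constraint), your limiting argument with $\zeta_0$ and the martingale identity $E^P S_N=S_0$ only reproduces $\alpha\ge\sup_{P}E^P f_N$, which you correctly note is insufficient; so the proposal as written establishes $f_0\le$ the claimed values but not equality.
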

\begin{proof} In  Theorem \ref{mars30} conditions, the set of equations $E^P\zeta=1, \ \zeta \geq 0, $ has the solutions $\zeta_i=\frac{S_i}{S_0}, \ i=\overline{0,N}.$ It is evident that  $\alpha_0=S_0$ and $\zeta_N=\frac{S_N}{S_0},$ since
\begin{eqnarray}
 \frac{(S_N - K)^+}{B_N} - \alpha_0 \frac{S_N}{S_0} \leq 0, \quad \omega \in \Omega.
\end{eqnarray}
Let us prove the needed formula. Consider the inequality
\begin{eqnarray}\label{mars33}
  (S_N - K) - \alpha \sum\limits_{i=0}^N\gamma_i\frac{S_i}{S_0} \leq 0, \quad  \gamma \in V_0,
\end{eqnarray}
where  $V_0=\{\gamma=\{\gamma_i\}_{i=0}^N, \ \gamma_i \geq 0, \ \sum\limits_{i=0}^N\gamma_i=1\}.$
Or,
\begin{eqnarray}\label{mars34}
 S_N\left ( 1 - \frac{ \alpha\gamma_N}{S_0}\right) - K - \alpha \sum\limits_{i=0}^{N-1}\gamma_i\frac{S_i}{S_0} \leq 0.
\end{eqnarray}
Suppose that  $\alpha$ satisfies the inequality
\begin{eqnarray}\label{mars35}
 1 -  \frac{ \alpha}{S_0}>0.
\end{eqnarray}
If $\alpha $ satisfies additionally the equality
\begin{eqnarray}\label{mars36}
 D_N^2\left( 1 - \frac{ \alpha\gamma_N}{S_0}\right) - K - \alpha \sum\limits_{i=0}^{N-1}\gamma_i\frac{D_i^1}{S_0}= 0,
\end{eqnarray}
  then for all $\omega \in \Omega$ 
(\ref{mars34}) is valid.  From (\ref{mars36}) we obtain for $\alpha$
\begin{eqnarray}\label{mars37}
\alpha=\frac{S_0(D_N^2 - K)}{(D_N^2\gamma_N+ \sum\limits_{i=0}^{N-1}\gamma_i D_i^1)}.
\end{eqnarray}
If $D_N^2 -K >0,$ then
\begin{eqnarray}\label{mars38}
\inf\limits_ {\gamma \in V_0}\frac{S_0(D_N^2 - K)}{(D_N^2\gamma_N+ \sum\limits_{i=0}^{N-1}\gamma_i D_i^1)}=\frac{S_0(D_N^2 - K)}{D_N^2},
\end{eqnarray}
since $ D_N^2 \geq D_i^1.$
From here we obtain
\begin{eqnarray}\label{mars39}
f_0=S_0\left(1 - \frac{ K}{D_N^2}\right).
\end{eqnarray}
It is evident that $\alpha = f_0$ satisfies the inequality (\ref{mars35}).

If  $D_N^2 - K \leq 0,$ then $S_N - K \leq 0$  and from (\ref{mars33}) we can put $\alpha=0.$ Then, the formula (\ref{mars34}) is valid for all $\omega \in \Omega.$

Let us prove the formula (\ref{mmars31}) for Standard European Put Option.
If $S_N \leq K$ it is evident that  $\alpha_0=K, $ and $\zeta_0=1,$ since
\begin{eqnarray}
 (K -S_N) - \alpha_0  \leq 0, \quad \omega \in \Omega.
\end{eqnarray} 
Let us prove the needed formula. 
 Consider the  inequality
\begin{eqnarray}\label{mmars33}
  (K- S_N)^+ - \alpha \sum\limits_{i=0}^N\gamma_i\frac{S_i}{S_0}  \leq 0,  \quad  \gamma \in V_0.
\end{eqnarray}
Or, for $S_N \leq K$
\begin{eqnarray}\label{mmars34}
 - S_N\left( 1 + \frac{ \alpha \gamma_N}{S_0}\right) + K -\alpha \sum\limits_{i=0}^{N-1}\gamma_i\frac{S_i}{S_0}\leq 0.
\end{eqnarray}
If  $\alpha$ is a solution of the equality
\begin{eqnarray}\label{mmars36}
 - D_N^1\left( 1 + \frac{ \alpha \gamma_N}{S_0}\right) + K - \alpha \sum\limits_{i=0}^{N-1}\gamma_i\frac{D_i^1}{S_0}  = 0,
\end{eqnarray}
 then for all $\omega \in \Omega$ 
(\ref{mmars34}) is valid.  From (\ref{mmars36}) we obtain for $\alpha$
\begin{eqnarray}\label{mmars37}
\alpha=\frac{S_0( K - D_N^1)}{\sum\limits_{i=0}^{N}\gamma_i D_i^1 }.
\end{eqnarray}
Therefore,
\begin{eqnarray}\label{mmars38}
\inf\limits_ {\gamma \in V_0}\frac{S_0(K- D_N^1)}{\sum\limits_{i=0}^{N}\gamma_i D_i^1}= K - D_N^1,
\end{eqnarray}
since $ D_i^1\leq S_0, \ i=\overline{1,N}, \ D_0^1=S_0.$
From here we obtain
\begin{eqnarray}\label{mmars39}
f_0= K - D_N^1.
\end{eqnarray}
If  $D_N^1 - K > 0,$ then $S_N - K > 0$  and from (\ref{mmars33}) we can put $\alpha=0.$ Then, (\ref{mmars34}) is valid for all $\omega \in \Omega.$
The Theorem  \ref{mars30} is proved.
\end{proof}

\section{Some auxiliary results.}
On a measurable space $\{\Omega, {\cal F}\}$ with filtration ${\cal F}_n$ on it,  let us consider  a convex set of equivalent measures $M.$
Suppose that  $\xi_1, \ldots, \xi_d$ is a set of random values belonging to the set $A_0.$ 
Introduce $d$ martingales relative to a set of measures $M$ $\{S_n^i, {\cal F}_n\}_{n=0}^{\infty}, \  i=\overline{1,d},$    where  $S_n^i=E^P\{\xi_i|{\cal F}_n\}, \  i=\overline{1,d}, \ P \in M. $    Denote by    $M^{e}(S)$   a set of all martingale measures equivalent to  a measure  $P \in M,$   that is,  $Q \in M^{e}(S)$ if
\begin{eqnarray}
 E^Q\{S_n|{\cal F}_{n-1}\}=S_{n-1},\ E^Q|S_n|< \infty,  \ Q \in M^{e}(S), \ n=\overline{1, \infty}. 
\end{eqnarray} 
It is evident that  $ M \subseteq M^{e}(S) $  and  $M^{e}(S)$ is a convex set.
Denote   $ P_0$  a certain fixed measure from $M^{e}(S)$  and let  $L^0(R^d)$   be a set of finite valued random values on  a probability space $\{ \Omega, {\cal F}, P_0\},$ taking values in $R^d.$

 Let    $H^0$ be  a set of finite valued predictable processes $H=\{H_n\}_{n=1}^N, $ where $H_n=\{H_n^i\}_{i=1}^d$ takes values in $R^d$  and  $ H_n$ is ${\cal F}_{n-1}$-measurable random vector.
Introduce into consideration a set of random values  
\begin{eqnarray}\label{s5}
K_N^1=\{ \xi \in L^0(R^1), \  \xi= \sum\limits_{k=1}^N   \langle H_k,\Delta S_k \rangle, , \ H  \in H^0\}, \  N< \infty,
\end{eqnarray}
\begin{eqnarray}
\Delta S_k = S_k - S_{k-1}, \quad \langle H_k, \Delta S_k \rangle = \sum\limits_{s=1}^dH_k^s(S_k^s-S_{k-1}^s).
\end{eqnarray} 

\begin{lemma}\label{q7}
The set of random values  $K_N^1$  is  a closed subset in the set of finite valued random values  $L^0(R^1)$  relative to the convergence by measure $P \in M.$
\end{lemma}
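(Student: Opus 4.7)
The plan is to prove closedness by induction on $N$, exploiting the fact that every $S^i$ is a martingale with respect to each $P \in M$ (so $M \subseteq M^e(S)$, giving the no-arbitrage property needed for the standard closedness argument). Throughout, I fix the reference measure $P_0 \in M^e(S)$ and work with convergence in $P_0$-probability; equivalence of the measures in $M$ makes the notion of convergence the same under any $P \in M$.

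For the base case $N=1$, I use $\mathcal{F}_0 = \{\emptyset,\Omega\}$, so $H_1 \in \mathbb{R}^d$ is a constant vector and the map $T \colon \mathbb{R}^d \to L^0(\mathbb{R}^1)$ defined by $T(H) = \langle H, \Delta S_1\rangle$ is linear. Passing to the quotient by $\ker T$, I may assume $T$ is injective. Then if $T(H_1^n)\to \xi$ in probability and $\|H_1^n\|\to \infty$ along a subsequence, normalize $\tilde H_1^n = H_1^n/\|H_1^n\|$; a subsequence converges to $\tilde H_1$ on the unit sphere with $T(\tilde H_1)=0$, contradicting injectivity. Hence $H_1^n$ is bounded, a limit point $H_1$ exists, and $\xi = T(H_1) \in K_1^1$.

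For the inductive step, suppose the result for $N-1$ applied to the shifted filtration $(\mathcal{F}_k)_{k\ge 1}$ (which has $\mathcal{F}_1$ as its initial $\sigma$-algebra, so the analogous statement is needed in a more general form with arbitrary initial $\sigma$-algebra; I would either prove the more general statement from the start or reduce to it by conditioning). Given $\xi^n = \langle H_1^n, \Delta S_1\rangle + \sum_{k=2}^N \langle H_k^n, \Delta S_k\rangle \to \xi$ in probability, I carry out the Dalang--Morton--Willinger/Schachermayer dichotomy: if $\liminf_n \|H_1^n\| < \infty$, extract a subsequence with $H_1^n \to H_1 \in \mathbb{R}^d$, and then $\sum_{k=2}^N \langle H_k^n, \Delta S_k\rangle \to \xi - \langle H_1, \Delta S_1\rangle$ in probability. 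The inductive hypothesis, applied to the filtration starting at $\mathcal{F}_1$, yields an $\mathcal{F}_1$-predictable representation of this limit, giving $\xi \in K_N^1$. If instead $\|H_1^n\| \to \infty$, divide through by $c_n = \|H_1^n\|$; then $\xi^n/c_n \to 0$ while $\tilde H_1^n = H_1^n/c_n$ lies on the unit sphere. Extracting a convergent subsequence with $\tilde H_1^n \to \tilde H_1$, $\|\tilde H_1\|=1$, the inductive hypothesis applied to $\xi^n/c_n - \langle \tilde H_1^n, \Delta S_1\rangle$ produces $\mathcal{F}_1$-predictable $(\tilde H_k)_{k=2}^N$ with $\langle \tilde H_1, \Delta S_1\rangle + \sum_{k=2}^N \langle \tilde H_k, \Delta S_k\rangle = 0$.

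The main obstacle is extracting useful consequences from this nontrivial zero combination in the unbounded case. Here the martingale property of $S$ under $P_0$ is essential: since a martingale transform with zero value from time $0$ to $N$ whose one-step increments are conditionally centered must vanish increment-by-increment (a standard consequence of the fact that $E^{P_0}[\langle \tilde H_1, \Delta S_1\rangle | \mathcal{F}_0] = 0$ combined with the zero relation, iterated forward), I obtain $\langle \tilde H_1, \Delta S_1\rangle = 0$ $P_0$-a.s.\ on $\{\tilde H_1 \ne 0\}$; but $\|\tilde H_1\|=1$, so this identifies a fixed nonzero direction $\tilde H_1 \in \mathbb{R}^d$ along which $\Delta S_1$ has no variation. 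Using this, one may project each $H_1^n$ onto the orthogonal complement of $\tilde H_1$ (absorbing the removed component into later times through the zero relation) without changing $\xi^n$, strictly reducing the effective dimension of the first coordinate. Since $d < \infty$, after at most $d$ iterations the bounded case is reached, completing the induction and the proof of Lemma~\ref{q7}.
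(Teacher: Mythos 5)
The paper does not actually prove Lemma \ref{q7}: it only refers to \cite{DMW90}, so your proposal is being measured against the standard Dalang--Morton--Willinger/Stricker closedness argument rather than an in-paper one. Your outline is that standard argument --- induction on $N$, the dichotomy between bounded and unbounded first-period integrands, normalization, and elimination of a degenerate direction --- and your extra use of the martingale property is legitimate here: each $S_n^i=E^{P}\{\xi_i|{\cal F}_n\}$ is a $P_0$-martingale, and from the zero relation $\langle \tilde H_1,\Delta S_1\rangle+\sum_{k\geq 2}\langle \tilde H_k,\Delta S_k\rangle=0$ one does get $\langle \tilde H_1,\Delta S_1\rangle=0$ a.s., since $E^{P_0}\{\langle \tilde H_k,\Delta S_k\rangle|{\cal F}_{k-1}\}=0$ holds in the generalized sense ($\tilde H_k$ is ${\cal F}_{k-1}$-measurable, $\Delta S_k$ integrable); the iteration should run backward from $k=N$, not forward, but that is a detail.

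The genuine gap is that your argument is only carried out when the integrand being normalized is a deterministic vector, i.e.\ when the initial $\sigma$-algebra is trivial. As you note yourself, the induction forces the statement with an arbitrary initial $\sigma$-algebra (the step starting from ${\cal F}_1$), and there $H_1^n$ is a random vector: the dichotomy $\liminf_n\|H_1^n\|<\infty$ versus $=\infty$ is an event, not an alternative, so $\Omega$ must be partitioned accordingly; Bolzano--Weierstrass extraction must be replaced by a measurable selection of random indices (the lemma that from $\liminf_n\|H^n\|<\infty$ one can extract, ${\cal F}$-measurably, a pointwise convergent subsequence); and the limit $\tilde H_1$ is then a random unit vector, not ``a fixed nonzero direction,'' so the projection onto its orthogonal complement and the bound of at most $d$ iterations must be performed $\omega$-wise, e.g.\ by partitioning $\Omega$ according to which coordinate of $\tilde H_1$ stays bounded away from zero and eliminating that coordinate on that set. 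This measurable-selection machinery is precisely the technical core of the closedness theorem in \cite{DMW90}; deferring it with ``prove the more general statement from the start or reduce to it by conditioning'' leaves the heart of the proof missing, so as written your argument fully establishes the lemma only in the case $N=1$ with ${\cal F}_0$ trivial.
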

 The proof of the Lemma \ref{q7} see, for example, \cite{DMW90}.

Introduce into consideration a subset
\begin{eqnarray}
 V^0 =\{H \in H^0,\ ||H_n||< \infty,\ n=\overline{1, N}\} 
\end{eqnarray} 
of the set    $H^0,$ where $||H_n||=\sup\limits_{\omega \in \Omega}\sum\limits_{i=1}^d|H_n^i|.$ 
Let $K_N$ be a subset of the set $K_N^1$
\begin{eqnarray}
K_N=\{\xi \in L^0(R^1), \  \xi=\sum\limits_{k=1}^N   \langle H_k,\Delta S_k \rangle, \ H \in V^0 \}.
\end{eqnarray}
Denote also a set
\begin{eqnarray} 
C =\{k-f,\ k \in  K_N, \ f \in L_+^{\infty}(\Omega, {\cal F}, P_0)\},
\end{eqnarray}
where  $L_+^{\infty}(\Omega, {\cal F}, P_0\}$  is a set of bounded nonnegative random values. Let   $\bar C$ be   the closure of  $C$ in $L^{1}(\Omega, {\cal F}, P_0)$ metrics.
\begin{lemma}\label{l8} If  $\zeta \in \bar C$  and such that  $E^{P_0}\zeta=0,$ then for   $\zeta$ the representation 
\begin{eqnarray} 
\zeta=\sum\limits_{k=1}^N \langle H_k,\Delta S_k \rangle 
\end{eqnarray}
is valid for a certain finite valued predictable process
  $H=\{H_n\}_{n=1}^N.$
\end{lemma}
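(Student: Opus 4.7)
The plan is to exploit the fact that $P_0$ is a martingale measure for $S$, so every element of $K_N$ has zero $P_0$-expectation. Combined with $E^{P_0}\zeta=0$, this will force the nonnegative ``slack'' part of any approximating sequence from $C$ to vanish in $L^1(P_0)$, reducing the problem to a limit-in-probability of elements of $K_N\subseteq K_N^1$, at which point Lemma~\ref{q7} finishes the job.

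Concretely, I would first pick a sequence $\zeta_n\in C$ with $\zeta_n\to\zeta$ in $L^1(\Omega,{\cal F},P_0)$ and decompose $\zeta_n=k_n-f_n$, where $k_n=\sum_{j=1}^{N}\langle H_j^{(n)},\Delta S_j\rangle$ with $H^{(n)}\in V^0$ bounded, and $f_n\in L_+^\infty(\Omega,{\cal F},P_0)$. Because each $\xi_i\in A_0$ is nonnegative with $E^P\xi_i=1$ for $P\in M$, the process $S_n^i=E^P\{\xi_i|{\cal F}_n\}$ is nonnegative and, by the martingale property under $P_0\in M^e(S)$, satisfies $E^{P_0}S_n^i=S_0^i=1<\infty$. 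The boundedness of $H^{(n)}$ then gives $E^{P_0}|k_n|<\infty$ and, by the martingale property of $S$ under $P_0$, $E^{P_0}k_n=0$.

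Taking $P_0$-expectations in $\zeta_n=k_n-f_n$ yields $E^{P_0}f_n=-E^{P_0}\zeta_n\to -E^{P_0}\zeta=0$. Since $f_n\geq 0$, this is $L^1(P_0)$-convergence of $f_n$ to $0$, so $k_n=\zeta_n+f_n\to\zeta$ in $L^1(P_0)$, and in particular in $P_0$-probability. Since $P_0$ is equivalent to every $P\in M$, the convergence also holds in $P$-probability for any $P\in M$. Because $K_N\subseteq K_N^1$ and Lemma~\ref{q7} says $K_N^1$ is closed under convergence in measure, we conclude $\zeta\in K_N^1$, which by definition of $K_N^1$ furnishes a finite-valued predictable process $H=\{H_n\}_{n=1}^N$ with $\zeta=\sum_{k=1}^{N}\langle H_k,\Delta S_k\rangle$.

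The only delicate step is the identity $E^{P_0}k_n=0$: it relies on $S_j^i\in L^1(P_0)$ and on the boundedness of $H^{(n)}$ to legitimate the telescoping of the martingale transform, both of which are automatic from the setup. Everything else is bookkeeping with equivalent measures and an invocation of the closedness result already proved.
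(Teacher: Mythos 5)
Your proof is correct and follows essentially the same route as the paper: approximate $\zeta$ by $k_n-f_n\in C$, use $E^{P_0}k_n=0$ and $E^{P_0}\zeta=0$ to force $E^{P_0}f_n\to 0$, conclude $k_n\to\zeta$ in $P_0$-measure, and invoke the closedness of $K_N^1$ from Lemma~\ref{q7}. The only difference is cosmetic: you make explicit the martingale-transform argument giving $E^{P_0}k_n=0$, which the paper's proof uses implicitly.
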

\begin{proof} If $\zeta \in K_N,$  then Lemma \ref{l8} is proved. Suppose that   $\zeta \in \bar C,$  then there exists a sequence  $k_n - f_n, \ k_n \in K_N,\ f_n \in  L_+^{\infty}(\Omega, {\cal F}, P_0)$ such that 
$||k_n - f_n - \zeta ||_{P_0} \to 0,\ n \to \infty,$ where $||g||_{P_0}=E^{P_0}|g|.$ Since 
$|E^{P_0}(k_n - f_n - \zeta)| \leq ||k_n - f_n - \zeta ||_{P_0},$ we have $E^{P_0}f_n  \leq ||k_n - f_n - \zeta ||_{P_0}.$
From here we obtain    $||k_n - \zeta ||_{P_0} \leq 2 ||k_n - f_n - \zeta ||_{P_0}.$ Therefore, $k_n \to \zeta$
by measure $P_0.$  On the basis of  Lemma  \ref{q7}, a set 
\begin{eqnarray}
 K_N^1=\{\xi \in L^0(R^1), \  \xi=\sum\limits_{k=1}^N \langle H_k,\Delta S_k \rangle, \ H \in H^0 \}, 
\end{eqnarray}
\begin{eqnarray}
  \langle H_k,\Delta S_k \rangle=\sum\limits_{i=1}^dH_k^i(S_k^i - S_{k-1}^i)
\end{eqnarray}
is  a closed subset of $L^0(R^1)$  relative to the convergence by measure  $P_0.$ 
From this fact, we obtain the proof of Lemma  \ref{l8}, since there exists the finite valued predictable process  $H \in H^0$ such that for   $\zeta $ the representation 
\begin{eqnarray}
\zeta=\sum\limits_{k=1}^N\langle H_k,\Delta S_k \rangle 
\end{eqnarray}
is valid.
\end{proof}
\begin{thm}\label{t7} Let  $E^Q|\zeta| < \infty, \ Q \in  M^{e}(S).$ If for every    $\ Q \in  M^{e}(S), \ E^Q\zeta = 0,$ then there exists  finite valued predictable process  $H$ such that for  $ \zeta$ the representation  
\begin{eqnarray}\label{m1}
 \zeta=\sum\limits_{k=1}^N \langle H_k,\Delta S_k \rangle
\end{eqnarray}
is valid.
\end{thm}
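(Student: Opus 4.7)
The plan is to reduce the claim to Lemma \ref{l8} by showing that $\zeta$ lies in the $L^1(\Omega,\mathcal{F},P_0)$-closure $\bar C$ of $C = K_N - L_+^{\infty}(\Omega,\mathcal{F},P_0)$. Once this is done, since $P_0 \in M \subseteq M^{e}(S)$ forces $E^{P_0}\zeta = 0$, Lemma \ref{l8} immediately produces the desired predictable process $H$. Preliminary integrability observations are that $\zeta \in L^1(P_0)$ by hypothesis, and that every element of $K_N$ lies in $L^1(P_0)$, because $\{S_n^i, \mathcal{F}_n\}$ is a $P_0$-martingale with $E^{P_0}|S_n^i| \leq E^{P_0}\xi_i = 1$ and the processes in $V^0$ are uniformly bounded.

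The plan for showing $\zeta \in \bar C$ is a Hahn--Banach separation argument, using convexity and closedness of $\bar C$ in $L^1(P_0)$ together with the duality between $L^1(P_0)$ and $L^\infty(P_0)$. Assuming for contradiction that $\zeta \notin \bar C$, I would extract a separating functional $g \in L^\infty(P_0)$ satisfying $E^{P_0}[gc] \leq 0 < E^{P_0}[g\zeta]$ for every $c \in \bar C$. Testing against $c = -f$ with $f \in L^{\infty}_+(P_0)$ will yield $g \geq 0$ $P_0$-a.s., and testing against $c = \pm k$ with $k$ in the linear subspace $K_N \subseteq \bar C$ will force $E^{P_0}[gk] = 0$ for every $k \in K_N$. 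In particular $g$ is not $P_0$-a.s.\ zero, so $E^{P_0}g>0$.

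Next, the plan is to use $g$ to manufacture a measure in $M^{e}(S)$ under which $\zeta$ has nonzero expectation, yielding the contradiction. The candidate is
\begin{equation*}
\frac{dQ}{dP_0} = \tfrac{1}{2} + \frac{g}{2\,E^{P_0}g},
\end{equation*}
whose density is bounded above and bounded away from $0$, so $Q \sim P_0$ and hence $Q$ is equivalent to every measure in $M$. The identity $E^{Q}k = \tfrac{1}{2}E^{P_0}k + \tfrac{1}{2}(E^{P_0}g)^{-1}E^{P_0}[gk] = 0$ for all $k \in K_N$, combined with the characterization of the martingale property through integrals $E^Q[\langle H_n, \Delta S_n\rangle] = 0$ against arbitrary bounded $\mathcal{F}_{n-1}$-measurable $H_n$, will place $Q$ in $M^{e}(S)$; $\zeta$ is $Q$-integrable since $dQ/dP_0$ is bounded. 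A parallel computation with $\zeta$ in place of $k$, using $E^{P_0}\zeta=0$, gives $E^{Q}\zeta = \tfrac{1}{2}(E^{P_0}g)^{-1}E^{P_0}[g\zeta] > 0$, contradicting $E^{Q}\zeta = 0$ for every $Q \in M^{e}(S)$.

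The main obstacle is turning the abstract separating functional $g$ into a genuinely \emph{equivalent} martingale measure, since a raw Hahn--Banach density need not be strictly positive $P_0$-a.s.\ nor automatically reproduce the martingale property for $S$. Both difficulties are resolved by the convex combination with $P_0$ itself displayed above, and this step leans essentially on the standing assumption $M \subseteq M^e(S)$ that one equivalent martingale measure is already at hand; without it the argument would require the more delicate Kreps--Yan exhaustion procedure.
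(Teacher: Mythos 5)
Your proof is correct and follows essentially the same route as the paper: a Hahn--Banach separation in $L^{1}(\Omega,{\cal F},P_0)$ to show $\zeta\in\bar C$, conversion of the nonnegative separating density into an absolutely continuous martingale measure, a convex mixture producing an equivalent martingale measure with $E^{Q}\zeta>0$ to reach the contradiction, and then Lemma \ref{l8}. The only cosmetic difference is that you mix the new density with $P_0$ itself (weight $1/2$), whereas the paper mixes $Q^{*}$ with an arbitrary $Q\in M^{e}(S)$ and an arbitrary $\gamma\in(0,1)$.
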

\begin{proof}  If  $\zeta  \in \bar C,$ then  (\ref{m1}) follows from Lemma  \ref{l8}. So, let  $\zeta$  does not belong to $\bar C.$  As in Lemma  \ref{l8}, $\bar C$  is a closure of  $C$ in  $L^{1}(\Omega, {\cal F}, P_0)$  metrics for the  fixed measure  $P_0.$ The set  $\bar C$ is a closed convex set in $L^{1}(\Omega, {\cal F}, P_0).$   Consider the  other convex closed set that consists from one element   $\zeta. $ Due to  Han -- Banach  Theorem, there exists a linear continuous functional $l_1,$ which belongs to  $L^{\infty}(\Omega, {\cal F}, P_0),$  and real numbers  $\alpha > \beta$ such that 
\begin{eqnarray}\label{m2}
l_1(\xi)=\int\limits_{\Omega}\xi(\omega) q(\omega)dP_0, \quad q(\omega) \in  L^{\infty}(\Omega, {\cal F}, P_0),
\end{eqnarray}
and the inequalities  $l_1(\zeta) > \alpha,$ $l_1(\xi) \leq \beta, \ \xi \in  \bar C,$  are valid.  Since   $\bar C$  is a convex cone we can put $ \beta=0.$  From the condition  $l_1(\xi) \leq 0, \ \xi \in  \bar C$ we have $l_1(\xi)=0, \  \xi \in K_N^1 \cap L^{1}(\Omega, {\cal F}, P_0).$ From  (\ref{m2}) and the inclusions $ \bar C \supset C \supset - L^{\infty}(\Omega, {\cal F}, P_0)$ we have  $ q(\omega) \geq 0.$  Introduce  a measure
\begin{eqnarray}
Q^*(A)= \int\limits_{A} q(\omega)dP_0\left[\int\limits_{\Omega} q(\omega)dP_0\right]^{-1}.
\end{eqnarray}
Then, we have 
\begin{eqnarray}\label{m3}
  \int\limits_{\Omega} \xi(\omega)dQ^*=0,\quad \xi \in K_N^1 \cap L^{1}(\Omega, {\cal F}, P_0).
\end{eqnarray}
Let us choose  $ \xi=\chi_{A}(\omega)(S_i^j -S_{i-1}^j ), \ A \in {\cal F}_{i-1},$ where  $\chi_{A}(\omega)$ is an indicator of a set   $A.$  We obtain 
\begin{eqnarray}
\int\limits_{A}(S_i^j -S_{i-1}^j )dQ^*=0, \quad  A \in {\cal F}_{i-1}.
\end{eqnarray}
So, $Q^*$ is a martingale measure that belongs to the set  $M^a(S),$ which is a set of absolutely continuous martingale measures.  Let us choose   $Q \in  M^e(S)$ and consider a measure  $Q_1=(1- \gamma)Q+\gamma Q^* , \ 0 < \gamma < 1.$ A measure   $ Q_1 \in  M^e(S) $ and, moreover, $E^{Q_1}\zeta=\gamma E^{Q^*}\zeta>0.$  We come to the contradiction  with the conditions of Theorem  \ref{t7}, since for  $Q \in  M^e(S), \ E^Q\zeta=0.$ So, $\zeta \in \bar C,$  and in accordance with  Lemma  \ref{l8}, for  $\zeta$ the  declared  representation in Theorem \ref{t7}  is valid.
\end{proof}

\begin{thm}\label{t8}
 For every martingale  $\{M_n,{\cal F}_n\}_{n=0}^{\infty}$ relative to the set of measures  $ M^e(S),$ there exists a predictable random process
$H$ such that for   $M_n,\ n=\overline{0,\infty},$ the representation 
\begin{eqnarray}\label{m4}
M_n=M_0+\sum\limits_{i=1}^n\langle H_i,\Delta S_i \rangle, \quad  n=\overline{1,\infty},
\end{eqnarray}
is valid.
\end{thm}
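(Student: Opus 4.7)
The plan is to reduce Theorem \ref{t8} to Theorem \ref{t7} by applying the latter to each martingale increment $\zeta_n := M_n - M_{n-1}$ separately, and then to use the martingale property of $\{S_k\}$ under every $Q \in M^e(S)$ to peel off everything but the last gain term in the resulting representation.

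More concretely, fix $n \geq 1$. Because $\{M_k, \mathcal{F}_k\}$ is a martingale relative to every $Q \in M^e(S)$, we have $E^Q|\zeta_n| < \infty$ and $E^Q \zeta_n = 0$ for all $Q \in M^e(S)$. Apply Theorem \ref{t7} to the random variable $\zeta_n$ (with terminal time $N=n$): there exists a finite-valued predictable process $\hat H^{(n)}=\{\hat H_k^{(n)}\}_{k=1}^n$ such that
\begin{eqnarray*}
M_n - M_{n-1} = \sum_{k=1}^{n} \langle \hat H_k^{(n)}, \Delta S_k\rangle.
\end{eqnarray*}
Now I would take the conditional expectation of both sides given $\mathcal{F}_{n-1}$ under an arbitrary $Q \in M^e(S)$. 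On the left, the martingale property gives $0$. On the right, for each $k \leq n-1$ the summand $\langle \hat H_k^{(n)}, \Delta S_k \rangle$ is $\mathcal{F}_{n-1}$-measurable and comes out of the conditional expectation, while for $k=n$ the vector $\hat H_n^{(n)}$ is $\mathcal{F}_{n-1}$-measurable and $E^Q\{\Delta S_n | \mathcal{F}_{n-1}\}=0$. Hence
\begin{eqnarray*}
\sum_{k=1}^{n-1}\langle \hat H_k^{(n)}, \Delta S_k\rangle = 0 \quad Q\text{-a.s.},
\end{eqnarray*}
and by equivalence of the measures in $M^e(S)$ the equality holds $P_0$-a.s. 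Substituting back, $\zeta_n = \langle \hat H_n^{(n)}, \Delta S_n\rangle$; setting $H_n := \hat H_n^{(n)}$ produces an $\mathcal{F}_{n-1}$-measurable (hence predictable) $R^d$-valued random vector with $M_n - M_{n-1} = \langle H_n, \Delta S_n\rangle$.

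Finally, stringing these one-step representations together for $n=1,2,\ldots$ yields the desired telescoping identity
\begin{eqnarray*}
M_n = M_0 + \sum_{i=1}^{n}\langle H_i, \Delta S_i\rangle, \quad n=\overline{1,\infty},
\end{eqnarray*}
with the single predictable process $H=\{H_i\}_{i=1}^\infty$, $H_i=\hat H_i^{(i)}$. The main obstacle, I expect, is the conditioning step: a priori the individual summands $\langle \hat H_k^{(n)}, \Delta S_k\rangle$ need not be $Q$-integrable even though their sum is, so one has to justify pulling the $\mathcal{F}_{n-1}$-measurable partial sum out of $E^Q\{\cdot|\mathcal{F}_{n-1}\}$ and moving the conditional expectation past the last, $\mathcal{F}_{n-1}$-measurable coefficient $\hat H_n^{(n)}$. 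This is handled by first localizing on the $\mathcal{F}_{n-1}$-measurable sets $\{\|\hat H_n^{(n)}\| \leq r\}\cap\{|\sum_{k<n}\langle \hat H_k^{(n)},\Delta S_k\rangle|\leq r\}$, on which everything is bounded and the conditional-expectation computation is entirely routine, and then letting $r\to\infty$; the resulting $P_0$-a.s. identity is exactly what is needed to conclude $\sum_{k=1}^{n-1}\langle \hat H_k^{(n)},\Delta S_k\rangle=0$.
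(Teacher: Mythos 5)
Your proof is correct, but it takes a genuinely different route from the paper's. The paper applies Theorem \ref{t7} once, to $\zeta = M_N - M_0$ over the whole horizon, and then does \emph{not} use the resulting representation directly: it goes back to the $L^{1}(P_0)$-approximating sequence $k_n=\sum_{i=1}^N\langle H_i^n,\Delta S_i\rangle$ with bounded strategies, takes conditional expectations $E^{P_0}\{\,\cdot\,|{\cal F}_m\}$ (which is painless precisely because the $H^n$ are bounded), identifies the limits of the partial sums with $M_m-M_0$, and extracts a subsequence along which $H^{n_k}$ converges everywhere to a predictable $H$, thereby getting one strategy representing all $M_m$, $m\le N$, simultaneously. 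You instead apply Theorem \ref{t7} to each increment $\zeta_n=M_n-M_{n-1}$ with horizon $N=n$, and then kill the terms $k\le n-1$ by conditioning on ${\cal F}_{n-1}$ under an arbitrary $Q\in M^e(S)$; since the individual summands produced by Theorem \ref{t7} need not be $Q$-integrable, your localization on the ${\cal F}_{n-1}$-measurable sets $\{\|\hat H_n^{(n)}\|\le r\}\cap\{|\sum_{k<n}\langle\hat H_k^{(n)},\Delta S_k\rangle|\le r\}$ is exactly the right fix: on such a set all three terms are $Q$-integrable (the last because $\Delta S_n$ is), the bounded ${\cal F}_{n-1}$-measurable factors pull out, and letting $r\to\infty$ gives $\sum_{k<n}\langle\hat H_k^{(n)},\Delta S_k\rangle=0$ a.s., whence $\zeta_n=\langle H_n,\Delta S_n\rangle$ with $H_n=\hat H_n^{(n)}$ predictable, and the telescoping sum finishes the proof. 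What each approach buys: the paper's argument produces the whole representation in one pass but leans on the somewhat delicate claim that a subsequence of the approximating strategies converges everywhere to a predictable limit; your per-increment argument invokes Theorem \ref{t7} countably many times (harmless) but replaces that limit-extraction step by a routine conditional-expectation computation valid under every $Q\in M^e(S)$, and the equivalence of the measures makes the resulting a.s. identities measure-independent. Both are legitimate proofs of Theorem \ref{t8}.
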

\begin{proof}  For fixed natural  $N \geq 1,$ let us consider the random value $M_N - M_0=\zeta.$  Since 
\begin{eqnarray}
E^Q|\zeta|<\infty, \quad E^Q\zeta=0, \quad Q \in  M^e(S),
\end{eqnarray} 
 then 
$\zeta$ satisfies the conditions of Theorem \ref{t7}  and, therefore,  belongs to  $\bar C,$ so, there exists a sequence   
$k_n=\sum\limits_{i=1}^N \langle H_i^n, \Delta S_i \rangle \in K_N $ such that 
\begin{eqnarray}
\int\limits_{\Omega}|k_n - \zeta|dP_0 \to 0, \quad n \to \infty.
\end{eqnarray}
From here, we obtain 
\begin{eqnarray}
\int\limits_{\Omega}|E^{P_0}\{(k_n - \zeta)|{\cal F}_{m}\}|dP_0 \leq \int\limits_{\Omega}|k_n - \zeta|dP_0 \to 0,\ n \to \infty. 
\end{eqnarray}
But  $E^{P_0}\{k_n |{\cal F}_{m}\}=\sum\limits_{i=1}^m \langle H_i^n, \Delta S_i \rangle.$
Hence,  we obtain that   both  $\sum\limits_{i=1}^m\langle H_i^n, \Delta S_i\rangle $ and  $\sum\limits_{i=1}^N \langle H_i^n, \Delta S_i \rangle $
converges by measure  $P_0$ to  $E^{P_0}\{\zeta|{\cal F}_{m}\}$ and  $\zeta,$ correspondingly.
There exists a subsequence  $n_k$ such that $H^{n_k}$  converges everywhere to predictable  process   $H$.
From here, we have  $\zeta = \sum\limits_{i=1}^N\langle H_i, \Delta S_i\rangle $ and  $E^{P_0}\{\zeta|{\cal F}_{m}\}=\sum\limits_{i=1}^m\langle H_i, \Delta S_i\rangle.$  It proves that for all   $m< N$
\begin{eqnarray}
 M_m=M_0+\sum\limits_{i=1}^m\langle H_i, \Delta S_i \rangle.
\end{eqnarray}
Theorem  \ref{t8}  is proved.
\end{proof}

\section{Conclusions.}

In the paper, we generalize Doob  decomposition
  for super-martingales relative to one measure  onto the case of super-martingales relative to a convex set of equivalent measures.  For super-martingales relative to one measure for  continuous time Doob's result was generalized in papers 
\cite{Meyer1} \cite{Meyer2}.
Section 2 contains the definition of local regular super-martingales. Theorem  \ref{reww1} gives the necessary and sufficient conditions of the local regularity of  super-martingale. In spite of its simplicity,  the Theorem \ref{reww1} appeared very useful for the description of the local regular super-martingales.

For this purpose we investigate the structure of super-martingales of special types
relative to the convex set of equivalent measures, generated by a certain finite set of equivalent measures. 
The main result of the section 3    is   Lemma \ref{q5}, which allowed   proving Lemma \ref{1q5}, giving the sufficient conditions of the existence of a martingale  with respect to a convex set of equivalent measures generated by finite set of equivalent measures.

Theorem \ref{mars12} describes all local regular non negative super-martingales  of the special type (\ref{marsss13})  relative  to the convex set of equivalent measures, generated by the finite set of equivalent measures.

In the Theorem \ref{fmars5}, we give the sufficient conditions  of the existence  of the local regular martingale relative to an arbitrary set of equivalent measures and arbitrary filtration.  After that, we present in Theorem \ref{mmars1} the important  construction of the local regular super-martingales which  we sum up in Corollary \ref{fdr1}. Theorem 
\ref{9mmars9} proves that every majorized super-martingale  belongs to the  described class (\ref{mmars88}) of the local regular super-martingales. 

Theorem  \ref{nick1} gives a variant of the necessary and sufficient conditions of local regularity of non negative  super-martingale relative to a convex set of equivalent measures.    Definition \ref{1myk8} determines a class  of the complete set  of equivalent measures.   Lemma \ref{1myk10} guarantees   a bound  (\ref{1myk11})  for all non negative random values  allowing us  to prove   Theorem \ref{1myk19}, stating that for every  super-martingale the optional decomposition is valid.
 We extend the results obtained from the finite space of elementary events  onto the case as a space of elementary events is a countable one.  At last,  the subsection 5.3  contains the generalization of the result obtained in subsection 5.2 onto the case of arbitrary space of elementary events. In section 6,  we prove Theorems \ref{Tinmyk1} and \ref{Tinmyk2}, stating that for  every majorized super-martingale the optional decomposition is valid.

 Corollary \ref{mars16}   contains the important construction of the local regular super-martingales playing the important role in the definition of the fair price of contingent claim relative to a convex set of equivalent measures.
The Definition \ref{maras1} is a fundamental one for the evaluation of risks in  incomplete markets. 
Theorem \ref{mars17} gives the sufficient conditions of the existence of the fair price  of contingent claim relative to a convex set of equivalent measures. It also gives the sufficient conditions,  when the defined fair price coincides with the classical value. In  Theorem \ref{mars26} the simple conditions of the existence  of the fair price  of contingent claim  are given. In  Theorem \ref{hon1}  we  prove  the existence  of the self-financed trading  strategy confirming the Definition \ref{maras1} of the fair price  as  the parity between the long and short positions in contracts. 
As an application of the results obtained we prove Theorem \ref{mars30}, where the formulas for the Standard European Call and Put Options in an incomplete market we present. 
Section 8 contains auxiliary results needed for previous sections.

\def\cprime{$'$} \def\cprime{$'$}
{\color{Brown}
}
\end{document}